\definecolor{turquoise}{cmyk}{0.65,0,0.1,0.3}
\definecolor{purple}{rgb}{0.65,0,0.65}
\definecolor{dark_green}{rgb}{0, 0.5, 0}
\definecolor{green}{rgb}{0, 1.0, 0}
\definecolor{orange}{rgb}{0.8, 0.6, 0.2}
\definecolor{red}{rgb}{0.8, 0.2, 0.2}
\definecolor{blueish}{rgb}{0.0, 0.7, 1}
\definecolor{light_gray}{rgb}{0.7, 0.7, .7}
\definecolor{pink}{rgb}{1, 0, 1}
\newcommand{\hide}[1]{{}} 
\newcommand{\etal}{{\em et al.}}
\newcommand{\eg}{{\em e.g.}}
\newcommand{\ie}{{\em i.e.}}
\newcommand{\Log}[1]{\log\left(#1\right)}
\newcommand{\Norm}[1]{\left\lVert#1\right\rVert}
\newcommand{\Abs}[1]{\left|#1\right|}
\newcommand{\Indicator}[1]{\mathds{1} \left\{ #1 \right\} }
\newtheorem{proposition}{Proposition}
\newtheorem*{remark}{Remark}
\newcommand{\LPIPS}{\textrm{LPIPS}}
\newcommand{\sLPIPS}{\textrm{sLPIPS}}
\newcommand{\beginsupplement}{%
        \setcounter{table}{0}
        \renewcommand{\thetable}{S\arabic{table}}%
        \setcounter{figure}{0}
        \renewcommand{\thefigure}{S\arabic{figure}}%
     }
\begin{document}

\title{Sandwiched Compression: Repurposing \protect\\ Standard Codecs with Neural Network Wrappers}

\author{Onur G.\ Guleryuz, Philip A.\ Chou, Berivan Isik, Hugues Hoppe, Danhang Tang, \\ Ruofei Du, Jonathan Taylor, Philip Davidson, and Sean Fanello
\thanks{This work was done while the authors were at Google Research. The source code for this work can be found at \cite{sandwich_oss}.}%
}

\markboth{IEEE Transaction on XXXX, 2024}%
{Shell \MakeLowercase{\textit{\etal}}: A Sample Article Using IEEEtran.cls for IEEE Journals}

\maketitle

\begin{abstract}

We propose sandwiching standard image and video codecs between pre- and post-processing neural networks.
The networks are jointly trained through a differentiable codec proxy to minimize a given rate-distortion loss.
This sandwich architecture not only improves the standard codec's performance on its intended content,
but more importantly, adapts
the codec to other types of image/video content and to other distortion measures.
The sandwich learns to transmit ``neural code images''
that optimize and improve overall rate-distortion performance, with the improvements becoming significant especially when the overall problem is well outside of the scope of the codec's design.
We apply the sandwich architecture to standard codecs with mismatched
sources transporting
different numbers of channels, higher resolution, higher dynamic range, computer graphics,
and with perceptual distortion measures.
The results demonstrate substantial improvements (up to 9 dB gains or up to 30\% bitrate reductions)
compared to alternative adaptations. 
We establish optimality properties for sandwiched compression and design differentiable codec proxies approximating current standard codecs.
We further analyze model complexity, visual quality under perceptual metrics,
as well as sandwich configurations that offer interesting potentials in video compression and streaming.
\end{abstract}

\begin{IEEEkeywords}
Image/video/computer-graphics
compression,
differentiable proxy,
rate-distortion optimization,
multi-spectral,
super-resolution,
high dynamic range,
perceptual distortion 
\vspace{-3mm}
\end{IEEEkeywords}


\section{Introduction}

\begin{figure}
\footnotesize
\begin{minipage}[b]{0.485\linewidth}
  \centering
  \includegraphics[width=4cm, trim=0 0 0 10mm, clip]{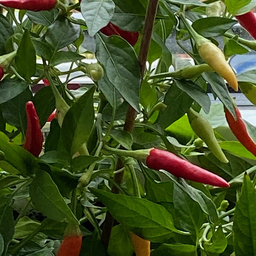}\\
  (a) Original source image
\end{minipage}
\begin{minipage}[b]{0.485\linewidth}
  \centering
  \includegraphics[width=4cm, trim=0 0 0 10mm, clip]{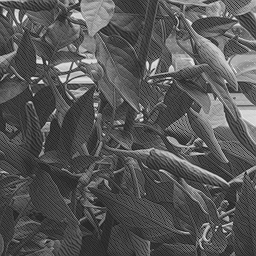}\\
  (b) Bottleneck (neural code) image
\end{minipage}

\begin{minipage}[b]{0.485\linewidth}
  \centering
  \includegraphics[width=4cm, trim=0 0 0 10mm, clip]{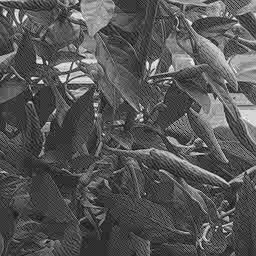}\\
  (c) Reconstructed bottleneck image
\end{minipage}
\begin{minipage}[b]{0.485\linewidth}
  \centering
  \includegraphics[width=4cm, trim=0 0 0 10mm, clip]{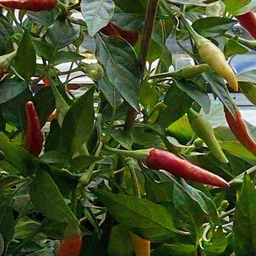}\\
  (d) Reconstructed source image
\end{minipage}
\vspace{-2mm}
\caption{
The sandwich architecture can accomplish %
surprising
results even with a simple codec (here JPEG 4:0:0, a single-channel grayscale codec).
The neural pre-processor is able to encode the full RGB image in (a) into a 
grayscale
image of neural codes in (b).
The neural codes are low-frequency dither-like patterns that modulate the color information yet also survive JPEG compression (c).
At the decoding end, the neural post-processor
demodulates the patterns to faithfully recover the 
color while also achieving deblocking.
The interested reader can generate an extensive set of further examples using our software at \cite{sandwich_oss}.}
\label{fig:modulation_dithering1}
\vspace{-7mm}
\end{figure}

Image and video compression are well-established domains, with a rich history marked by the evolution of 
standard codecs, such as JPEG, 
MPEG
1,2,4, 
H264/AVC, VC1,  VP9, H265/HEVC, AV1, and H266/VVC \cite{jpeg,wiegand2003overview,vp9,hevc,han2021technical,bross2021vvc}. These 
codecs are fundamentally rooted in linear transforms like the discrete cosine tranform (DCT) in the spatial dimensions, and 
motion-compensated prediction in the temporal dimension.  
Their designs and optimizations are 
typically guided by the analytically convenient mean-squared-error (MSE) metric albeit with eventual subjective quality verification.

Recent advancements have spotlighted {\em learned} image and video codecs based on neural networks trained end-to-end that are competitive with or outperform the standard codecs in rate-distortion metrics when distortion is measured through MSE \cite{MiBaTo18,BalleEtAl:20,GuoZFC:21,
lu2019dvc,hu2021fvc,rippel2021elf,MinnenJ:23}.
In other scenarios where complete faith to the source is not demanded,
more significant reductions in bitrate 
for the same visual quality have been obtained
by training networks using auxiliary distortion measures
(such as image likelihood modeled by discriminators in generative adversarial networks \cite{mentzer2020high}).
Neural networks have an obvious functional advantage over standard codecs in that they can be trained on datasets of images whose distributions are mismatched from the usual photographic images, for example medical, multispectral, depth, geometric, or other unusual image classes, as well as other distortion criteria, including human perceptual criteria but also machine performance criteria (\eg, classification, segmentation, labeling, and diagnosis.)

Unfortunately, the performance and functional advantages of neural codecs come at great computational cost
\cite{BalleEtAl:20,lu2019dvc,hu2021fvc,rippel2021elf,MinnenJ:23}.
This cost is typically at
a level that
is impractical
for HD imagery at video rates even on dedicated neural chips especially in mobile devices where power is a prime concern.
Handling UHD at graphics rates is even more impractical. Capable networks require massive computational resources, power, and chip area. {\em In-loop} tandem GPU-CPU solutions where part of the CPU (GPU) compute is offloaded to the GPU (CPU) need massive bandwidth. As of this writing, even for next generation compression standards under development \cite{AV2}, these issues force neural tools to be limited to 1000 Multiply-Accumulate (MAC) networks. Competent neural models need {\em hundreds of thousands to millions} of MACs \cite{nick_icip, neural_recent_cvpr}. A 1000 MAC model is understandably quite limited in comparison \cite{Kiran_Andrew}.
\begin{figure}[t]
    \centering
\begin{minipage}{0.27\linewidth}
  \centering
  \includegraphics[width=\linewidth]{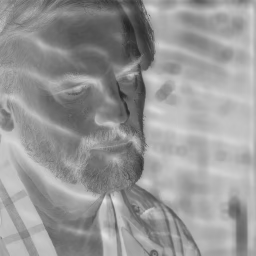}
\end{minipage} 
\begin{minipage}{0.27\linewidth}
  \centering
  \includegraphics[width=\linewidth]{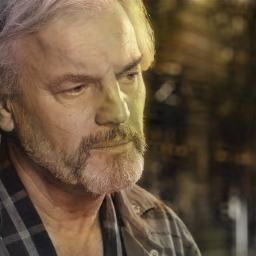}
\end{minipage} 
\begin{minipage}{0.27\linewidth}
  \centering
  \includegraphics[width=\linewidth]{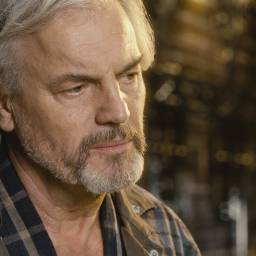} 
\end{minipage} 
\\
\vspace{1mm}
\begin{minipage}{0.27\linewidth}
  \centering
  \includegraphics[width=\linewidth]{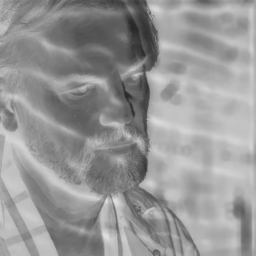}
\end{minipage} 
\begin{minipage}{0.27\linewidth}
  \centering
  \includegraphics[width=\linewidth]{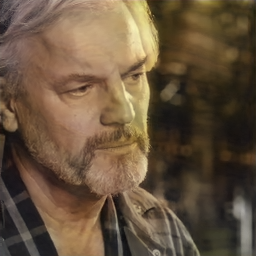}
\end{minipage} 
\begin{minipage}{0.27\linewidth}
  \centering
  \includegraphics[width=\linewidth]{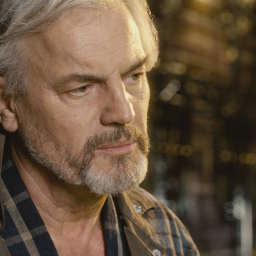}
\end{minipage} 
\\
\vspace{1mm}
\begin{minipage}{0.27\linewidth}
  \centering
  \includegraphics[width=\linewidth]{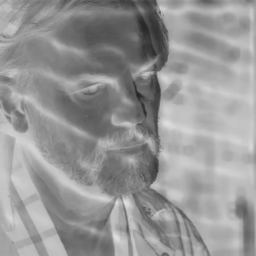}
\end{minipage} 
\begin{minipage}{0.27\linewidth}
  \centering
  \includegraphics[width=\linewidth]{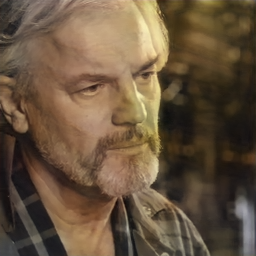}
\end{minipage} 
\begin{minipage}{0.27\linewidth}
  \centering
  \includegraphics[width=\linewidth]{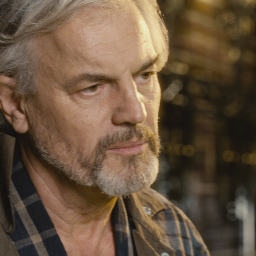}
\end{minipage}
    \caption{Analogue of \autoref{fig:modulation_dithering1} for video and HEVC.
The sandwich is used to transport full color video over a gray-scale codec (HEVC 4:0:0). First, fifth, and tenth frames of compressed bottlenecks, final reconstructions by the
post-processor, and original source videos are shown. Rate=0.07 bpp, PSNR=36.0 dB. The sandwich establishes temporally coherent modulation-like patterns on the bottlenecks through which the pre-processor encodes color
that are then demodulated by the post-processor for a full-color result. The patterns are spatially broader compared to those in \autoref{fig:modulation_dithering1} to facilitate more efficient motion compensation.
The interested reader can generate an extensive set of further examples using our software at \cite{sandwich_oss}.}
\vspace{-5mm}
\label{fig:hevc_400}
\end{figure}

\newcommand{\propositionautorefname}{Proposition}
An interesting way of using neural networks is in the form of pre-post processors that function outside of the main compression loop \cite{QiuLD21}. Such designs do not suffer from bandwidth issues as CPU-GPU transfers are one-way and can be made optional, for example, only enabled for capable decoders with neural processing resources. As we show in this paper 
(see results with perceptual distortion measures later) one can even augment the standards-based compression chain with a single neural pre-processor and accomplish significant performance bumps while targeting standard decoders unaware/incapable of neural processing\footnote{Of course, as discussed in \autoref{sec:prelude}, even end-to-end, purpose-built, minimal-weight neural codecs can be augmented with a sandwich to re-purpose them to different content or distortion measures.}.
Another advantage of the pre-post processors as proposed in this paper 
is their highly parallelizable nature, with parallelism easily exploited by, say, spatially tiling the picture over GPU slices to take already above-real-time performance to many hundreds if not thousands of frames-per-second \cite{nvidia_slice}.
In comparison, neural processing is typically limited to a few frames-per-second \cite{neural_recent_cvpr}.

In this paper, we propose such a pre-post architecture as a \emph{hybrid} between standard codecs and purely neural codecs, which we call the \emph{sandwich architecture}\footnote{Our early work appeared in \cite{GuleryuzCHTDDF:21,guleryuz2022sandwiched,IsikGTTC:23}.}.  In the sandwich architecture, a standard codec is positioned between a neural pre-processor and a neural post-processor, which are {\em jointly} trained to minimize distortion subject to a bitrate constraint.  The neural pre- and post-processors can be lightweight, yet they are able to improve the rate-distortion performance of the standard codec, even on typical color photographic images/video when the distortion measure is MSE.  
Much more interestingly, we show that the improvement is 
especially
pronounced when the application mismatches the standard codec's design target in some way, including non-RGB images (\eg, $C$-channel images with $C\neq3$ such as medical, multi-spectral, depth, geometric, and other sensed images), non-MSE distortion criteria (\eg, human perceptual metrics, realism metrics, and machine task-specific performance metrics), and non-standard profile hardware constraints (\eg, higher bit depth and higher spatial resolution).  At the same time, the sandwich approach leverages the standard codec for much of the heavy lifting, including highly efficient transforms, entropy coding, and motion processing.  Vast resources have been put into
the hardware implementations of standard codecs and
their broader ecosystems (transparent packetization, networking, routing, etc.),
which can make these resources essentially free compared to the power consumption required in neural chips.

As we illustrate in this paper, the magic behind the sandwich architecture is the ability for the pre-processor to learn how to produce images of {\em neural codes} that are well-compressible by a standard codec 
and for the post-processor to learn how to decode these images, to minimize the relevant distortion measure.
Of course,  the neural code images have to be robust to the compression noise 
typically inserted 
at those bitrates by the standard codec.
To gain an intuitive understanding of what these neural code images may look like, consider a simplified problem in which the neural pre- and post-processors adapt a 1-channel (grayscale) codec to compressing an ordinary 3-channel color image, shown in \autoref{fig:modulation_dithering1}.  The top image (a) is the {\em original source image} fed into the pre-processor.  The neural code image (b), or {\em bottleneck image}, is the image produced by the pre-processor and fed into the encoder of the standard codec.  This is called the bottleneck image since it is at the locus of the compression bottleneck.  Note that the bottleneck image contains spatial modulation patterns (akin to watermarks) that serve to encode the color information in this case.  These patterns are the neural codes.  
The {\em reconstructed bottleneck image} (c) is emitted from decoder of the standard codec (note the typical JPEG blocking artifacts) and fed into the post-processor.  The {\em reconstructed source image} (d) is the image output from the post-processor.  Note that both color and sharp spatial definition are recovered from the neural codes\footnote{The reader versed in  
watermarking and data-embedding  \cite{moulin_capacity} will note the similarities except that the processors in this case 
need not
hide the embedded data. That the networks have to be jointly optimized 
is clear.}.
\autoref{fig:hevc_400} shows the analogue for video compression with HEVC. This time the processors use temporally coherent modulation patterns to communicate color.
(Refer to
subsections \ref{sec:rgb-images} and \ref{sec:video-400}
for rate-distortion results relevant to these scenarios, and to \ref{sec:video-400} for a discussion on neural codes' temporal coherence.) 

Few works prior to 
ours
paired standard codecs with neural networks as pre- or post-processors.  Most 
paired the standard codec with either a neural pre-processor alone (\eg, to perform denoising of the input image \cite{ZhangZCMZ:17,TIAN2020251,vu2020unrolling}) or a neural post-processor alone (\eg, to perform deblocking or other enhancements of the output image \cite{Svoboda2016CompressionAR,KimLSL:19,Niu2020EndtoEndJD}).  A few works paired standard codecs with both neural pre- and post-processors, such as \cite{LiLLLLW19,EusebioAP20,QiuLD21,Andreopoulos22,Kim2020a}, but these solutions, like prior non-neural solutions such as \cite{SegallK00} 
did so in such a way that the pre- and post-processors may be used independently; thus no neural codes are generated; hence they do not take full advantage of the communication available between the pre- and post-processors (see \autoref{thm:proposition1} as to why this is critically important.)

Beyond proposing the sandwich architecture itself, a major contribution of our paper is a solution for \emph{jointly training the pre- and post-processors}.  To jointly train these neural networks using standard gradient back-propagation, the standard codec must be differentiable.  Hence during training we replace the standard codec with a differentiable {\em codec proxy}.  We show that well-designed simple proxies that approximate key codec components allow the training of models that robustly work with different standard codecs. 
Using these proxies we demonstrate the
\emph{advantages of the sandwich architecture across a variety of image and video compression settings}:

$\bullet$ For coding of 3-channel color images over a 1-channel (grayscale, or 4:0:0) codec, as in \autoref{fig:modulation_dithering1}, sandwiching has 6--9 dB gain in MSE.
Over a 1.5-channel (4:2:0) codec, sand\-wiching has a
10\% reduction in bitrate.
And over a 3-channel (4:4:4) codec, sandwiching has a
15\% reduction in bitrate.

$\bullet$ For coding of 2x high resolution (HR) or super-resolved images over 1x lower resolution (LR) codecs, sandwiching has up to 9 dB gain in MSE.

$\bullet$ For coding of 16-bit high dynamic range\footnote{In this work, the term HDR is interchangeable with high-bit-depth.} (HDR) color images over an 8-bit lower dynamic range (LDR) codec, sandwiching has up to 3 dB gain in MSE.

$\bullet$ For coding of 3-channel computer graphics normal maps over a 1.5-channel codec,
sandwiching has a 4-5 dB gain in MSE.
Over a 3-channel codec, sandwiching has a 1.5-2 dB gain, or about 15\% reduction in bitrate.

$\bullet$ For coding of 8-channel computer graphics texture maps (3-channel albedo, 3-channel normals, 1-channel roughness, and 1-channel occlusion) over an 8-channel codec (implemented as the concatenation of eight 1-channel instances), sandwiching has a 20-30\% reduction in bitrate, when distortion is measured %
over the final
lighted, rendered images across multiple views.
We term this measure the {\em shaded distortion}.

$\bullet$ For coding of video, we show analogous gains in MSE for coding color over grayscale codecs,
and coding HR over LR codecs.
Perhaps most importantly from the perspective of video
applications,
we demonstrate that for coding color video over color codecs, sandwiching yields 30\% reduction in bitrate at the same visual quality, when trained to minimize the perceptual distortion measure LPIPS 
instead of MSE. We also provide related VMAF results.

Our results are geared toward establishing the following outline.
In order to generate an intuitive understanding and to analyze the role of encoding with different sub-sampling patterns (4:4:4 v.s. 4:2:0 and so on) we start with image compression with a straightforward transform coder as embodied by the JPEG standard. 
We then demonstrate that the results do not depend heavily on whether the standard codec is JPEG or HEVC-Intra (HEIC) without any model retraining indicating the efficacy of our proxies.
Our primary results are with HEVC as the codec-du-jour that immediately benefits from many re-purposing scenarios we look at.
We also show that the results degrade but hold up well as the number of parameters of the
pre- and post-processors is reduced by more than two orders of magnitude, i.e., a 99\% reduction in parameters, 
in each neural processor. 
The main point we establish is the capacity of the sandwich in re-purposing the standard codec in various applications with very significant improvements.
We nevertheless further point to work that successfully uses the sandwich for basic compression improvements with VVC/AV1 on HD/UHD video using very low complexity models.

\renewcommand{\sectionautorefname}{Section}

The rest of the paper is organized as follows. 
The prelude of \autoref{sec:prelude} %
discusses the optimal sandwich, mathematically framing our work within the rich pre-post-processor literature. 
\autoref{sec:setup} presents the sandwich architecture.  \autoref{sec:images}
includes image compression experiments
followed by complexity results in
\autoref{sec:complexity}. 
\autoref{sec:video}
is devoted to 
video compression experiments.
\autoref{sec:discussion} concludes the paper.
\vspace{-3mm}

\section{Prelude: The Sandwich as a Codelength Constrained Vector Quantizer}
\label{sec:prelude}

Pre-Post processing applied around 
a compression codec is a well-known technique. In $\Sigma\Delta$ compression \cite{sigma_delta} one wraps a simple $1$-bit quantizer to 
make it function like a $k$-bit one, in \cite{guleryuz_dpcm} one wraps DPCM codecs (performance-wise inferior to transform codecs) to get them to perform like transform codecs, using \cite{ZhangZCMZ:17,TIAN2020251,vu2020unrolling, Svoboda2016CompressionAR,KimLSL:19,Niu2020EndtoEndJD} one can wrap image/video codecs to reduce input noise, reduce codec artifacts, and so on.  
Compression literature includes many such interesting designs that
offer specific solutions to specific problems. With neural networks one now has the capability of designing much more general mappings as %
pre-post processors. In this section we briefly explore the potential gains one can tap into.

Compression codecs can be seen as vector quantizer code-books. A standard codec at a particular operating point can 
be thought of in terms of a set of codewords (decoder reconstruction vectors existing in high dimensions) and associated binary strings (bits signaling each desired reconstruction). A sandwich with non-identity wrappers maps a source
to use the standard codebook and then maps the standard decoder's output into final reconstructions.  
Looking from outside the sandwich, we hence see a new codebook for the source that is determined by the pre/post-processor mappings modifying the standard codebook. Suppose the standard codec is not adequate for a given source. Then, a natural question to ask is ``{\em How much better can we make the standard codec by wrapping it in a sandwich?}''

In order to quantify the properties of ``sandwich-achievable'' codebooks and how they would compare to a codebook that is optimal for the source,
let us momentarily disregard limitations on neural network complexity and limitations of back propagation in finding overall optimal solutions. Assume we can find the optimal pre-post-processor mappings. What is the efficiency of the sandwich system with respect to an optimal codebook? The following proposition shows that {\em the optimal sandwich can accomplish the optimal compression performance except for a potential rate penalty induced by a  mismatch to the standard codec's codelengths}.

\begin{proposition}
\label{thm:proposition1}[Optimal Sandwich]
Let $X$ be a $\mathbb{R}^n$-valued bounded source, let $d$ be a distortion measure, and let $D(R)$ be the operational distortion-rate function for $X$ under $d$.  For any $\epsilon>0$, let $(\alpha^*,\beta^*,\gamma^*)$ be the encoding, decoding, and lossless coding maps for a rate-$R$ codec for $X$ achieving $D(R)$ within $\epsilon/2$.
Let
$(\alpha,\beta,\gamma)$ be a regular codec (\eg, a standard codec, possibly designed for a different source and different distortion measure) with bounded codelengths.  Then there exist neural pre- and post-processors $f$ and $g$ such that the codec sandwich $(\alpha\circ f,g\circ\beta,\gamma)$ has expected distortion at most $D(R)+\epsilon$ and expected rate at most $R+D(p||q)+\epsilon$, where $p(k)=P(\{\alpha^*(X)=k\})$ and $q(k)=2^{-|\gamma(k)|}$.
\end{proposition}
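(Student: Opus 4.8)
The plan is to show that the sandwich can exactly emulate the near-optimal codec $(\alpha^*,\beta^*,\gamma^*)$, paying in rate only for the mismatch between its codeword probabilities $p$ and the codelengths $|\gamma(\cdot)|$ that the standard codec happens to assign. First I would relabel the codewords of $(\alpha^*,\beta^*,\gamma^*)$ so that its index set is identified with a subset of the standard codec's index set; by the ``regularity'' of the standard codec I may assume $\beta$ is injective on these indices and enjoys the fixed-point (idempotency) property $\alpha(\beta(k))=k$, i.e.\ feeding the standard encoder an already-decoded image reproduces the same index. If the optimal codec has infinitely many codewords, I would first truncate it: since $X$ is bounded and $(\alpha^*,\beta^*,\gamma^*)$ has finite expected rate, keeping the codewords that carry all but $\delta$ of the probability mass and rerouting the remainder to a single codeword perturbs the distortion and the rate by $O(\delta)$, so $\delta$ can be taken small enough to be absorbed into $\epsilon$.

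Next I would define the processors by $f(x)=\beta(\alpha^*(x))$ and, on the range of $\beta$ over those indices (extended arbitrarily elsewhere), $g(y)=\beta^*(\beta^{-1}(y))$. Tracing a sample $x$ through the sandwich $(\alpha\circ f,\,g\circ\beta,\,\gamma)$: the standard encoder outputs $\alpha(\beta(\alpha^*(x)))=\alpha^*(x)$ by idempotency, so it transmits the string $\gamma(\alpha^*(x))$; the standard decoder emits $\beta(\alpha^*(x))$; and the post-processor returns $g(\beta(\alpha^*(x)))=\beta^*(\alpha^*(x))$. Thus the sandwich reproduces exactly the optimal codec's reconstruction, so its expected distortion is $E[d(X,\beta^*(\alpha^*(X)))]\le D(R)+\epsilon/2\le D(R)+\epsilon$. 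Since the paper explicitly sets neural-network realizability aside, I would either take $f,g$ to be these exact maps or, if one insists on continuous networks, approximate them closely enough that the extra distortion and rate stay within the remaining $\epsilon$ budget.

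For the rate, writing $p(k)=P(\alpha^*(X)=k)$ and $q(k)=2^{-|\gamma(k)|}$, I would compute
\begin{equation}
E\big[\,|\gamma(\alpha^*(X))|\,\big]=\sum_k p(k)\log_2\frac{1}{q(k)}=H(p)+D(p\|q),
\end{equation}
where $H(p)$ is the base-$2$ entropy of $p$. It then remains to bound $H(p)\le R$: the code $\gamma^*$ is uniquely decodable, so Kraft's inequality gives $\sum_k 2^{-|\gamma^*(k)|}\le 1$, whence the source-coding converse yields $H(p)\le E[\,|\gamma^*(\alpha^*(X))|\,]\le R$ because $(\alpha^*,\beta^*,\gamma^*)$ is a rate-$R$ codec. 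Combining, the sandwich's expected rate is at most $R+D(p\|q)\le R+D(p\|q)+\epsilon$, which completes the argument.

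The step I expect to be the main obstacle is making the codeword correspondence legitimate: a codec with bounded codelengths has only finitely many codewords, so one needs the (possibly truncated) optimal codec to have no more codewords than that, and one needs the idempotency and injectivity consequences of ``regularity'' to be exactly what is required to steer the standard codec to a prescribed index and to invert its output. Lining up these hypotheses — and, secondarily, reconciling the idealized discontinuous $f$ with the notion of a ``neural'' processor — is where the care lies; the distortion and rate accounting are then routine.
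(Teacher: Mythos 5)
Your overall plan---steer the standard codec so that it transmits exactly the index $\alpha^*(X)$, undo $\beta$ in the post-processor, and account for the rate via $E[|\gamma(\alpha^*(X))|]=H(p)+D(p\|q)\le R+D(p\|q)$ with $H(p)\le R$ from the Kraft inequality for $\gamma^*$---is the same as the paper's. The genuine gap is the step where you invoke ``regularity'' to obtain idempotency $\alpha(\beta(k))=k$ (and injectivity of $\beta$) and then set $f(x)=\beta(\alpha^*(x))$. In the paper, regularity only means that every quantization cell $\{x:\alpha(x)=k\}$ is a non-degenerate polytope with non-empty interior; it says nothing about where the reconstruction $\beta(k)$ lies. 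Since $(\alpha,\beta,\gamma)$ may be designed for a different source and distortion measure (and even for matched codecs an entropy-biased or Lagrangian encoding rule need not send $\beta(k)$ back to $k$), $\alpha(\beta(k))=k$ is simply not available, and your trace through the sandwich breaks at the standard encoder. The paper's proof avoids this entirely: the ideal pre-processor maps every $x$ with $\alpha^*(x)=k$ to an arbitrary point $y_k$ chosen in the \emph{interior} of the cell $\{y:\alpha(y)=k\}$ (this is precisely what regularity buys), so $\alpha(f^*(x))=k$ holds by definition of the cell; the ideal post-processor is then defined on the decoder outputs by $g^*(\beta(k))=\beta^*(k)$ (which, like your $\beta^{-1}$, tacitly uses injectivity of $\beta$ on the used indices).

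The interior-point device is not cosmetic: it is also what makes the \emph{neural} existence claim provable, which the proposition does assert, so realizability is not ``set aside.'' Your ideal $f$ is discontinuous, and the paper handles this by approximating $f^*$ with a network via a universal approximation theorem ($L_2$, hence in-probability, closeness): because each target $y_k$ lies at positive distance from its cell boundary (and there are finitely many indices, since codelengths are bounded), a sufficiently close approximation still encodes to the correct index except on a set of probability less than $\delta$, whose contribution to distortion and rate is controlled by the boundedness of the source (distortion at most $D_{\max}$) and of the codelengths (rate at most $R_{\max}$), absorbing the error into $\epsilon$. With your target $\beta(\alpha^*(x))$, which may sit on or outside the cell boundary, this perturbation argument fails even if idempotency were granted. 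Your truncation remark and the rate computation are fine and essentially match the paper; repairing the proof amounts to replacing $\beta(\alpha^*(x))$ by an interior point of the corresponding cell and then running the approximation bookkeeping.
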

\begin{proof}
See \autoref{sec:theory} in Supplementary Material.
\vspace{-2mm}
\end{proof}

Note the key role of the sandwich in 
repurposing 
the inner codebook to the outer compression scenario. When sandwiching a high-performance image/video codec for different but related image/video applications one can expect the mismatch to be lighter compared to, say, when one tries to sandwich an image codec to transport audio data.
From the perspective of the proposition, using 
configurable codecs, \ie, those that allow codebook codelengths to be optimized,
may help minimize the implied penalty.
While beyond the scope of this paper, we 
point to
generalizing the sandwich to configurable codecs as an interesting research direction.
\vspace{-3mm}

\section{The Sandwich Architecture}
\label{sec:setup}

\subsection{Sandwich for Image Compression}
\label{sec:image_sandwich}

The sandwich architecture for image compression is shown in \autoref{fig:architecture}(a).  An {\em original source image} $S$ with one or more full-resolution channels is mapped by a neural pre-processor into one or more channels of {\em neural} (or {\em latent}) {\em codes}.  Each channel of neural codes may be 
full resolution or reduced resolution.  The channels of neural codes are grouped into one or more {\em bottleneck images} $B$ suitable for consumption by a standard image codec.  The bottleneck images are compressed by the standard image encoder into a bitstream, which
is decompressed by the corresponding decoder into {\em reconstructed bottleneck images} $\hat B$.  The channels of the reconstructed bottleneck images are then mapped by a neural post-processor into a {\em reconstructed source image} $\hat S$.

\begin{figure}
\footnotesize
\begin{minipage}[b]{0.9\linewidth}
  \centering
  \includegraphics[width=0.99\linewidth, trim=0 150 290 0cm, clip]{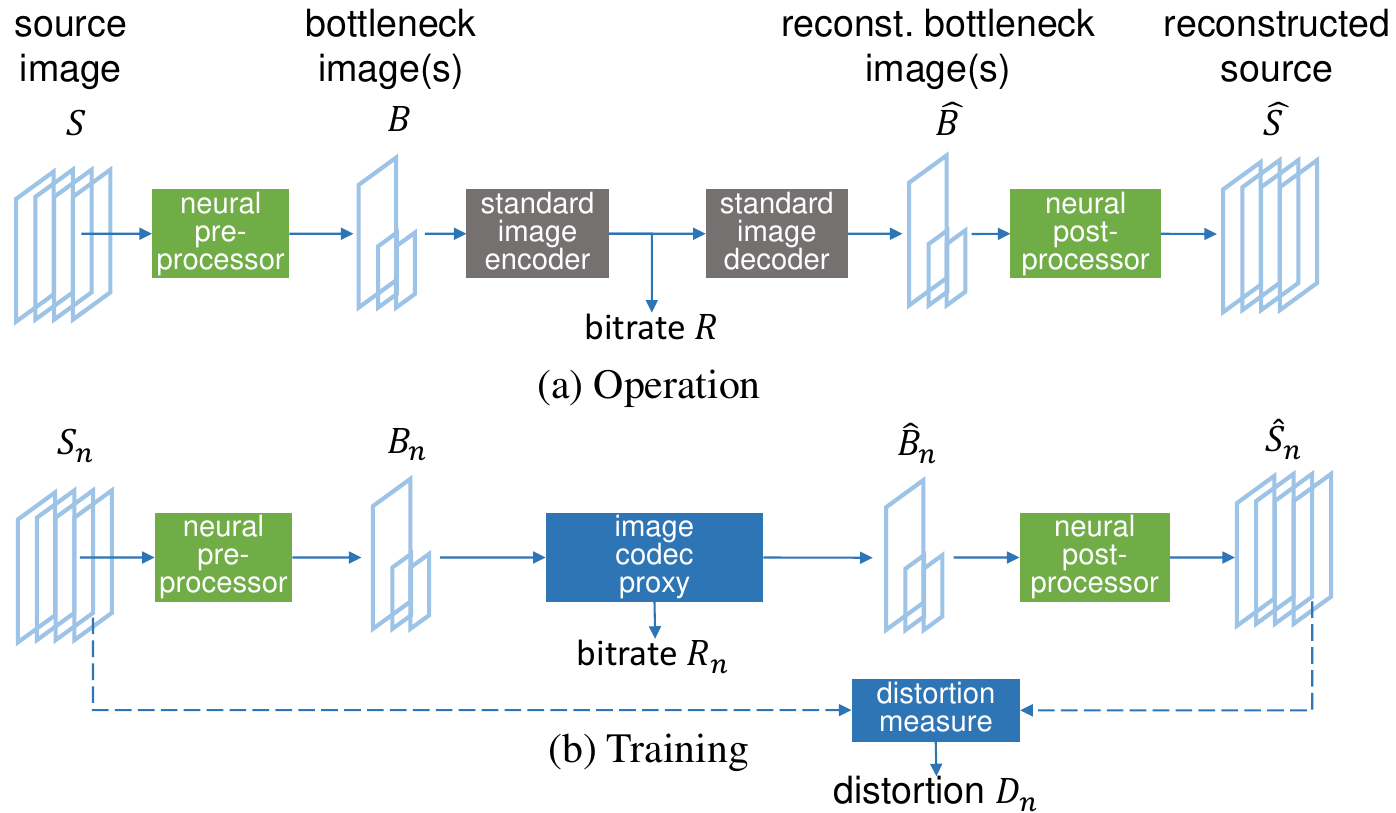}
\end{minipage}
\vspace{-6mm}
\caption{Neural-sandwiched image codec during (a) operation and (b) training. Gray boxes are not differentiable; blue are differentiable; green are trainable.  Loss function for training is $\sum D_n+\lambda R_n$ over example images $n$.}
\vspace{-4mm}
\label{fig:architecture}
\end{figure}

The standard image codec in the sandwich is configured to \emph{avoid} any color conversion or further subsampling.
Thus, it compresses three full-resolution channels as an image in 4:4:4 format, one full-resolution channel and two half-resolution channels as an image in 4:2:0 format, or one full-resolution channel as an image in 4:0:0 (\ie, grayscale) format --- all without color conversion.
Other combinations of channels are processed by appropriate grouping.

\autoref{fig:processor} shows the network architectures we use for our neural pre-processor and post-processor.
The upper branch of the network learns pointwise operations, like color conversion, using a multilayer perceptron (MLP) or equivalently a series of $1\!\times\!1$ 2D convolutional layers, while the lower branch uses a U-Net \cite{UNet:15} to take into account more complex spatial context.
At the output of the pre-processor, any half-resolution channels are obtained by sub-sampling, while at the input of the post-processor, any half-resolution channels are first upsampled to full resolution. %
We have deliberately picked the U-Net as it is a well-known model whose performance in various areas is well-documented. U-Nets %
have also been systematically studied with reduced parameter/complexity variants easily generated.

\begin{figure}%
\footnotesize
\begin{minipage}[b]{0.48\linewidth}
  \centering
  \centerline{\includegraphics[height=1.3cm,trim=0 6.375in 9.75in 0,clip]{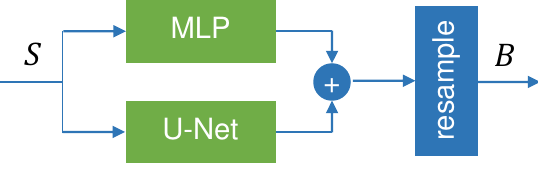}}
  \centerline{(a) Pre-processor}\medskip
\end{minipage}
\hspace{1mm}
\begin{minipage}[b]{0.48\linewidth}
  \centering
  \centerline{\includegraphics[height=1.3cm,trim=0 6.375in 9.75in 0,clip]{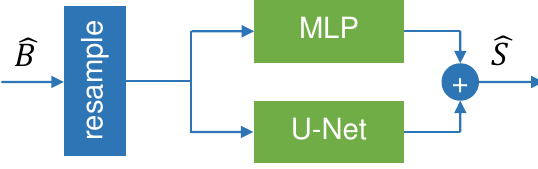}}
  \centerline{(b) Post-processor}\medskip
\end{minipage}
\vspace{-4mm}
\caption{Neural pre-processor and post-processor.}
\label{fig:processor}
\vspace{-3mm}
\end{figure}

\autoref{fig:architecture}(b) shows the setup for training the neural pre-processor and post-processor using stochastic gradient descent.
Because derivatives cannot be back-propagated through the standard image codec, it
is replaced by a differentiable\footnote{In this paper as in most of the ML literature, the term {\em differentiable} more properly means {\em almost-everywhere differentiable}.} \emph{image codec proxy}.  For each training example $n=1,\ldots,N$, the image codec proxy reads the bottleneck image $B_n$ and outputs the reconstructed bottleneck image $\hat{B}_n$, as a standard image codec would.  It also outputs a real-valued estimate of the number of bits $R_n$ that the standard image codec would use to encode $B_n$.
The distortion is measured as any differentiable distortion measure $D_n=d(S_n,\hat{S}_n)$ (such as the squared $\ell_2$ error $||S_n-\hat{S}_n||^2$)
between the original and reconstructed source images.
Together, the rate $R_n$ and distortion~$D_n$ are the key elements of the differentiable loss function.
Specifically, the neural pre-processor and post-processor are optimized to minimize the Lagrangian $D+\lambda R$ of the average distortion $D=(1/N)\sum_n D_n$ and the average rate $R=(1/N)\sum_n R_n$.

\begin{figure}%
\footnotesize
\begin{minipage}[b]{1.0\linewidth}
  \centering
  \centerline{\includegraphics[width=8.5cm, trim=0 400 430 0cm, clip]{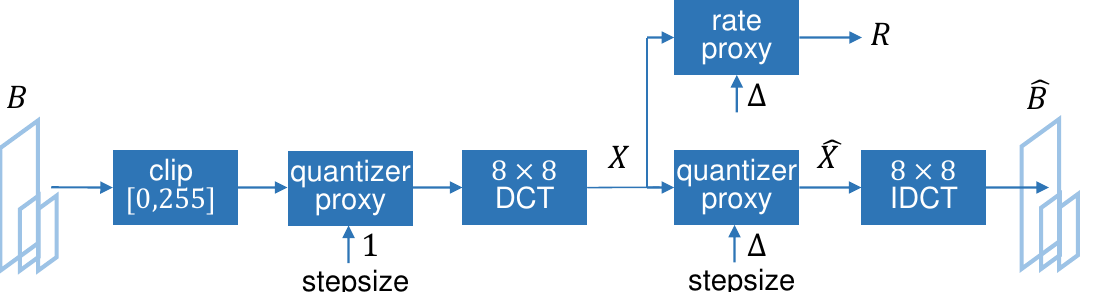}}
\end{minipage}
\vspace{-14pt}
\caption{Image codec proxy.}
\vspace{-5mm}
\label{fig:codec_proxy}
\end{figure}

The image codec proxy itself comprises the differentiable elements shown in \autoref{fig:codec_proxy}.
For convenience the image codec proxy is modeled after JPEG, an early codec for natural images.
Nevertheless, experimental results show that it induces the trained pre-processor and post-processor to produce bottleneck images sufficiently like natural images that they can also be compressed efficiently by other codecs such as HEVC (or VVC/AV1, see \cite{yueyu_icip}.)  The image codec proxy spatially partitions the input channels
into $8\times8$ blocks.
In the DCT domain, the blocks $X=[X_i]$ are processed independently, using
(1)~a ``differentiable quantizer'' (or {\em quantizer proxy}) to create distorted DCT coefficients~$\hat{X}_i=Q(X_i)$, and
(2)~a differentiable entropy 
measure (or {\em rate proxy}) to estimate the bitrate required to represent the distorted coefficients $\hat{X}_i$.
Both proxies take the nominal quantization stepsize $\Delta$ as an additional input.
Further information on quantizer and rate proxies is provided in supplementary %
section~\ref{sec:supp-q-and-r-proxies},
their adaptations for HR and HDR are provided in  
\ref{sec:supp-hr-and-hdr-adaptations}.
\vspace{-5mm}

\subsection{Sandwich for Video Compression}
\vspace{-1mm}

\begin{figure*}[t]
    \centering
    \includegraphics[width=0.8\linewidth,trim=0in 1.5in 0in 0.25in,clip]{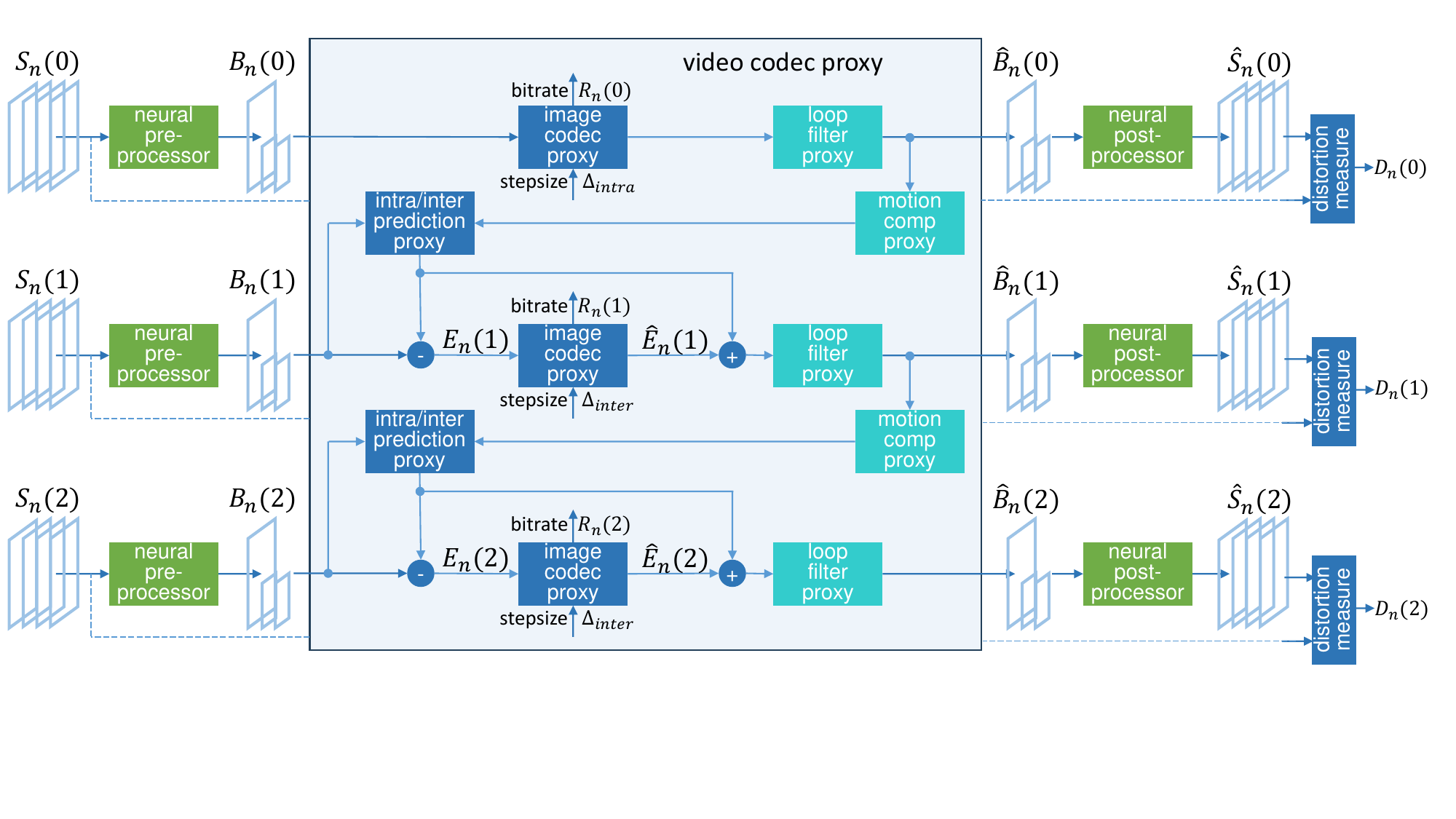}
    \caption{Neural-sandwiched video codec during training. Loss function for training is $\sum D_n(t)+\lambda R_n(t)$ over example clips $n$ and frames $t$.  The shaded box (video codec proxy) is replaced with a standard video codec during operation/inference.  Green boxes are trainable; blue are differentiable; cyan are differentiable with pre-trained parameters.  The entire video codec proxy is differentiable.
    }
    \label{fig:diagram}
    \vspace{-5mm}
\end{figure*}

The sandwich architecture for video compression is shown wrapping our video codec proxy in \autoref{fig:diagram}. Observe that the neural pre-post-processors handle video frames independently enabling straightforward spatio-temporal parallelization.
The video codec proxy maps an input video sequence to an output video sequence plus a bitrate for each video frame.  It has the following components.

{\em Intra-Frame Compression}.  The video codec proxy simulates coding the first ($t=0$) frame of the group, or the I-frame, using the image codec proxy described above in \autoref{sec:image_sandwich}.  %

{\em Motion Compensation}.  The video codec proxy simulates predicting each subsequent ($t>0$) frame of the group, or P-frame, by motion-compensating the previous frame.
Motion compensation is performed using a 
pre-computed
dense motion flow field obtained by running a state-of-the-art optical flow estimator, UFlow \cite{lodha1996uflow}, between the {\em original} source images $S_n(t)$ and $S_n(t-1)$.  The video proxy simply applies this 
motion flow to the previous reconstructed bottleneck image $\hat B_n(t-1)$ to obtain an inter-frame prediction $\tilde B_n(t)$ for bottleneck image $B_n(t)$.  
Note that our motion compensation proxy does not actually depend on $B_n(t)$, so 
even though it is a spatial warping, it is a linear map from $\hat B_n(t-1)$ to $\tilde B_n(t)$, with a constant Jacobian.  This makes optimization much easier than 
using
a differentiable function of both $\hat B_n(t-1)$ and $B_n(t)$
that {\em finds} as well as {\em applies} a warping from $\hat B_n(t-1)$ to $B_n(t)$.  Such functions have notoriously fluctuating Jacobians that make training difficult.

{\em Prediction Mode Selection}.  
An Intra/Inter prediction proxy simulates a modern video codec's Inter/Intra prediction mode decisions.
This ensures better handling of temporally occluded/uncovered regions in video. First, Intra prediction is simulated by rudimentarily compressing the current-frame and low pass filtering it. This simulates filtering, albeit not the usual directional filtering, to predict each block from its neighboring blocks.  Initial rudimentary compression ensures that the Intra prediction proxy is not unduly preferred at very low rates. For each block, the Intra prediction (from spatial filtering) is compared to the Inter prediction (from motion compensation), and the one closest to the input block determines the mode of the prediction.

{\em Residual Compression}.  The predicted image, comprising a combination of Intra- and Inter-predicted blocks, is subtracted from the bottleneck image, to form a prediction residual.  The residual image is then 
compressed using the image codec proxy described above in \autoref{sec:image_sandwich}.
The compressed residual is added back to the prediction to obtain a ``pre-loop-filtered'' reconstruction of the bottleneck image $\hat B_n(t)$.

{\em Loop Filtering}.  The ``pre-loop-filtered'' reconstruction is then filtered by a loop filter to obtain the final reconstructed bottleneck image $\hat B_n(t)$.  The loop filter is implemented with a small U-Net((8);(8, 8)) \cite{UNet:15} (see \autoref{sec:complexity} for U-Net notation) that processes one channel at a time.  The loop filter is trained once for four rate points on natural video using only the video codec proxy with rate-distortion
training loss in order to mimic common loop filters. The resulting set of filters are kept fixed for all of our simulations, \ie, once independently trained, the four loop filters 
are not further trained.

{\em Pre-Post-Processors.} Same as \autoref{sec:image_sandwich} and   \autoref{fig:processor}.

{\em Loss Function}.
The loss function is the total rate-distortion Lagrangian $\sum_{n,t}D_n(t)+\lambda R_n(t)$,
where $D_n(t)$ and $R_n(t)$ are the distortion and rate of frame $t$ in clip $n$. 
The rate term 
serves to encourage the pre- and post-processors to produce temporally consistent neural codes, since neural codes that move according to 
the
motion 
field are well predicted (see \autoref{sec:video-400}.)
Note that the overall mapping from the input images through the pre-processor, video codec proxy, post-processor, and loss function is differentiable.
\vspace{-2mm}

\section{Image Compression Experiments}
\label{sec:images}
\vspace{-2mm}
We first present results for compressing ordinary 3-channel color (RGB) images through codecs with a restricted number of channels (4:0:0 and 4:2:0 compared to 4:4:4), where distortion is measured as RGB-PSNR.
Then we present results for compressing high spatial resolution (HR) %
images through codecs with lower spatial resolution (LR), and for compressing high dynamic range (HDR) images through codecs with lower dynamic range (LDR), where distortion is again measured as RGB-PSNR.
These results are indicative of how a neural sandwich can adapt the hardware of a standard codec to source images with higher resource requirements.
Finally, we present results 
on graphics data, first
for compressing 3-channel normal maps, where distortion is measured as PSNR on the normal maps, and 
then for compressing 8-channel shading maps for use in computer graphics, where distortion is measured as RGB-PSNR {\em after shading} from 8 to 3 channels and across a multitude of views.
These results are indicative of how a neural sandwich can adapt a codec designed for 3-channel RGB images and MSE to other image types
and other distortion measures.
In these results, we also see how the codec proxy, though modeled after JPEG, is also adequate to represent more advanced codecs such as HEIC/HEVC-Intra.

\begin{figure*}[t]
\begin{minipage}{0.32\linewidth}
  \centering
  \includegraphics[width=1.0\linewidth, trim=11mm 0mm 15mm 6mm, clip]{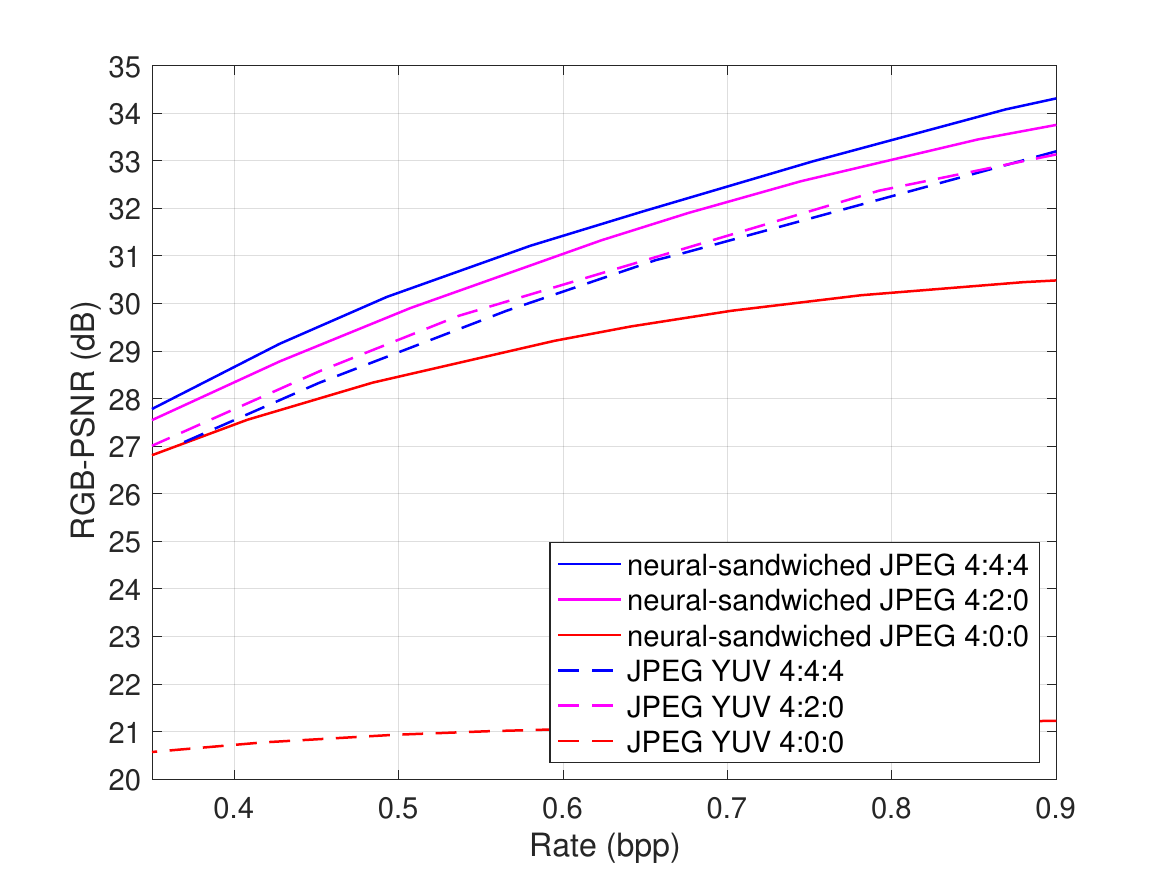}
  \caption{R-D performance of compressing 3-channel RGB images with JPEG YUV and neural-sandwiched JPEG, in 4:4:4, 4:2:0, and 4:0:0.
  }
  \label{fig:jpeg_image_results_rgb}
\end{minipage} \hspace{2mm}
\begin{minipage}{0.64\linewidth}
  \begin{minipage}{0.49\linewidth}
  \centering
  \centerline{\includegraphics[width=1.0\linewidth, trim=11mm 2mm 15mm 8mm, clip]{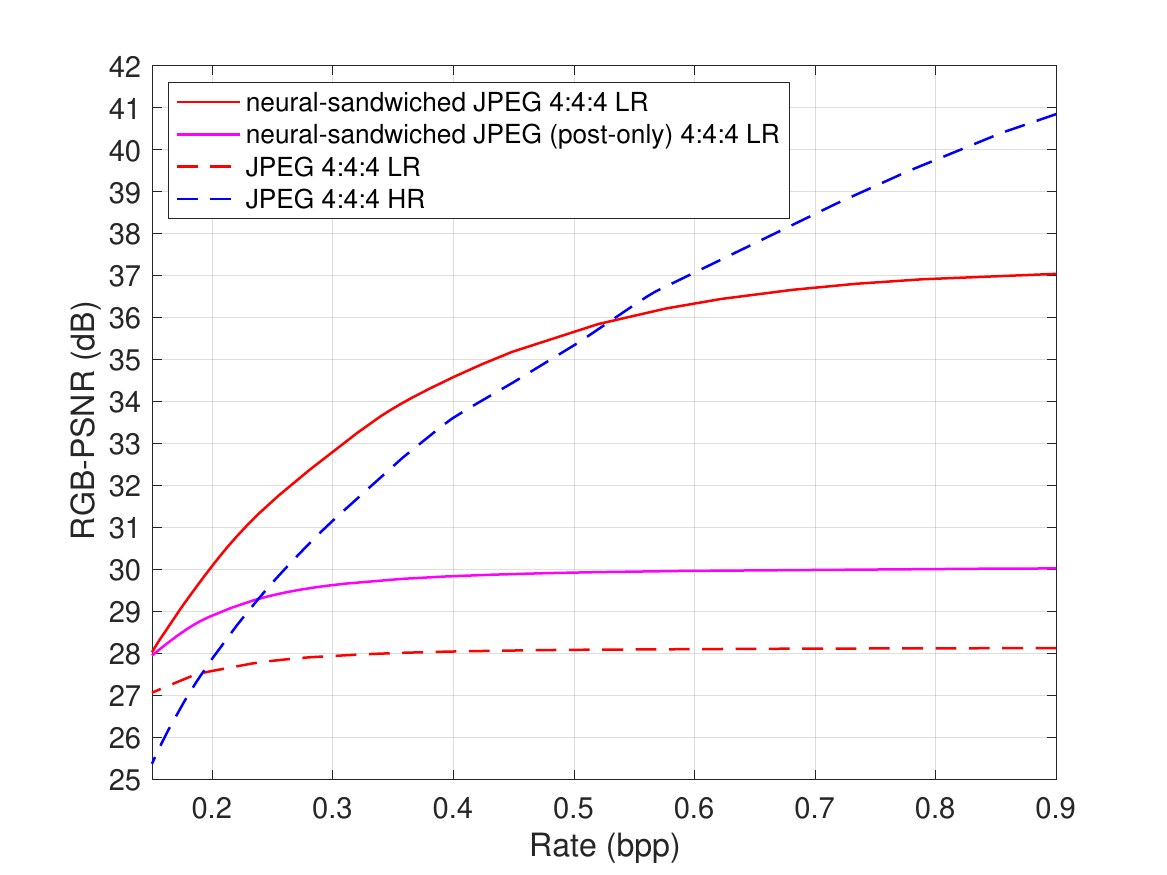}}  %
  \centerline{\footnotesize (a) JPEG}
  \end{minipage}
  \begin{minipage}{0.5\linewidth}
  \includegraphics[width=1.0\linewidth, trim=11mm 2mm 15mm 8mm, clip]{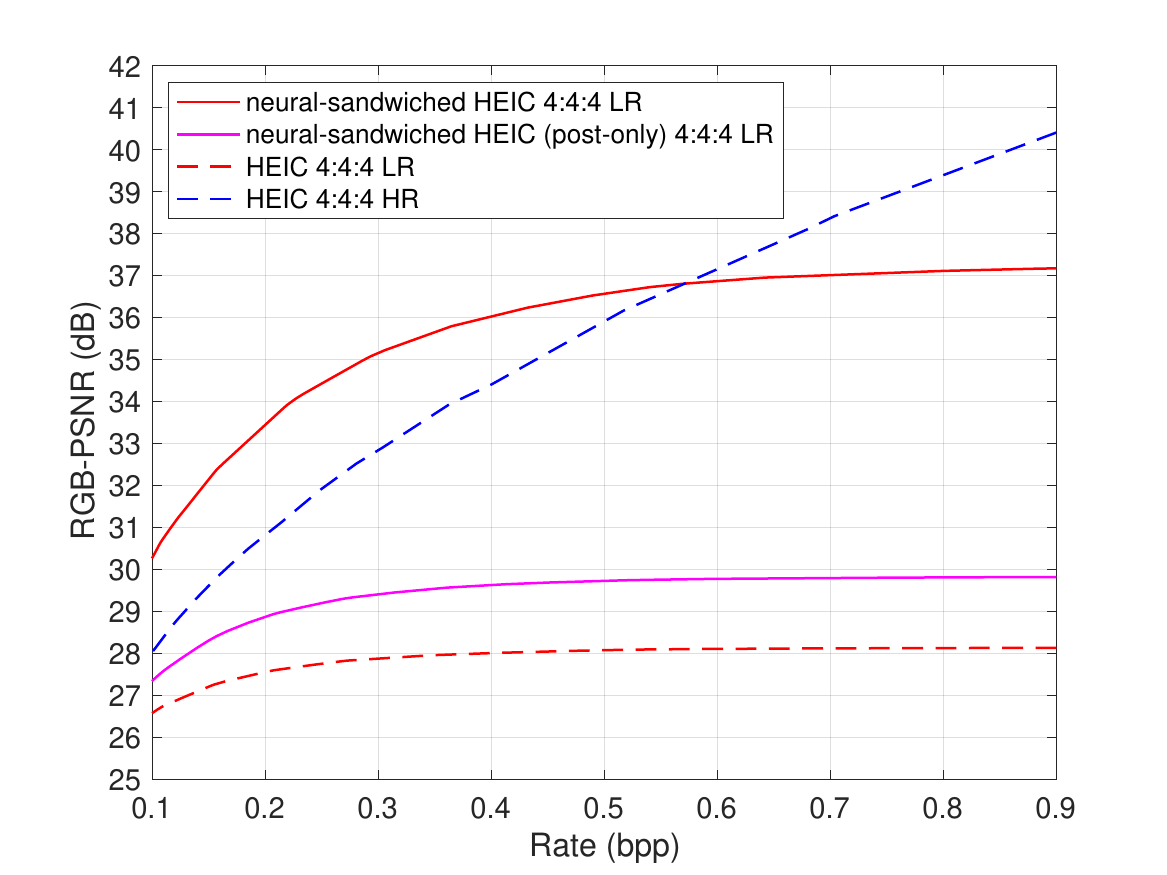}
  \centerline{\footnotesize (b) HEIC}
  \end{minipage}
  \caption{R-D performances of compressing high resolution (HR) RGB images with lower resolution (LR) codecs alone, LR codecs plus neural post-processing, and neural-sandwiched SR codecs.  For reference, also shown are R-D performances of compressing HR RGB images with HR codecs.
  }
  \vspace{-.25cm}
  \label{fig:SR-RD}
\end{minipage}
\vspace{-3mm}
\end{figure*}

For our RGB results, we use the Pets, CLIC, and HDR+ datasets \cite{pet_dataset,clic_dataset, hasinoff2016burst},
while for our computer graphics results, we use the Relightables dataset \cite{relightables}.
Training and evaluation are performed on distinct subsets of each dataset.
All 
results are generated using actual compression on 500 test images of size $256\times256$ randomly cropped from the evaluation subset.

The U-Nets have a multi-resolution ladder of four with channels doubling up the ladder from $32$ to $512$ , specifically U-Net([32, 64, 128, 256]; [512, 256, 128, 64, 32]).  (See \autoref{sec:complexity} for notation.) MLP networks have two layers with $16$ nodes.
Output channels of the networks are determined based on bottleneck and overall output channels.

We obtain R-D curves as follows.
We train four models $m_i$, $i=1,\ldots,4,$ 
for four 
different Lagrange multiplier values $\lambda_i$
using established $D + \lambda R$ optimization \cite{gersho_and_gray}.
For each model, we obtain an R-D curve by encoding the images using a sweep over many step-size values.
Finally we compute the Pareto frontier of these four curves.

\vspace{-3mm}
\subsection{Compressing 
RGB Images with \texorpdfstring{$C\leq3$}{C≤3}-channel Codecs}
\label{sec:rgb-images}
\vspace{-1mm}
In this section, we report the rate-distortion performance of compressing 3-channel RGB images with JPEG YUV and neural-sandwiched JPEG codecs,
across 4:4:4, 4:2:0, and 4:0:0 formats, which respectively correspond to $C$-channel bottleneck images with $C=3$, $1.5$, and $1$.

\autoref{fig:jpeg_image_results_rgb} shows R-D results 
evaluated on the Pets dataset.
For the 4:0:0 format, the neural-sandwiched version shows 6--9 dB improvement over the standard codec, %
due to the fact that for this format, the standard codec can transport only grayscale, whereas the neural-sandwiched version manages to transport color through modulating patterns, as exemplified in \autoref{fig:modulation_dithering1}.
For the 4:2:0 and 4:4:4 formats, R-D performances for the standard codec 
are close to one another.
For both formats, 
the neural-sandwiched standard codec performs better than the standard codec 
in either format.  In particular, for the 4:2:0 format, the neural-sandwiched version shows a 10\% reduction in bitrate, while for the 4:4:4 format, the neural-sandwiched version shows a 15\% reduction. 
It is interesting to see that,
unlike the case for the standard codec, %
that 
{\em the neural-sandwiched version finds a way to utilize the denser sampling of the 4:4:4 format for improved gains over 4:2:0}.
Also surprising is that at low rates {\em the neural-sandwiched 4:0:0 codec becomes competitive with the standard
codecs.}
\vspace{-4mm}

\subsection{Compressing High Resolution (HR) RGB Images with Lower Resolution (LR) Codecs}
\label{sec:hr_lr_images}
\vspace{-1mm}
In this subsection, we
consider the rate-distortion performance of compressing high resolution (HR) RGB images with lower resolution (LR) standard codecs, both with JPEG and HEIC. We also contrast results with ``CNN-RD'' \cite{EusebioAP20}.
LR is half the horizontal and vertical resolution of HR.
Thus, whether sandwiched or not, we precede the standard codecs with bicubic filtering and $2\times$ downsampling and follow them with $2\times$ upsampling using Lanczos3 interpolation.

\autoref{fig:SR-RD} shows R-D results using (a) JPEG and (b) HEIC as the standard codec
evaluated on the CLIC dataset.  When they are neural-sandwiched, JPEG and HEIC have identical pre- and post-processors, with no retraining for HEIC.  The neural-sandwiched standard codecs show substantial improvement over the standard codecs alone: 5-7 dB gains for JPEG, and 7-8 dB gains for HEIC, in the 0.3-0.5 bpp range, and further gains at higher rates.
We also compare a standard codec with a post-processor alone (\ie, with no pre-processor), where the post-processor is architecturally identical
to the post-processor in the neural sandwich, but trained to perform only super-resolution. 
It can be seen that the post-processor alone accounts for at most 2 dB of the sandwich's gains.  The substantial improvement obtained by the sandwich over the super-resolution network clearly points to {\em the importance of the neural pre-processor, the joint training of the pre- and post-processor networks}, and their ability to communicate with each other using neural codes to signal how to super-resolve the images.
For reference, the figure also shows R-D performances of compressing HR RGB images with codecs that are natively HR.  It can be seen that the sandwiched LR codecs outperform even the native HR codecs over a wide range of lower bitrates.  Comparing JPEG and HEIC results, it can be seen that the gains due to neural sandwiching 
one
are substantially retained for 
the other.

\autoref{fig:SR_images} shows examples that are instructive in understanding the advantages of the sandwich over post-processing alone.
Observe the substantial improvements obtained by the sandwiched codec over JPEG and neural post-processing: Detail is retained in the city view, %
aliasing is avoided on the building %
face
and the texture, and text is visible on the keyboard. All with substantial dB improvements (+4.5 dB, +6.5 dB, +4.5 dB over neural post-processing) at the same rate. Interestingly note the very patterned and noisy looking bottlenecks as depicted in \autoref{fig:SR_bottleneck_images1} in supplementary \autoref{sec:supp-hr_lr_images}.

\begin{figure}
  \includegraphics[width=0.3\linewidth, trim=10mm 0mm 10mm 20mm, clip]{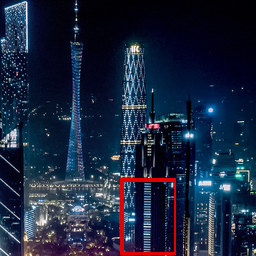}
  \includegraphics[width=0.3\linewidth, trim=10mm 0mm 10mm 20mm, clip]{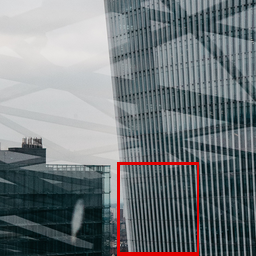}
  \includegraphics[width=0.3\linewidth, trim=10mm 0mm 10mm 20mm, clip]{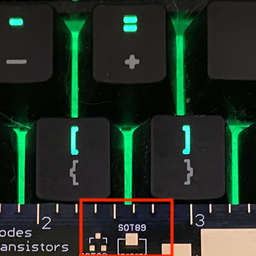}\vspace{-1.2mm}
  \centerline{\footnotesize {\bf (a)} Originals}\vspace{1mm}
  
  \includegraphics[width=0.3\linewidth, trim=10mm 0mm 10mm 20mm, clip]{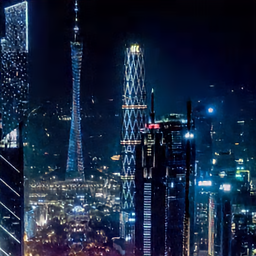}
  \includegraphics[width=0.3\linewidth, trim=10mm 0mm 10mm 20mm, clip]{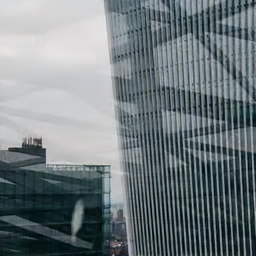}
  \includegraphics[width=0.3\linewidth, trim=10mm 0mm 10mm 20mm, clip]{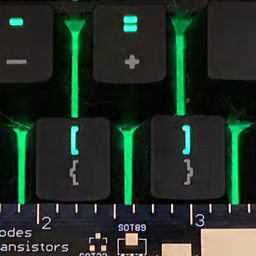}\vspace{-1.2mm}
  \centerline{\footnotesize {\bf (b)} {\em Sandwich:} (29.1 dB, 0.54 bpp),
  (32.1 dB, 0.33 bpp), (35.1 dB, 0.38 bpp)}\vspace{1mm}
  
  \includegraphics[width=0.3\linewidth, trim=10mm 0mm 10mm 20mm, clip]{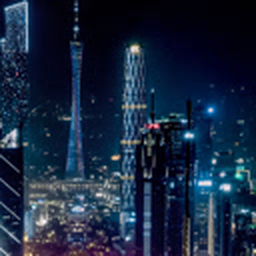}
  \includegraphics[width=0.3\linewidth, trim=10mm 0mm 10mm 20mm, clip]{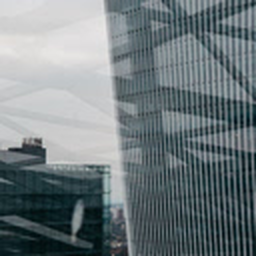}
  \includegraphics[width=0.3\linewidth, trim=10mm 0mm 10mm 20mm, clip]{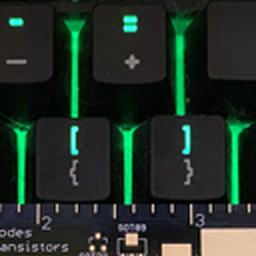}\vspace{-1.2mm}
  \centerline{\footnotesize {\bf (c)} {\em JPEG:} (23.4 dB, 0.54 bpp),
  (23.9 dB, 0.34 bpp), (26.6 dB, 0.38 bpp)}\vspace{1mm}
  
  \includegraphics[width=0.3\linewidth, trim=10mm 0mm 10mm 20mm, clip]{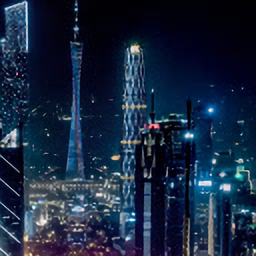}
  \includegraphics[width=0.3\linewidth, trim=10mm 0mm 10mm 20mm, clip]{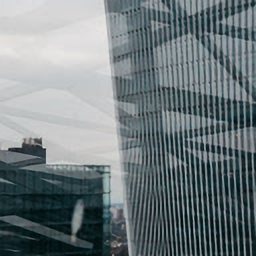}
  \includegraphics[width=0.3\linewidth, trim=10mm 0mm 10mm 20mm, clip]{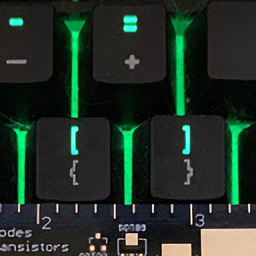}\vspace{-1.2mm}
  \centerline{\footnotesize {\bf (d)} {\em Post-Only:} (24.6 dB, 0.54 bpp),
  (25.6 dB, 0.34 bpp), (30.6 dB, 0.38 bpp)
  
  }
  
  \caption{Super-resolution sandwich of a low-res codec: Original $256\times256$ source images and reconstructions by sandwich, JPEG with linear upsampling, and JPEG with neural post-processing respectively. The regions identified in the top row show areas where detail is either lost 
  or aliased after  downsampling and LR transport. 
  Note how the sandwich output in (b) correctly transports the detail whereas JPEG and post-only recover the wrong information. The picture in the last column, while correctly transported by the sandwich, results in severe aliasing for JPEG and even further reduced performance for post-only which amplifies the aliasing. 
  The interested reader can generate an extensive set of further examples using our software at \cite{sandwich_oss}.
  }
  \label{fig:SR_images}
  \vspace{-5mm}
\end{figure}

\autoref{tab:CNN-RD-comparison} compares our HR sandwich to the closely related but independently developed work of ``CNN-RD'' \cite{EusebioAP20}, which 
also surrounds a standard codec with neural pre- and post-processors using 2x down- and up-sampling.  However their networks'
formulation and training regimen prohibits them from learning to communicate the neural codes needed to carry good HR information.  (Their post-processor is trained first to super-resolve a low-pass image; then their pre-processor is trained to minimize $D+\lambda R$ with the fixed post-processor.  This misses the main advantage of having neural pre- and post-processors.)  The table shows that 
our work has
significantly higher gains in PSNR-Y (dB) relative to the same standard codec (JPEG) on the Div2k validation image 0873 \cite{Agustsson_2017_CVPR_Workshops}.  Indeed, though not shown in the table, their solution saturates and begins to under-perform the standard codec above 0.8 bpp ($\sim$30 dB); ours out-performs the standard codec until about 2.0 bpp ($\sim$37 dB).
\vspace{-3mm}

\begin{table}[t]
    \centering
    \begin{tabular}{c|ccccccc}
    bpp & 0.2 & 0.3 & 0.4 & 0.5 & 0.6 & 0.7 & 0.8 \\ \hline
    CNN-RD\cite{EusebioAP20} & 1.58 & 1.09 & 0.67 & 0.55 & 0.33 & 0.18 & 0.15 \\
    HR sandwich & \textbf{1.59} & \textbf{1.49} & \textbf{1.42} & \textbf{1.49} & \textbf{1.49} & \textbf{1.46} & \textbf{1.69}
    \end{tabular}
    \caption{Gain in PSNR-Y (dB) over JPEG on Div2k validation image 0873.}
    \label{tab:CNN-RD-comparison}
    \vspace{-7mm}
\end{table}

\subsection{Compressing HDR RGB Images with LDR Codecs}

In this subsection, we report the rate-distortion performance of compressing high dynamic range (HDR) RGB images with lower dynamic range (LDR) standard codecs (8-bit JPEG and 8-bit HEIC), alone,  neural-sandwiched, and compared to Dequantization-Net \cite{liu2020single}.%
For the HDR simulations, we use the HDR+ dataset \cite{hasinoff2016burst}.
The HDR images have $d=16$ bits per color component, while the LDR codecs transmit only $8$ bits per color component.
By sandwiching the LDR codecs in 
a sandwich,
it is possible to signal spatially-localized tone mapping curves via neural codes from the pre-processor to the post-processor, in order to carry the least significant bits of the HDR image through the LDR codec.

\begin{figure*}[t]
\begin{minipage}{0.3\linewidth}
  \centerline{\includegraphics[width=1.0\linewidth, trim=11mm 0mm 15mm 8mm, clip]{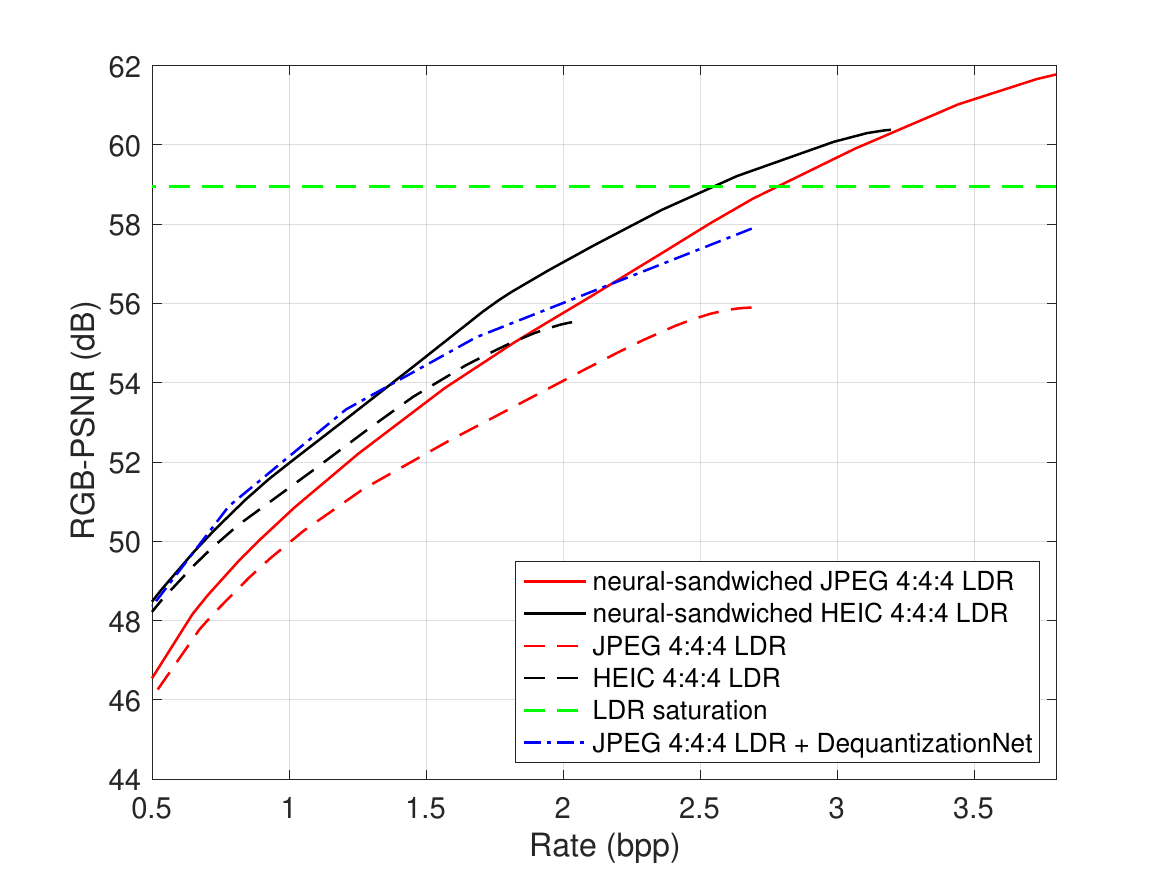}}
  \caption{R-D performances of compressing HDR 
  images
    with LDR, neural-sandwiched , and Dequantization-Net \cite{liu2020single} codecs.
  }
  \label{fig:HDR-RD}
\end{minipage} \hspace{2mm}
\begin{minipage}{0.3\linewidth}
\centering
  \includegraphics[width=1.0\linewidth, trim=11mm 3mm 15mm 8mm, clip]{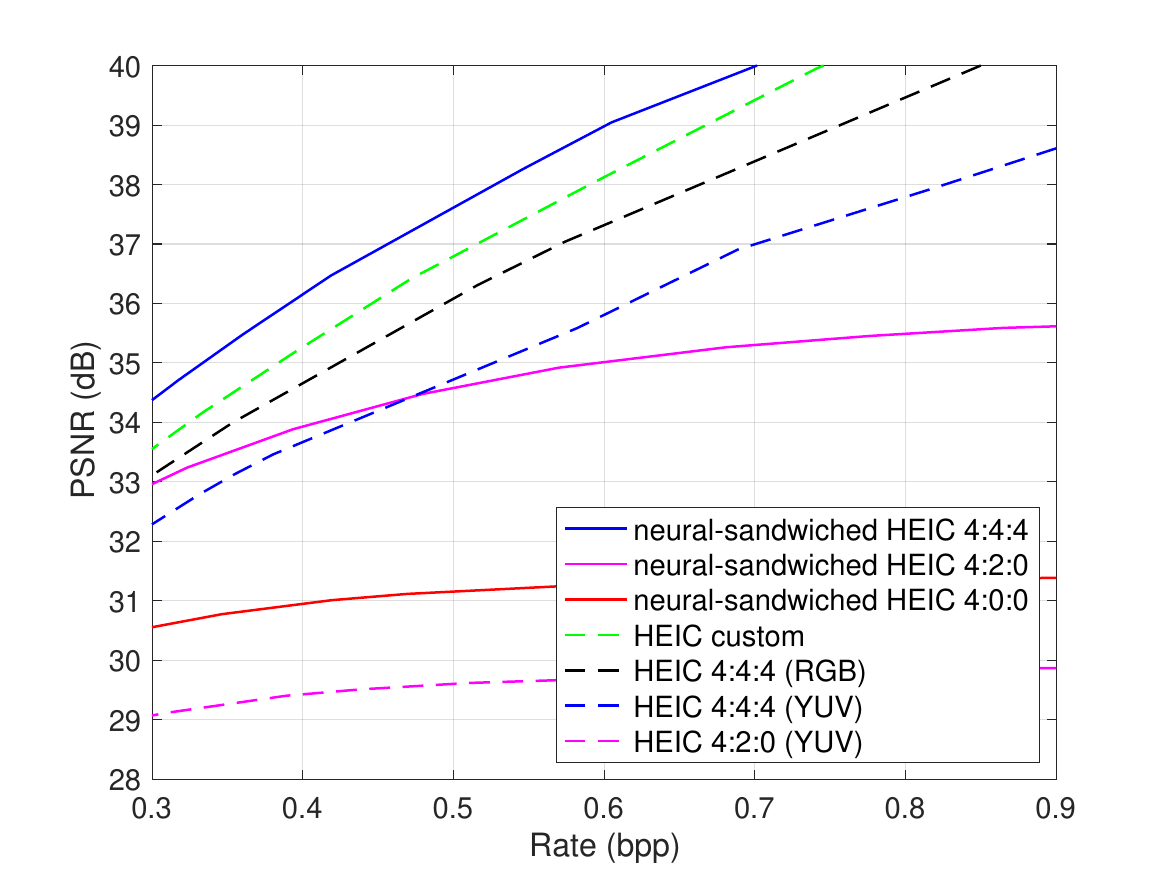}
  \caption{R-D performances of compressing normal map images with HEIC and neural-sandwiched HEIC, in various formats.
  }
  \label{fig:results_normals1}
\end{minipage} \hspace{2mm}
\begin{minipage}{0.3\linewidth}
  \centering
  \includegraphics[width=1.0\linewidth,trim=11mm 3mm 15mm 8mm,clip]{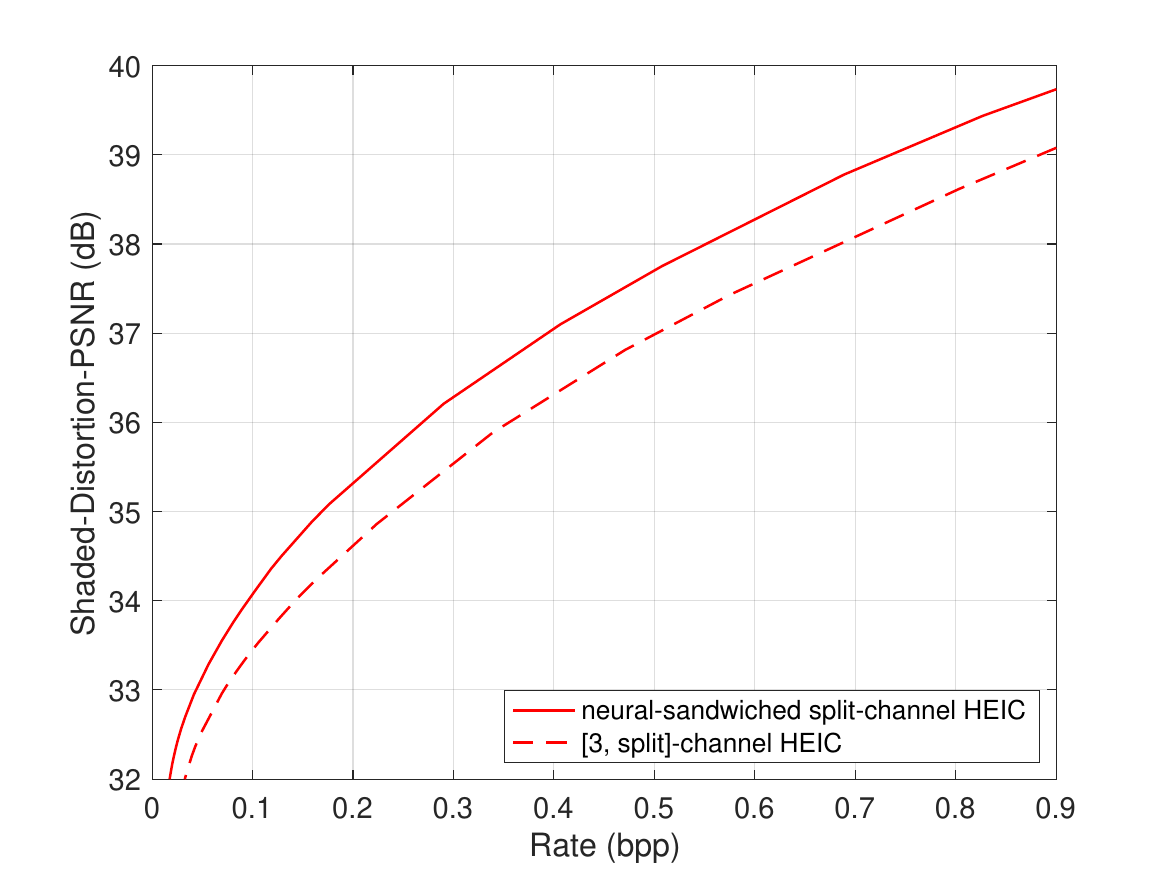}
  \caption{R-D performance of compressing 8-channel texture map images with standard codec alone and neural-sandwiched.
  }
  \label{fig:texture_map_results}
\end{minipage} \hspace{2mm}
\vspace{-5mm}
\end{figure*}
\autoref{fig:HDR-RD} shows R-D results in terms of $d$-bit RGB-PSNR vs.\ bits per pixel, evaluated on the HDR+ dataset.
The %
$d$-bit RGB-PSNR is given by
\vspace{-2mm}
\begin{equation}
    10\log_{10}\left(\left( 2^d-1 \right)^2 (3 H W) / {
    \left\Vert S-\hat S \right\Vert ^2 }\right) ,
    \vspace{-2mm}
\end{equation}
where $S$ and $\hat{S}$ are the original and reproduced $d$-bit RGB source images of size $H\times W$.
The figure shows the performance of the LDR codecs alone (8-bit JPEG and 8-bit HEIC) in comparison to the neural-sandwiched LDR codecs, as well as to JPEG post-processed with the state-of-the-art Dequantization-Net \cite{liu2020single} (trained on the same dataset). The maximum PSNR one can obtain by losslessly encoding the most significant $8$-bits is illustrated as LDR saturation. The standard codecs alone, or with the Dequantization-Net post-processor only, saturate at that level. Observe that the sandwiched codecs rise up to 3 dB above the saturation line, highlighting the importance of joint training of the pre- and post-processors and communication between them using neural codes. Unfortunately the software implementing the standard codecs precludes the transmission of higher rates. Neither our JPEG nor HEIC implementation is able to go beyond $\sim$3 bpp on average. For all R-D curves the highest rate point is where the software cuts off. Using codec implementations accomplishing higher rates, the gains of the sandwich are expected to increase further.
\vspace{-2mm}

\vspace{-3mm}
\subsection{Compressing 3-channel Normal Maps with Color Codecs}
\label{sec:normal_maps}

Many computer graphics applications periodically transport normal maps to GPUs in compressed form to decompress and accomplish sophisticated lighting effects that increase
the perceived resolution of a mesh \cite{enwiki:1191014419, BeersAC:96}.
An example normal map is shown in
\autoref{fig:texture_map_images} (b).
Each pixel stores a unit-norm vector $n=(n_x, n_y, n_z)$ representing
the tangent-space normal ($n_z\geq 0$) with respect to a 
mesh.
Note that three channels are redundant since $n_x^2+n_y^2+n_z^2=1$.
To represent these as 8-bit images, 
each channel $[-1,1]$ is mapped to RGB $[0,255]$.

\autoref{fig:results_normals1} show R-D results using HEIC 
evaluated on normal maps from the Relightables dataset \cite{relightables}.
The best results are obtained with the neural-sandwiched 4:4:4 codecs, which exploit the channel redundancy.
Neural-sandwiching codecs 
provide significant gains over their respective baselines.
YUV 4:2:0 and YUV 4:4:4 codecs use RGB-YUV conversion, which does not provide any advantage for this dataset.
Codecs with 4:4:4 (no color conv.) are better.
The best non-neural result (custom) is obtained
by zeroing out the third component $n_z$ during compression,
and recovering it in a postprocess as $\hat{n}_z = 1 - (\hat{n}_x^2 + \hat{n}_y^2)^{\frac{1}{2}}$.  %
However, the best result overall (by more than 1 dB) uses a neural-sandwiched codec with a 4:4:4 format.
Comparing HEIC results to the JPEG case (see \autoref{fig:results_normals2} for JPEG), it can again be seen that the gains due to neural sandwiching of JPEG are substantially retained for HEIC, with about 15\% reduction in bitrate compared to HEIC alone.
\vspace{-3mm}

\vspace{-1mm}
\subsection{Compressing Computer Graphics having 8-channel Texture Maps with 3-channel Color Codecs and Shaded Distortion}
\vspace{-1mm}
\begin{figure}[b]
\vspace{-4mm}
\begin{minipage}{1.0\linewidth}
\footnotesize
\centering
\begin{minipage}{0.45\linewidth}
  \centering
  \centerline{\includegraphics[width=4.3cm,trim=0cm 9.3cm 0cm 0cm,clip]{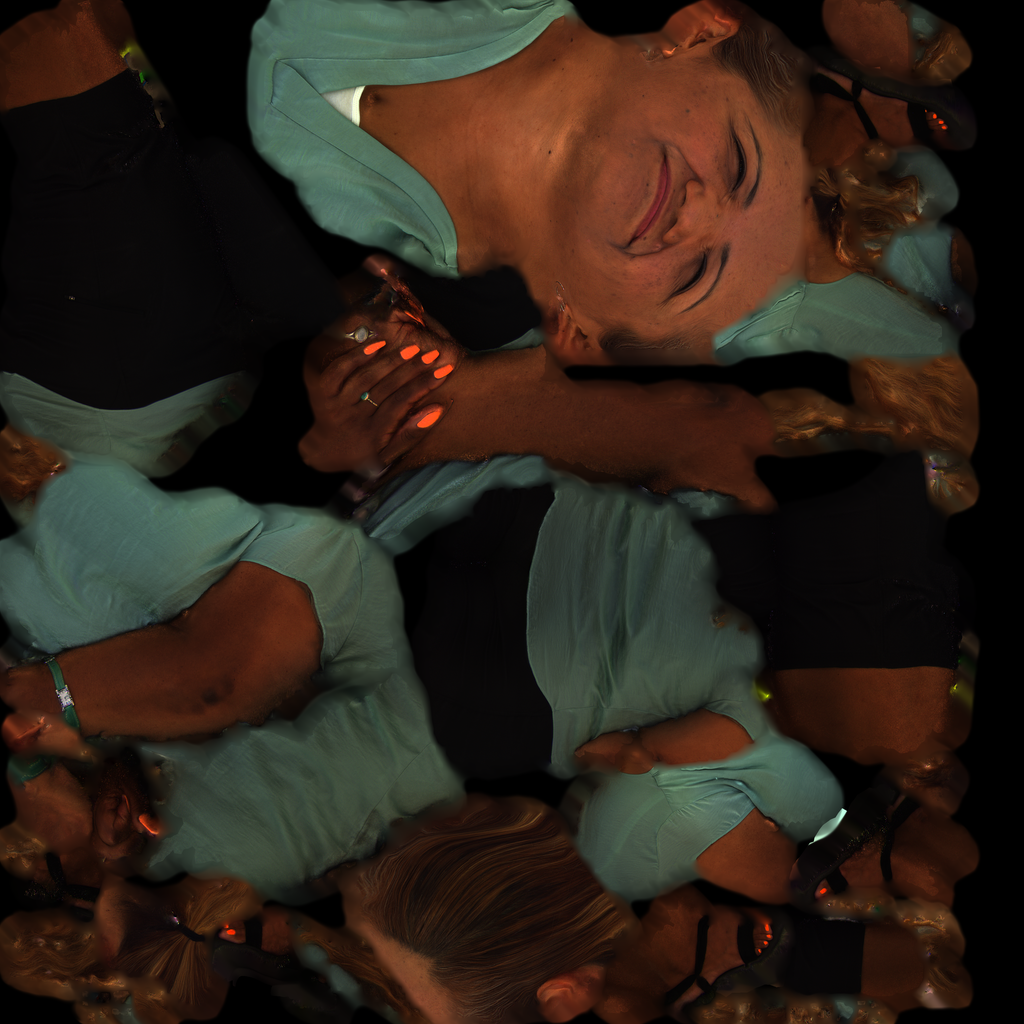}}
  \centerline{(a) 3-channel albedo}
  \vspace*{1.5mm}
\end{minipage}
\hspace{2mm}
\begin{minipage}{0.45\linewidth}
  \centering
  \centerline{\includegraphics[width=4.3cm,trim=0cm 9.3cm 0cm 0cm,clip]{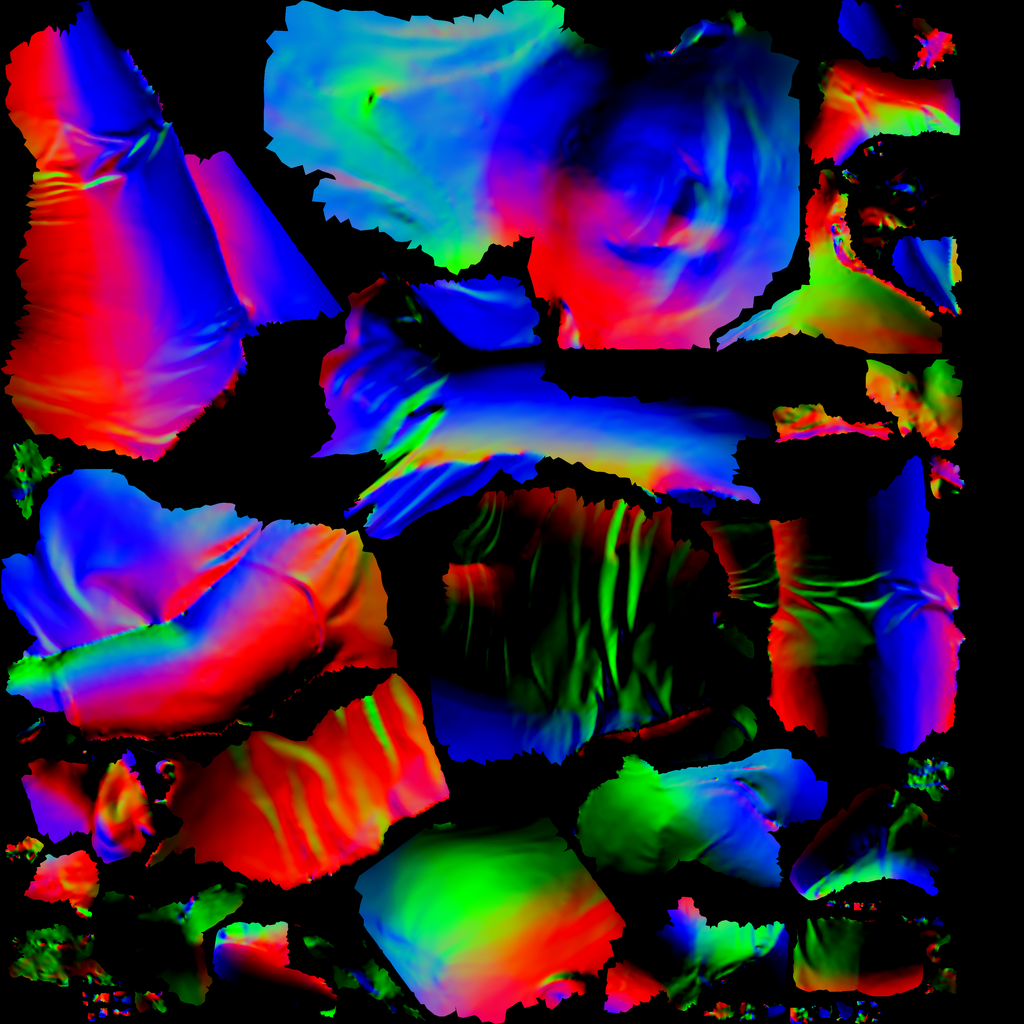}}
  \centerline{(b) 3-channel normals}
  \vspace*{1.5mm}
\end{minipage}
\begin{minipage}{0.45\linewidth}
  \centering
  \centerline{\includegraphics[width=4.3cm,trim=0cm 9.3cm 0cm 0cm,clip]{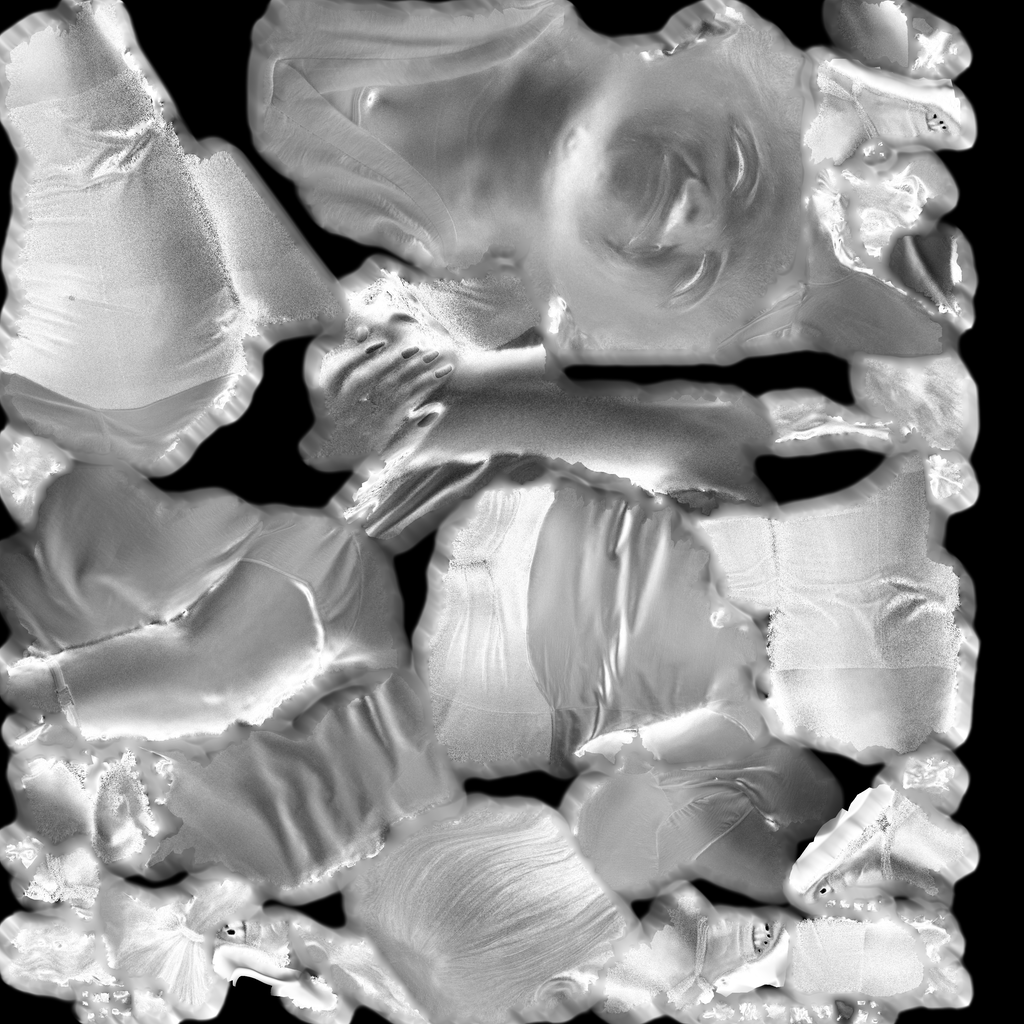}}
  \centerline{(c) 1-channel roughness~}
  \vspace*{1.5mm}
\end{minipage}
\hspace{2mm}
\begin{minipage}{0.45\linewidth}
  \centering
  \centerline{\includegraphics[width=4.3cm,trim=0cm 9.3cm 0cm 0cm,clip]{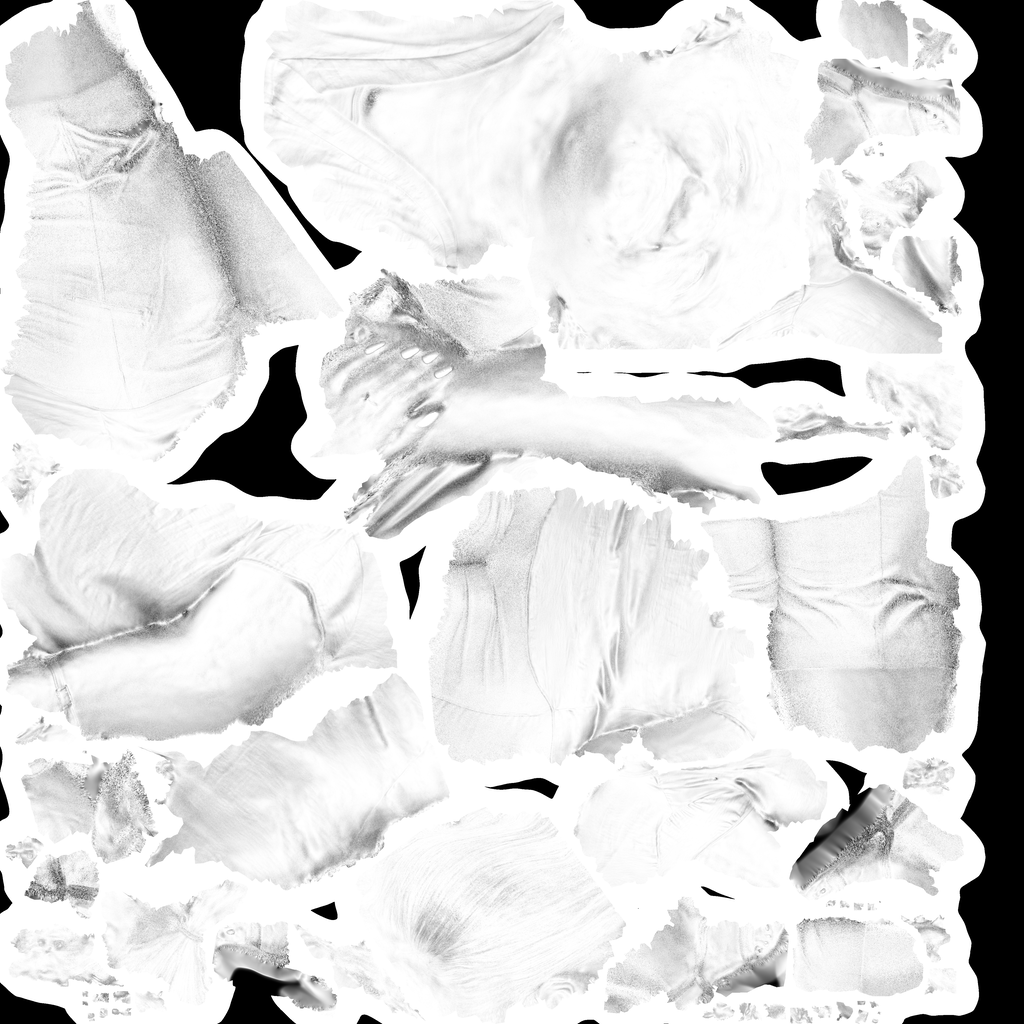}}
  \centerline{(d) 1-channel occlusion}
  \vspace*{1.5mm}
\end{minipage}
\hfill
\begin{minipage}{0.95\linewidth}
  \centerline{
  \includegraphics[width=2.83cm,trim=0cm 9.3cm 0cm 0cm,clip]{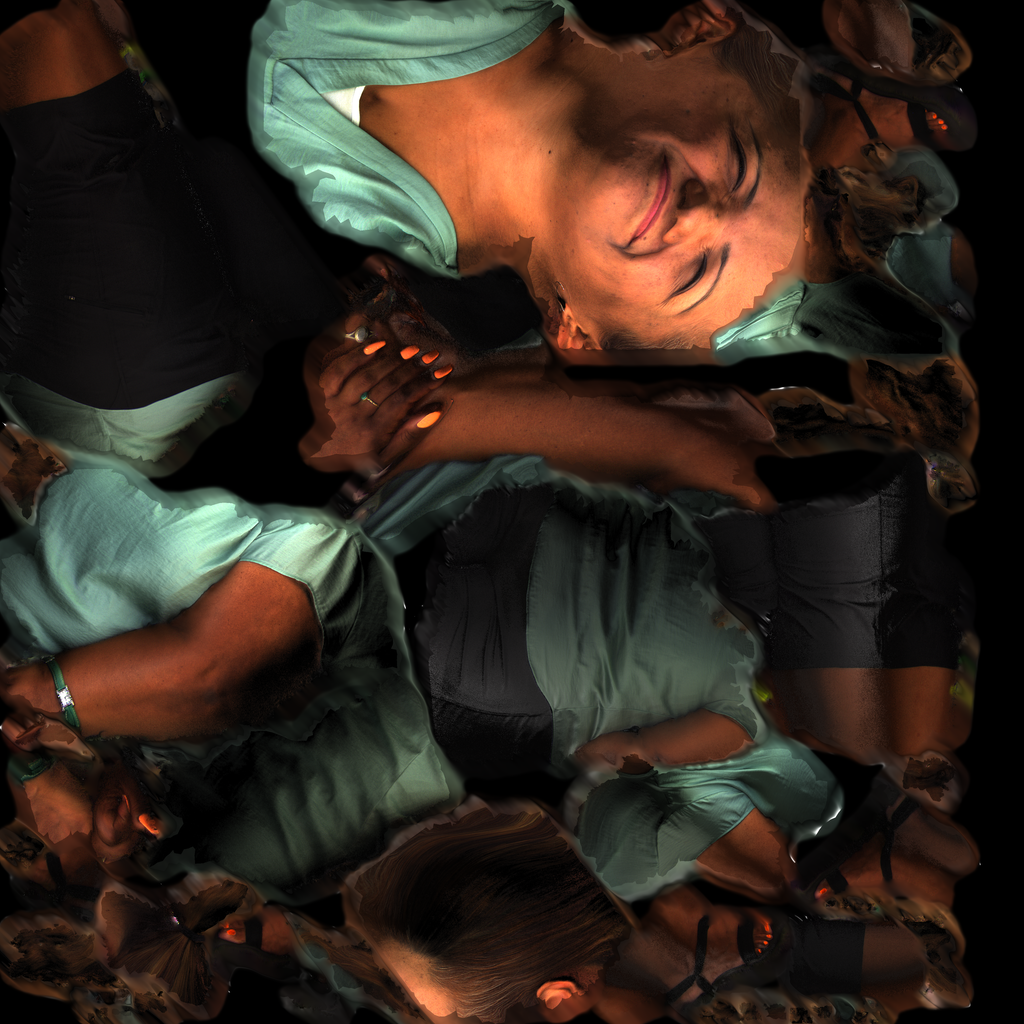}\hspace{0.5mm}%
  \includegraphics[width=2.83cm,trim=0cm 9.3cm 0cm 0cm,clip]{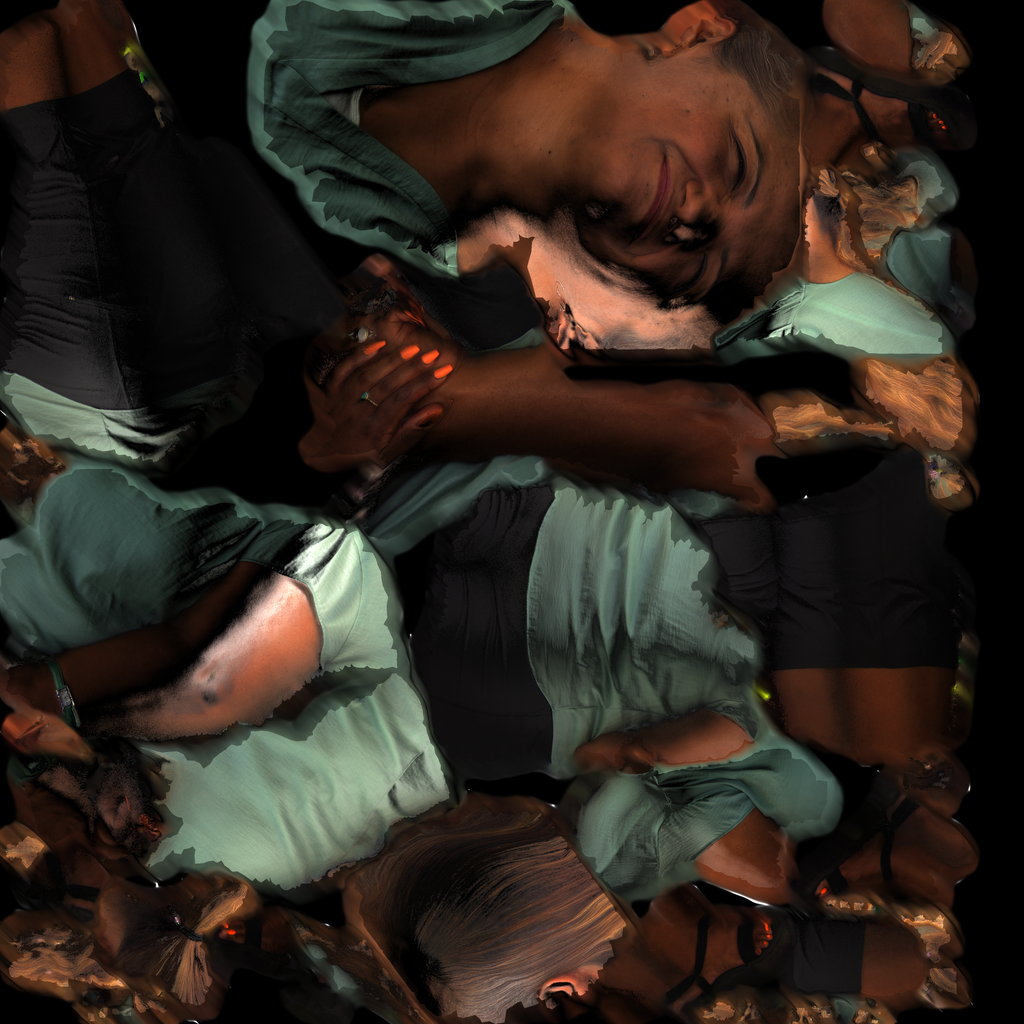}\hspace{0.5mm}%
  \includegraphics[width=2.83cm,trim=0cm 9.3cm 0cm 0cm,clip]{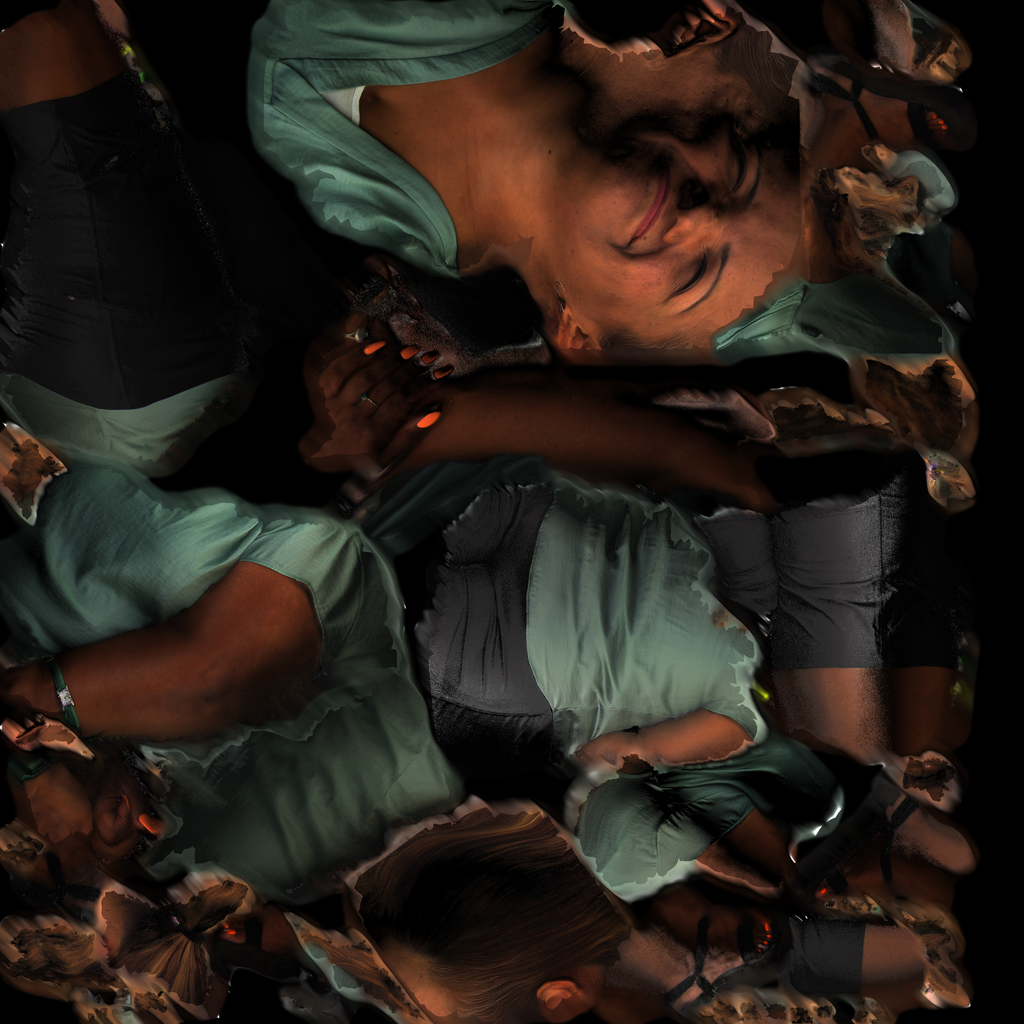}
  }
  \centerline{(e) Shaded using different viewpoints and lights}
\end{minipage}
\caption{Components of 8-channel texture map and samples of images rendered to measure shaded distortion.}
\label{fig:texture_map_images}
\vspace{-2mm}
\end{minipage} 
\end{figure}
A common technique in computer graphics is to render a surface using \emph{texture mapping},
which stores sampled surface properties in an associated texture atlas image (see \autoref{fig:texture_map_images}).
Texture maps comprise large graphics assets and are repeatedly sent to the GPU in compressed form as rendered scenes vary.
The texture map often contains not just albedo (RGB color)
but additional surface properties (\eg, surface normals, roughness, ambient occlusion)
that enable more realistic shading.
In many graphics applications texture assets are compressed with standard codecs and transported prior to final rendering.
In this section, we report rate-distortion performance of compressing $8$-channel texture maps with and without sandwiching.

We briefly review the rendering process.
To render a surface mesh from a particular view and with particular lighting,
rasterization identifies the screen-space pixels covered by the mesh triangles.
For each pixel, it obtains the interpolated 3D surface position
as well as interpolated texture coordinates.
Then, a \emph{pixel shader} computes the view direction, the light direction(s),
and the sampled texture attributes for the surface point.
The final shaded RGB color for the pixel is a complicated formula involving all of these inputs.

For compression in this scenario,
it is natural to {\em measure distortion not over the texture image values
but over the final rendered pixel colors}.
Specifically, we measure the average RGB MSE of images rendered using
a collection of typical views and lighting conditions.
We call this \emph{shaded distortion}.

In principle, it should be possible to train the neural sandwich end-to-end
by measuring shaded distortion over rendered images using a \emph{differentiable renderer}.
However, this proves challenging for the large texture map sizes (up to 4K) encountered in practice.
Instead, 
we adopt a novel approach and
measure shaded distortion in the
domain of the texture images.
That is, we compute shading using all the 3D parameters of the traditional rendering,
but output the resulting shaded colors onto an image defined over the texture domain itself
(see \autoref{fig:texture_map_images}e).
The key benefit is that the computation is local, so training can use cropped texture images.
(For final evaluation, we measure shaded distortion over traditionally rendered images.)

In our experiments, we use a texture map with $C=8$ channels: 3 RGB albedo channels, 3 normal map channels, 
1 roughness channel, and 1 occlusion channel, as illustrated in \autoref{fig:texture_map_images}.  We refer to this as a $[3,3,1,1]$-channel texture map.

\autoref{fig:texture_map_results} shows R-D results in terms of PSNR of the shaded distortion in dB vs bits per texture map pixel.
Compression with and without neural sandwiching are compared.
Without sandwiching, to compress the 8-channel texture map with a standard codec, we partition the texture map into its natural components, here with $[3,3,1,1]$ channels, and compress each component separately (with one codec using 4:4:4 format and color conversion, one codec using 4:4:4 format and no color conversion, and two codecs using 4:0:0 format).
With sandwiching, to compress the 8-channel texture map, for concreteness we choose 8-channel bottleneck images at the same resolution as the texture map, and
code each of the 8 bottleneck channels as a grayscale image using HEIC 4:0:0.
It can be seen from the figure that neural-sandwiching provides a 20-30\% reduction in bitrate compared to HEIC alone.
Clearly, neural sandwiching can be used for non-RGB images with $C>3$ channels and non-standard distortion measures.

\begin{figure}[t]
\centering
\begin{minipage}{0.7\linewidth}
  \centering
  \includegraphics
   [width=1.0\linewidth, trim=11mm 0 10mm 10mm, clip]
  {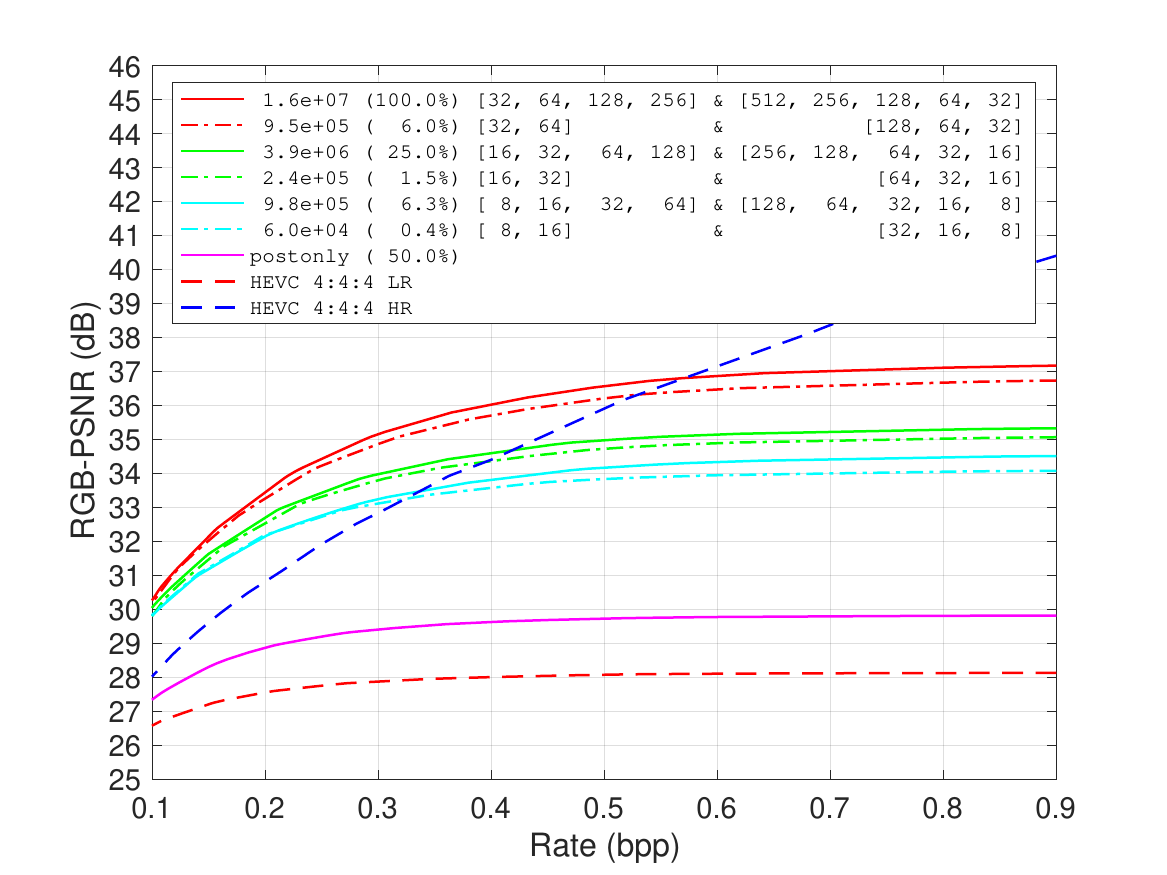}
\end{minipage}
  \caption{R-D performance of compressing HR RGB images using LR codecs, with neural sandwiches of different complexity.  Legend shows parameter counts (for pre-and-post-processor combined) %
  for different U-Nets %
  }
  \label{fig:complexity_results}
  \vspace{-5mm}
\end{figure}

\vspace{-2mm}
\section{Model Complexity}
\label{sec:complexity}
\vspace{-1mm}
In the image compression experiments of \autoref{sec:images}, we use 
an MLP in parallel with a U-Net.  The MLP is relatively simple, with two hidden layers, each with 16 hidden channels.
Most of the complexity is in the U-Net, which in its standard form \cite{UNet:15} has four 2-layer $3\times3$ convolutional ``encoder'' blocks each followed by a $2\times$ downsampling, followed by five 2-layer $3\times3$ convolutional ``decoder'' blocks separated by $2\times$ upsampling and concatenation with the output of the same-resolution encoder block.  The number of channels output from the encoder blocks is [32, 64, 128, 256], while the number of channels output from the decoder blocks is [512, 256, 128, 64, 32], denoted U-Net([32, 64, 128, 256]; [512, 256, 128, 64, 32]).  A final $3\times3$ convolutional layer produces $C_{out}$ output channels.  These tuples, along with $C_{in}$ and $C_{out}$, can be used as hyper-parameters to specify the U-Net.
\autoref{tab:complexity} illustrates the parameter and MAC-based complexity details of the UNet family explored in this paper. 
Run-times in frames-per-second over a single GPU slice are also shown. These run-times can be directly scaled up with multiple slices by spatially tiling the target frames.
\begin{table}[t]
    \centering
    \setlength{\tabcolsep}{3pt}  %
    \begin{tabular}{lr|rrr} \hline
    \multicolumn{2}{c|}{U-Net hyperparameters} & Number of & MACs & FPS\\
    encoder & decoder & parameters & per-pixel & \\ \hline
    [32, 64, 128, 256] & [512, 256, 128, 64, 32] & 7847491 & 213943 & 71 \\ \hline
    [32, 64] & [128, 64, 32] & 472387 & 112531 & 96\\ \hline
    [16, 32, 64, 128] & [256, 128, 64, 32, 16] & 1963043 & 53981 & 145\\ \hline
    [16, 32] & [64, 32, 16] & 118691 & 28619 & 192\\ \hline
    [8, 16, 32, 64] & [128, 64, 32, 16, 8] & 491347 & 13743 & 238\\ \hline
    [8, 16] & [32, 16, 8] & 29971 & 7399 & 323\\ \hline
    [32] & [32, 32] & 57219 & 43347 & 244\\ \hline
    \end{tabular}
    \vspace{1ex}
    \caption{U-Net complexity for various hyperparameters. \protect\\ Fixed hyperparameters include $C_{in}=C_{out}=3$, \protect\\ filters size = $3\times3$, and layers per block = 2. Frames-per-second (FPS) is measured for an  RGB frame ($1024\times 1024$) on a single cloud A100 slice \cite{nvidia_slice}.}
    \label{tab:complexity}
    \vspace{-7mm}
\end{table}

In this section, we 
study the trade-off between the complexity of the sandwich and its rate-distortion performance in
compressing HR RGB images with LR codecs, as detailed in \autoref{sec:hr_lr_images}.
\autoref{fig:complexity_results} shows R-D results for the HR-LR application, for neural sandwiches with different complexities, assuming the pre- and post-processors have equal complexity.  
While our default U-Net([32, 64, 128, 256]; [512, 256, 128, 64, 32]) has almost 8M parameters, many other U-Nets have many fewer parameters.
For example, U-Net([32, 64]; [128, 64, 32]), which has reduced encoder and decoder blocks, two and three respectively, 
has only 491K parameters, a mere 6\% of the parameters of the default U-Net, with less than 0.5 dB loss in R-D performance.
In contrast, reducing complexity by reducing the number of channels, e.g., U-Net([16, 32, 64, 128]; [256, 128, 64, 32, 16]), has a less desirable trade-off.

As illustrated, massive reductions in the number of parameters are possible with little loss in performance especially with U-Net([32]; [32, 32]) having about 57K parameters, i.e., {\em less than 1\% of the parameters of the default U-Net}.  
In the next section, on video experiments, we show that U-Net([32]; [32, 32]), which we call our {\em slim} network, likewise offers orders of magnitude reduction in parameters with little loss in R-D performance and significantly higher fps.
Of course, optimizing the hyper-parameters, such as the number of layers 
or the convolutional filter size, 
considering asymmetric models, using depth-separable convolutions, etc., can significantly improve complexity further.
Alternative model architectures, which are more efficient in terms of MACs, 
are explored in \cite{yueyu_icip}.  
\vspace{-3mm}

\section{Video Compression Experiments}
\label{sec:video}

\subsection{Codec Setup and Dataset}
We generated a video dataset that consists of 10-frame clips of YUV video sequences from the AOM Common Test Conditions \cite{AV2CTC}
and their associated motion flows, calculated using UFlow \cite{lodha1996uflow}. We use a batch size of 8, \ie, 8 video clips in each batch. Each clip is processed during the dataset generation step such that it has 10 frames of size 256x256, selected from video of fps 
20-40. HEVC is implemented using x265 (IPP.., single reference frame, rdoq and loop filter on.) within ffmpeg.\footnote{Observe that the rate for each clip thus reflects (i) an I-frame amortized over $10$ frames and (ii) container format metadata though we tried to minimize the latter. The reader used to video rates where these two factors are amortized over hundreds of frames should expect higher reported rate numbers per-pixel.}
For each considered scenario, the model is trained for 1000 epochs, with a learning rate of $1e^{-4}$, 
and tested on 120 test video clips. 
RD plots are generated similar to \autoref{sec:images}.
We report results in terms of YUV PSNR when using the $\ell_2$ norm and ``LPIPS (RGB) PSNR'' when using LPIPS (refer to \autoref{sec:video-lpips} for LPIPS particulars.) 

\vspace{-3mm}
\subsection{Compressing 
RGB Video with Gray\-scale Codecs}
\label{sec:video-400}
We start by considering the rate-distortion performance of the sandwich system on the toy example of transporting full-color video over a standard codec that can only carry gray-scale video (HEVC 4:0:0.) As we have already seen in \autoref{fig:modulation_dithering1}, the sandwich-introduced modulation patterns are quite pronounced in terms of spatial extent and in terms of spatial frequency for the image compression version of this scenario. The toy example here is hence especially useful for examining (i) the temporal coherence of the sandwich-introduced patterns, (ii) the role of training with and without motion flows, and (iii) understanding the role of the network receptive field
on the final compression scenario.

Video codecs provide the majority of compression gains by exploiting  temporal redundancies via motion correspondences.  Since the sandwich system uses modulation patterns for message passing and is deployed in a frame-independent fashion it is important that the patterns are introduced in a temporally consistent fashion that can be taken advantage of by standard codecs.
This is clearly visible in \autoref{fig:hevc_400}, which shows three frames of compressed bottlenecks, sandwich-reconstructions, and originals. 
The patterns smoothly move with the scene objects that they are attached to. Through many such visuals we have noted that the networks operate in a translation robust manner with patterns moving with the objects and transitioning over motion/object discontinuities. Note also that unlike the image case the patterns for video are spatially broader which precludes the need for extra accurate motion compensation.

\autoref{fig:rd_400_444} demonstrates the significant improvements (+8.5 dB) that the neural-sandwiched HEVC 4:0:0 obtains over HEVC 4:0:0. As an ablation study we trained a sandwich system using only the image codec proxy for the same scenario, shown in the Figure as ``neural-sandwiched HEVC 4:0:0 (image-proxy)''. 
This system is unaware of the motion compensation process used by the codec and assumes the codec compresses video as a sequence of INTRA frames.
Given a large enough training set, one expects this image-codec-proxy-trained sandwich to achieve translation robustness, as such a dataset depicts similar objects at many different translations. While performing worse than the video-codec-proxy-trained sandwich note that this version is also significantly better than HEVC 4:0:0  ($\sim$8 dB). This confirms the  observation that well-trained pre/post-processors accomplish translation robustness, which is of fundamental importance in the video scenario.

\autoref{fig:rd_400_444} also shows the performance of the ``slim'' simplification, again, significantly outperforming HEVC 4:0:0. Note however that this model performs 3 dB below the full model. Since the slim model is restricted to the finest two resolutions as opposed to the full model's four, its receptive field is significantly smaller than that of the full model. 
This in turn restricts its capacity to deploy spatially large patterns which appear to be advantageous in this problem.
\vspace{-4mm}

\begin{figure*}
\begin{minipage}{0.32\linewidth}
  \centering
  \includegraphics[width=5.8cm, trim=11mm 0 8mm 10mm, clip]{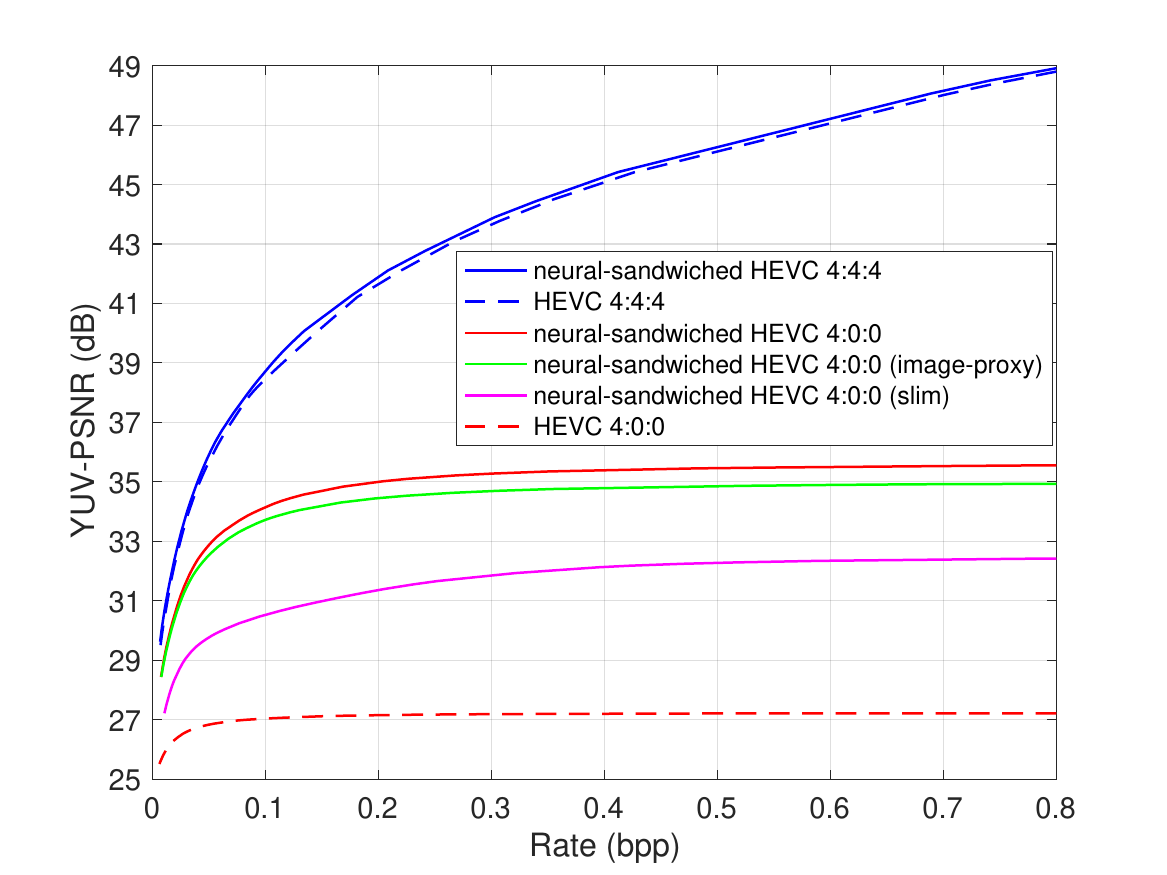}
  \caption{Video rate-distortion performance of the YUV 4:0:0 sandwich and YUV 4:4:4 sandwich.
  }
  \label{fig:rd_400_444}
\end{minipage} \hspace{2mm}
\begin{minipage}{0.32\linewidth}
  \centering
  \includegraphics[width=5.8cm, trim=11mm 0 8mm 10mm, clip]{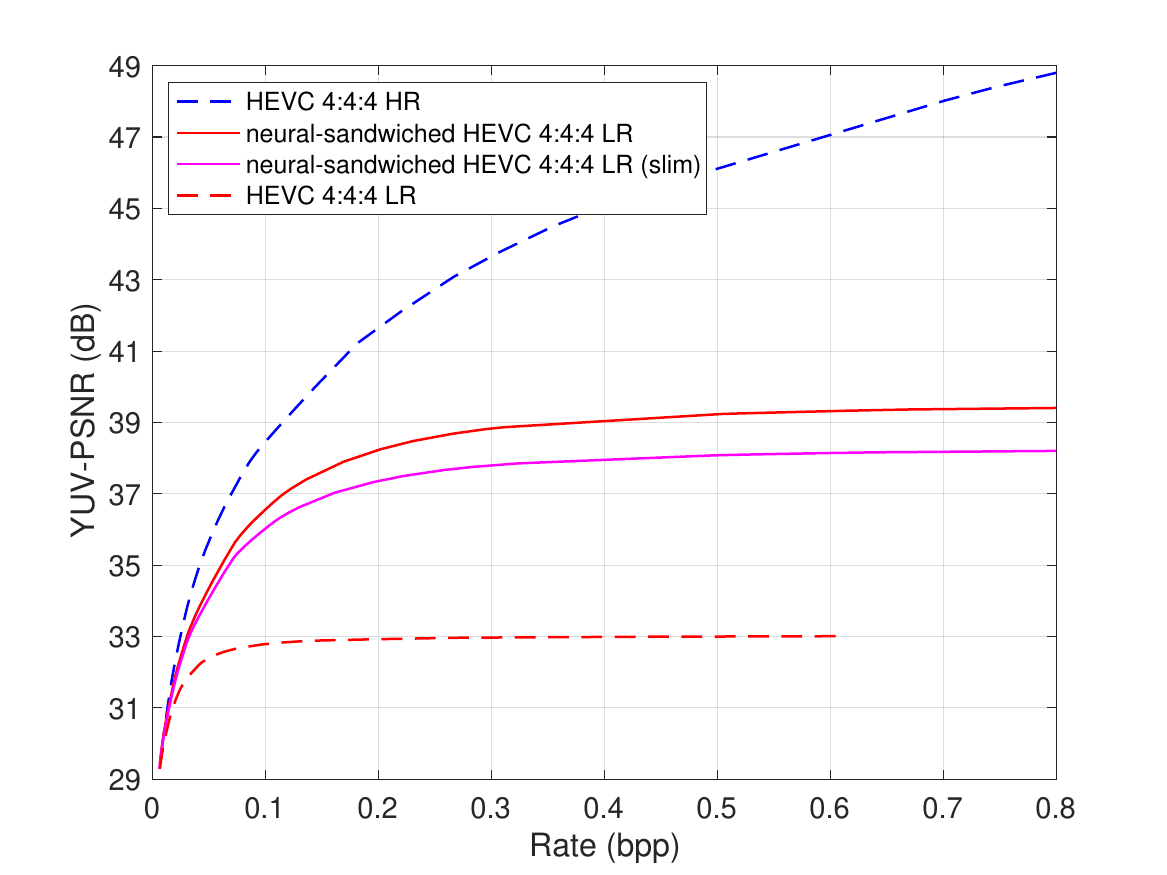}
  \caption{Video rate-distortion performance of the YUV 4:4:4 low-resolution (LR) sandwich.
  }
  \label{fig:video_hrlr_results}
  \end{minipage} \hspace{2mm}
  \begin{minipage}{0.32\linewidth}
  \centering
  \includegraphics[width=5.8cm, trim=11mm 0mm 9mm 10mm, clip]{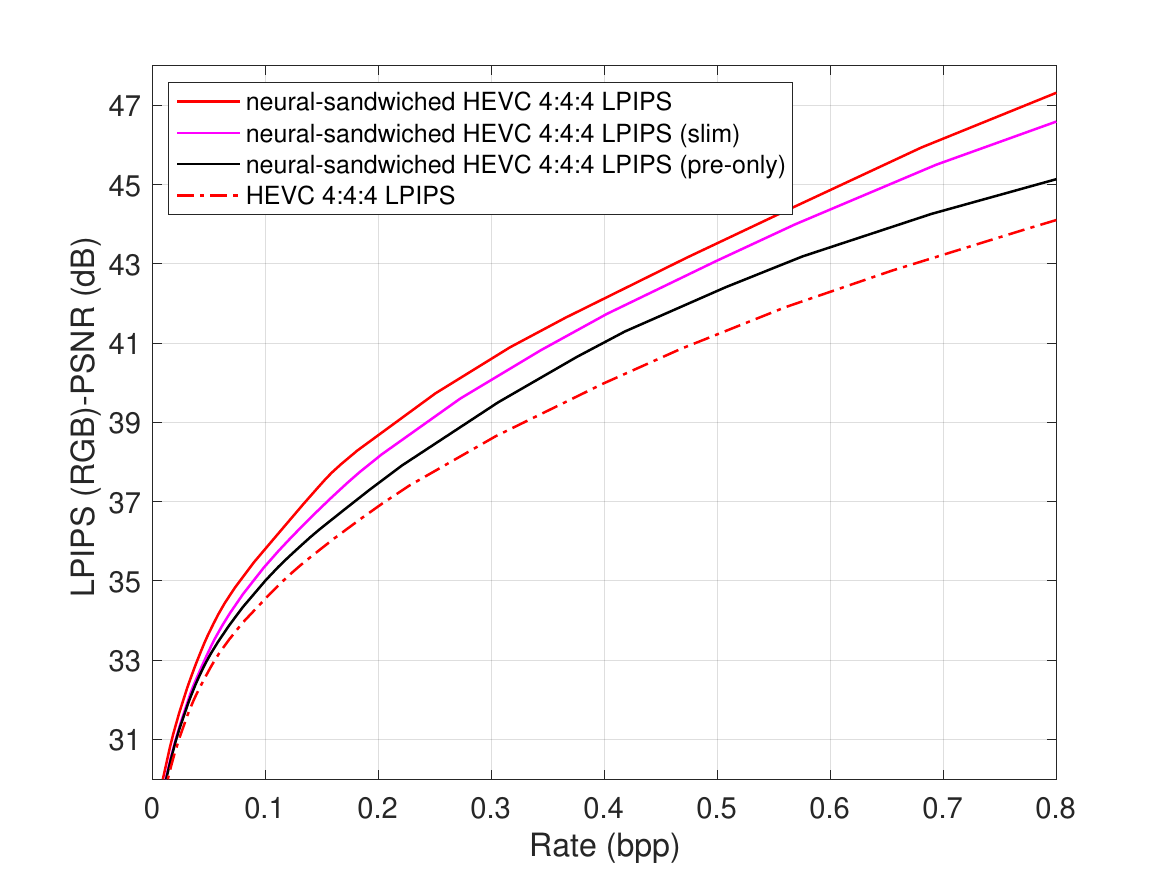}
  \caption{Video rate-distortion performance of sandwich and HEVC with LPIPS.}
  \label{fig:video_lpips_results_rgb}
  \end{minipage}
\vspace{-4mm}
\end{figure*}

\vspace{-2mm}
\subsection{Compressing 3-channel RGB Video with 3-channel Codecs}
\label{sec:rgb-video-444}

The rate-distortion performance of HEVC 4:4:4 and neural-sandwiched HEVC 4:4:4 are also included in \autoref{fig:rd_400_444}. Over a broad rate-range the sandwich readily obtains $\sim$5\% improvements in rate at the same distortion. We have observed that the loop-filter proxy included as part of the video-codec proxy is performing better than the HEVC loop-filter. In effect, rather than compensating for the less-potent HEVC loop filter, the neural-post-processor is trained assuming that the standard codec has a better loop-filter than it actually does. 
This leads to the neural-post-processor leaving some potential post-processing improvements on the table. Adjusting the loop-filter proxy to more closely mimic the HEVC loop filter is expected to marginally improve neural-sandwiched HEVC 4:4:4 results.
\vspace{-5mm}
\subsection{Compressing High Resolution (HR) RGB Video with Lower Resolution (LR) Codecs}
\label{sec:hr_lr_video}
In \autoref{sec:hr_lr_images} we have seen that the sandwich can transport high-resolution (HR) images using lower-resolution (LR) codecs and obtain massive improvements. 
Using both JPEG and HEIC, the sandwich is significantly better over {\em linear-down-codec-linear-up}  and {\em linear-down-codec-neural-up} transport schemes. 

Given the results of \autoref{sec:video-400}, \ie, that the sandwich establishes temporally coherent message passing and continues to obtain massive improvements over the gray-scale codec in the video setting, we expect the sandwich to likewise extend \autoref{sec:hr_lr_images} results to video. Not surprisingly we see this to be the case in  \autoref{fig:video_hrlr_results}, which shows the rate-distortion performance of transporting high-resolution (HR) video using a lower-resolution (LR) codec. Compared to HEVC 4:4:4 LR (Bicubic-down-HEVC4:4:4-Lanczos3-up)  the sandwich obtains more than 6 dB improvements in YUV PSNR. 
The slim model is close with $\sim$5 dB improvements. 
Note the parallels to the image case shown in 
\autoref{fig:complexity_results}.

\begin{figure}[t]
    \centering
\begin{minipage}{0.32\linewidth}
  \centering
  \includegraphics[width=\linewidth, trim=0mm 40mm 40mm 0mm, clip]{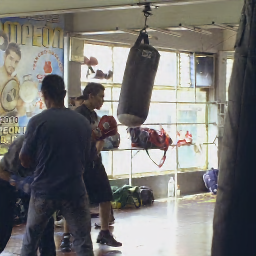}
  \centerline{\footnotesize {\bf (a)} S: 37.0 dB, 0.24 bpp}
\end{minipage} \hfill
\begin{minipage}{0.32\linewidth}
  \centering
  \includegraphics[width=\linewidth, trim=0mm 40mm 40mm 0mm, clip]{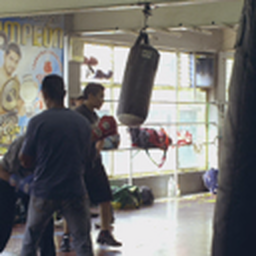}
  \centerline{\footnotesize {\bf (b)} H: 31.1 dB, 0.35 bpp}
\end{minipage} \hfill
\begin{minipage}{0.32\linewidth}
  \centering
  \includegraphics[width=\linewidth, trim=0mm 40mm 40mm 0mm, clip]{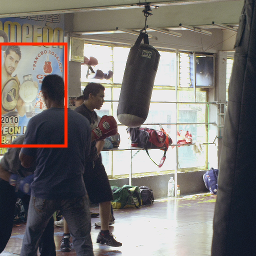} 
  \centerline{\footnotesize {\bf (c)} Original }
\end{minipage} \hfill
\\
\vspace{1mm}
\begin{minipage}{0.32\linewidth}
  \centering
  \includegraphics[width=\linewidth, trim=30mm 0mm 0mm 30mm, clip]{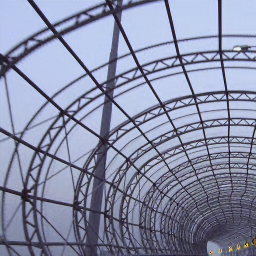}
  \centerline{\footnotesize {\bf (a)} S: 35.1 dB, 0.50 bpp}
\end{minipage} \hfill
\begin{minipage}{0.32\linewidth}
  \centering
  \includegraphics[width=\linewidth, trim=30mm 0mm 0mm 30mm, clip]{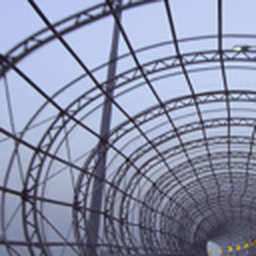}
  \centerline{\footnotesize {\bf (b)} H: 27.3 dB, 0.51 bpp}
\end{minipage} \hfill
\begin{minipage}{0.32\linewidth}
  \centering
  \includegraphics[width=\linewidth, trim=30mm 0mm 0mm 30mm, clip]{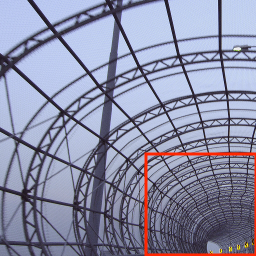} 
  \centerline{\footnotesize {\bf (c)} Original }
\end{minipage} \hfill
\caption{Sandwich results for high-resolution video transport using a lower-resolution codec (HEVC 4:4:4 LR.) (a) Sandwich, (b) HEVC. HEVC 4:4:4 LR is implemented as Bicubic downsampling, followed by HEVC 4:4:4 compression, followed by Lanczos3 upsampling. The interested reader can generate an extensive set of further examples using our software at \cite{sandwich_oss}.}
\vspace{-5mm}
\label{fig:video_hrlr_visual_results_rgb}
\end{figure}
\autoref{fig:video_hrlr_visual_results_rgb} compares the visual quality of the reconstructed sandwich clips to that of HEVC 4:4:4 LR. The INTER coded fifth frame of each clip is shown. In the first-row  note the significant amount of detail transported by the sandwich especially as depicted over the left side of the scene. HEVC 4:4:4 LR contains significant blur in those areas. The sandwich clip is 6 dB better at lower rate. In the second row note again not only the sharpness but the extra detail that the sandwich output contains especially toward the far-out points of the wire-structure. This detail, injected by the neural pre-processor and later demodulated by the neural post-processor, is simply missing from HEVC 4:4:4 LR. The sandwich is better by nearly 8 dB at the same rate.

\subsection{Compressing RGB Video with an Alternative Perceptual Metric (LPIPS)}
\label{sec:video-lpips}
\vspace{-1mm}
PSNR is well-known not to be a reliable metric for human-perceived visual quality. Alternatives such as  the Learned Perceptual Image Patch Similarity (LPIPS) 
\cite{zhang2018unreasonable, elpips, eero_lpips} 
and VMAF \cite{vmaf} are better suited for this task. While VMAF is not differentiable LPIPS is, and is thus readily suited to end-to-end gradient-based 
back-propagation. 
In this section we concentrate on LPIPS but also provide results on LPIPS-optimized sandwich networks on VMAF.

Since LPIPS is intended
for RGB, the final decoded YUV video is first converted
into RGB and then LPIPS is computed. In order to report results on
an approximately similar scale to mean-squared-error results, we derived a fixed linear scaler for
LPIPS so that for image vectors $x, y$,
\begin{eqnarray}
\vspace{-1mm}
\sLPIPS(x, y) \sim ||x-y||^2, \mbox{if } ||x-y||^2 < \tau
\vspace{-3.5mm}
\end{eqnarray}
where $\tau$ is a small threshold and $s$ is the LPIPS linear scaler. $s$ is calculated once and is fixed for all results.

As mentioned in \cite{elpips} when comparing single images there is the potential for adversarial attacks on LPIPS, \ie, an image $\tilde{x}$, unrelated to $x$ to a certain extent, can be designed to obtain $\LPIPS(\tilde{x}, y) \sim \LPIPS(x, y).$ One hence has to consider the possibility of an optimization scheme ``hacking'' the metric.  To that end \cite{elpips} recommends an ensemble LPIPS score where rather than calculating a single LPIPS score on the $x, y$ pair, one calculates several scores by randomly applying slight geometric and intensity transformations to the pair. These scores are then averaged for the ensemble score. This ensemble score is shown to be robust to adversarial attacks. 

\begin{figure}[t]
    \centering
\begin{minipage}{0.32\linewidth}
  \centering
  \includegraphics[width=\linewidth, trim=12mm 12mm 12mm 12mm, clip]{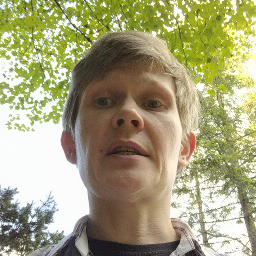}
  \centerline{\footnotesize {\bf (a)} S: 40.0 dB, 0.68 bpp}
\end{minipage} \hfill
\begin{minipage}{0.32\linewidth}
  \centering
  \includegraphics[width=\linewidth, trim=12mm 12mm 12mm 12mm, clip]{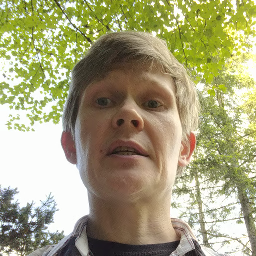}
  \centerline{\footnotesize {\bf (b)} H: 40.5 dB, 1.01 bpp}
\end{minipage} \hfill
\begin{minipage}{0.32\linewidth}
  \centering
  \includegraphics[width=\linewidth, trim=12mm 12mm 12mm 12mm, clip]{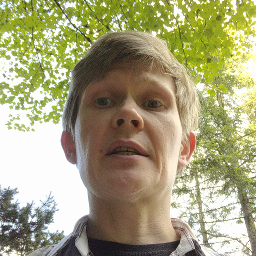}
  \centerline{\footnotesize {\bf (c)} Original}
\end{minipage} \hfill
    \caption{Sandwich results for the LPIPS scenario (5$^{th}$ frame in each clip is shown.). S: Sandwiched HEVC results, H: HEVC results. It is difficult to find significant differences among the results at 40 dB LPIPS-PSNR. The sandwich clip has $\sim$32\% lower rate than the HEVC clip.}
    \label{fig:hevc_444_lpips}
    \vspace{-4mm}
\end{figure}

By their very nature, video clips are typically versions of scenes with geometric deformations (scene motion) and color/brightness changes (scene lighting changes.)  
\begin{figure}[t]
    \centering
\begin{minipage}{0.32\linewidth}
  \centering
  \includegraphics[width=\linewidth, trim=12mm 12mm 12mm 12mm, clip]{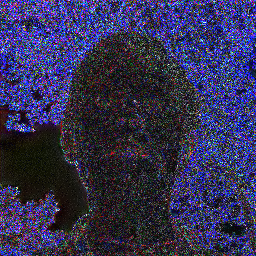}
  \centerline{\small {\bf (a)} S: 40.0 dB, 0.68 bpp}
\end{minipage}
\hspace{2em}
\begin{minipage}{0.32\linewidth}
  \centering
  \includegraphics[width=\linewidth, trim=12mm 12mm 12mm 12mm, clip]{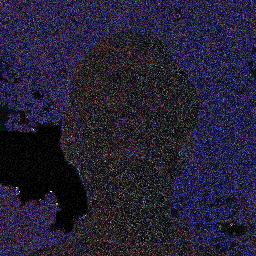}
  \centerline{\small {\bf (b)} H: 40.5 dB, 1.01 bpp}
\end{minipage} 
    \caption{Absolute errors (15x amplified) of the sandwich and HEVC frames depicted in \autoref{fig:hevc_444_lpips}. The sandwich frame has higher errors which are nevertheless difficult to perceive in \autoref{fig:hevc_444_lpips} as the visually important structures such as edges are well-preserved.}
    \vspace{-3mm}
    \label{fig:hevc_444_lpips_errors}
\end{figure}
We hence report the LPIPS loss of the $n^{th}$ clip as the average of the LPIPS losses over its $T=10$ frames, \ie,
\vspace{-1mm}
\begin{eqnarray}
D_n = \frac{1}{T} \sum_{t=0}^{T-1} \sLPIPS (x_n(t), y_n(t)),
\label{eq:lpips_clip}
\vspace{-3mm}
\end{eqnarray}
where $x_n(t), y_n(t)$ are the $t^{th}$ frames of the decoded and original clips respectively. We expect this averaging process to help improve the robustness of the score but we also (i) evaluate visual quality, (ii) show that the slim network obtains similar performance (with substantially reduced parameters the slim network has less room for hacking the metric,) and (iii) report VMAF results of the LPIPS-optimized sandwich on an especially meaningful scenario. In what follows we report ``LPIPS (RGB) PSNR'' which is the PSNR of the relevant averaged LPIPS loss.

\begin{figure}[t]
    \centering
\begin{minipage}{0.3\linewidth}
  \centering
  \includegraphics[width=\linewidth, trim=15mm 15mm 15mm 15mm, clip]{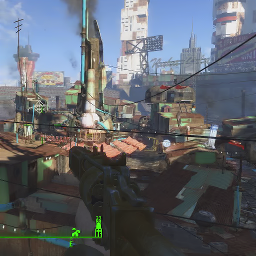}
  \centerline{\footnotesize {\bf (a)} S: 37.3 dB, 0.30 bpp}
\end{minipage} \hfill
\begin{minipage}{0.3\linewidth}
  \centering
  \includegraphics[width=\linewidth, trim=15mm 15mm 15mm 15mm, clip]{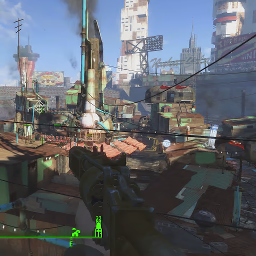}
  \centerline{\footnotesize  {\bf (b)} H: 37.4 dB, 0.48 bpp}
\end{minipage} \hfill
\begin{minipage}{0.3\linewidth}
  \centering
  \includegraphics[width=\linewidth, trim=15mm 15mm 15mm 15mm, clip]{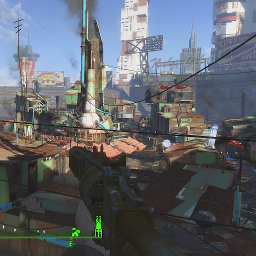}
  \centerline{\footnotesize  {\bf (c)} Original}
\end{minipage} \hfill
\\
\vspace{1mm}
\begin{minipage}{0.3\linewidth}
  \centering
  \includegraphics[width=\linewidth, trim=15mm 15mm 15mm 15mm, clip]{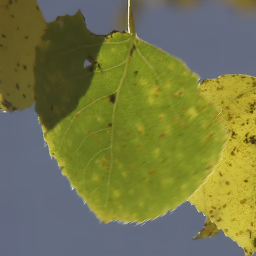}
  \centerline{\footnotesize {\bf (a)} S: 37.8 dB, 0.17 bpp}
\end{minipage} \hfill
\begin{minipage}{0.3\linewidth}
  \centering
  \includegraphics[width=\linewidth, trim=15mm 15mm 15mm 15mm, clip]{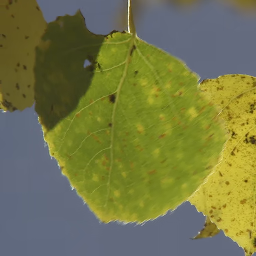}
  \centerline{\footnotesize {\bf (b)} H: 38.0 dB, 0.22 bpp}
\end{minipage} \hfill
\begin{minipage}{0.3\linewidth}
  \centering
  \includegraphics[width=\linewidth, trim=15mm 15mm 15mm 15mm, clip]{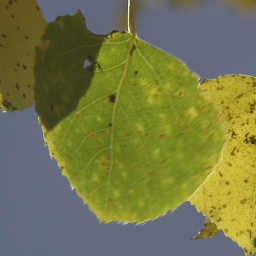}
  \centerline{\footnotesize {\bf (c)} Original}
\end{minipage} \hfill
\\
\vspace{1mm}
\begin{minipage}{0.3\linewidth}
  \centering
  \includegraphics[width=\linewidth, trim=15mm 15mm 15mm 15mm, clip]{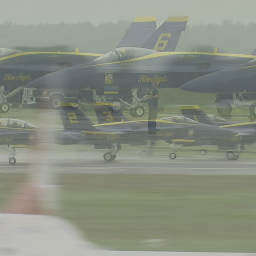}
  \centerline{\footnotesize {\bf (a)} S: 38.5 dB, 0.14 bpp}
\end{minipage} \hfill
\begin{minipage}{0.3\linewidth}
  \centering
  \includegraphics[width=\linewidth, trim=15mm 15mm 15mm 15mm, clip]{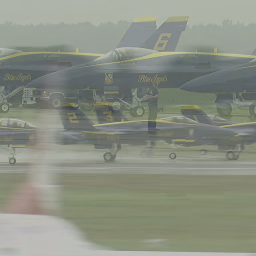}
  \centerline{\footnotesize {\bf (b)} H: 38.4 dB, 0.15 bpp}
\end{minipage} \hfill
\begin{minipage}{0.3\linewidth}
  \centering
  \includegraphics[width=\linewidth, trim=15mm 15mm 15mm 15mm, clip]{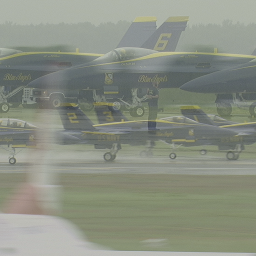}
  \centerline{\footnotesize {\bf (c)} Original}
\end{minipage}
    \caption{Qualifying the sandwich rate gains at same LPIPS quality. S: Sandwiched HEVC results, H: HEVC results. Top row: On video clips dense with high frequencies and textures the sandwich obtains the most significant gains ($\sim37.6\%$ for this clip).
    Middle row: On clips with lesser but still significant high-frequency content gains are reduced but remain significant ($\sim22.6\%$.)
    Bottom row: On clips showing smooth and blurry regions gains are further reduced ($\sim5.12\%$.) (See \autoref{fig:hevc_444_lpips_hi_mid_lo_large} for a larger version.)}
    \label{fig:hevc_444_lpips_hi_mid_lo_small}
    \vspace{-5mm}
\end{figure}

\autoref{fig:video_lpips_results_rgb} compares the rate-distortion performance of the neural-sandwiched HEVC against HEVC with clip distortion measured via (\ref{eq:lpips_clip}). Observe that the sandwich with the full model obtains $\sim$30\% improvements in rate at the same LPIPS quality. The slim model closely tracks these results with $\sim$20-25\% improvements. Lastly we see that a pre-processor only variant, where we have disabled the neural-post-processor 
and trained only a pre-processor, %
obtains $\sim$10-15\%. 
Considering that a new generation standard codec typically improves $\sim$30\% over the previous generation, one can see that {\em the full and slim networks are offering generational improvements assuming LPIPS accurately represents human-perceived quality}. As interestingly, the pre-processor-only result indicates that a video streaming service can potentially reduce bandwidth by 10-15\% in a  way transparent to its users' decoders. 

In order to vet the correspondence of LPIPS and visual quality we subjectively examined the clips of the neural-sandwiched HEVC and HEVC decoded at the same LPIPS quality. At high LPIPS quality levels we found no significant differences among sandwich, HEVC, and original clips. 
A sample from such a clip is shown in  \autoref{fig:hevc_444_lpips}, where all three samples look similar while the sandwich clip has $\sim$32\% less rate than the HEVC clip.  \autoref{fig:hevc_444_lpips_errors} in fact shows that the sandwich output contains more absolute-errors. It is nevertheless hard to  discern visual quality differences.
At intermediate to low quality levels it is difficult to pick between the sandwich and HEVC clips though differences to the original clip become more noticeable as seen in 
\autoref{fig:hevc_444_lpips_low_rates} in the supplementary \autoref{sec:supp-lpips}.
Quality degrades in expected ways and the sandwich retains rate improvements at the same LPIPS quality until the low quality regime. 
\autoref{fig:hevc_444_lpips_hi_mid_lo_small}
explores the types of clip and region statistics where LPIPS may be enabling the gains. See supplementary \autoref{sec:supp-lpips} for expanded discussion.

Last but not least, especially since VMAF is typically used to gauge quality in the streaming scenario, we also evaluated the VMAF scores of the sandwich pre-processor-only network. 
\autoref{fig:video_vmaf_results_rgb} in the supplementary shows that the pre-processor-only network (optimized for LPIPS with 10-15\% improvements) obtains $\sim$10\% improvements in rate at the same VMAF quality. While we have not done so this agreement between the metrics suggests training the sandwich for LPIPS, evaluating the models during training for VMAF as well, and picking a model that has acceptable improvements for both.

\vspace{-3mm}
\section{Discussion and Conclusion}
\label{sec:discussion}

In this paper, we have proposed sandwiching standard image and video codecs between neural pre- and post-processors,
jointly trained through a proxy.
Remarkably, the neural pre- and post-processor learn to communicate
{\em source} images to each other
by sending {\em coded} images through the standard codec.
The coded images may have fewer channels, lower resolution, and lower dynamic range than the source images that they represent.
Yet, even though the coded images are quantized by the standard codec to a PSNR commensurate with the bitrate, the source images achieve superior fidelity in the same distortion measure
or in an alternative distortion measure, at that bitrate.

While the sandwich architecture improves upon the standard codec's compression of typical color images under the MSE distortion, the strength of the %
architecture is that it allows the standard codec to adapt to coding non-typical images under
non-typical distortion measures.  
We have provided an extensive set of 
simulation results clearly demonstrating the value of the sandwich. Nevertheless our examples are not intended to be exhaustive.
We leave adaptation to medical, hyper-spectral, or other multi-channel imagery, to higher temporal resolution, and so on for future research.

Our experimental results consistently show that the pre- and post-processors jointly trained with our proxies can seamlessly be used to sandwich different codecs (JPEG, HEIC, HEVC) without retraining, obtaining
significant
rate-distortion gains compared to the non-sandwiched codec.
Furthermore the sandwich gains have been shown to extend to wrapping VVC and AV1 codecs with low-complexity models \cite{yueyu_icip}, with a level of complexity that can allow neural processing to be included in next generation video compression standards such as the upcoming AV2.
One can of course further improve these results by freezing the pre-processor after the generic-proxy-based training and fine-tune the post-processor using compressed data with a particular codec.

As standard codecs continue to advance 
one day the standard codec itself may be differentiable.  For example, it may become an end-to-end neural codec.  
It is important to note that even in that case, the sandwich architecture would remain a valid way to 
{\em re-purpose} the standard codec to alternative sources and distortion measures as pointed by \autoref{thm:proposition1}.  

We advocate that future image and video codec standards be designed with sandwiching in mind.  
As we have seen, it is possible to adapt codecs to altogether new image and distortion types. This can for example be accomplished by specifying a simple neural post-processor in the compressed bit stream header.
More generally, such a technique could be used at any level of the bit stream, \eg, GOP-level, picture-level, etc.,
to signal that specific neural processors be used in the decoder,  matched to specific neural processors used in the encoder, which have been trained to communicate %
through learned neural codes.  In these ways, we advocate for making standard codecs more universal and thus more broadly applicable to alternative source types and distortion measures, such as in computer graphics, augmented/virtual reality, medical imaging, multi-modal sensing for autonomous driving, and so forth. 

Our results clearly indicate that MSE-optimized codecs can be easily repurposed to other metrics and scenarios. 
Hence, despite the reputation of MSE as an inadequate visual quality metric, the call for replacing it with other metrics in standard codecs %
may not be clear-cut.
Given its ease of optimization in incrementally furthering individual compression tools it may in fact remain the metric of choice in designing inner compression engines to be generalized as needed by sandwiching.
\vspace{-8mm}

\bibliographystyle{IEEEtran}
\bibliography{sandwich_journal}

\clearpage
\pagebreak
\setcounter{page}{1}
\section{Supplementary Material}
\beginsupplement

\subsection{Theoretical Limits of Neural Sandwiching (Suppl.\ to Sec.~\ref{sec:prelude})}
\label{sec:theory}

In this section, we explore the theoretical limits of neural sandwiching.  In particular, we prove  Proposition~\ref{thm:proposition1_stronger}, which is stronger than Proposition~\ref{thm:proposition1} of \autoref{sec:prelude} in that it adds an arbitrary permutation.
We conclude the section with an algorithm for learning
the discussed optimal vector quantizer that accomplishes the optimal sandwiched codec from data.

\begin{proposition}[Stronger form of Proposition~\ref{thm:proposition1} - Optimal Sandwich]
\label{thm:proposition1_stronger}
Let $X$ be a $\mathbb{R}^n$-valued bounded source, let $d$ be a distortion measure, and let $D(R)$ be the operational distortion-rate function for $X$ under $d$.  For any  $\epsilon>0$, let $(\alpha^*,\beta^*,\gamma^*)$ be a rate-$R$ codec for $X$ achieving $D(R)$ within $\epsilon/2$.
Let
$(\alpha,\beta,\gamma)$ be a regular codec (\eg, a standard codec, possibly designed for a different source and different distortion measure) with bounded codelengths.  Then for any permutation $\pi$, there exist neural pre- and post-processors $f$ and $g$ such that the codec sandwich $(\alpha\circ f,g\circ\beta,\gamma)$ has expected distortion at most $D(R)+\epsilon$ and expected rate at most $R+D(p||q)+\epsilon$, where $p(k)=P(\{\alpha^*(X)=k\})$ and $q(k)=2^{-|\gamma(\pi(k))|}$.
\end{proposition}

\begin{remark}
$D(p||q)$ is the worst-case rate penalty for having to re-use the entropy coder from the standard codec.
\end{remark}

\begin{remark}
By optimizing over the permutation $\pi$, the rate penalty $D(p||q)$ may be minimized.  Indeed, the penalty is minimized when the codelengths $|\gamma(\pi(k))|$ are sorted in the same order as the codelengths $|\gamma^*(k)|$. (See Prop.~\ref{thm:sorting}.)
\end{remark}

First some definitions:  A $\mathbb{R}^n$-valued {\em source} $X$ is a random vector (\eg, an image), where $n$ is the dimension of the source (\eg, the number of pixels in the image).  The source is {\em bounded} if for some finite bound $b$, $X\in[-b,b]^n$ with probability 1. A {\em codec} for $X$ is given by a triple $(\alpha,\beta,\gamma)$, where the {\em encoder} $\alpha:\mathbb{R}^n\rightarrow\mathcal{K}$ maps each source vector $x\in\mathbb{R}^n$ to a index $k\in\mathcal{K}$, the {\em decoder} $\beta:\mathcal{K}\rightarrow\mathbb{R}^n$ maps each index $k\in\mathcal{K}$ to a reproduction vector $\hat x\in\mathbb{R}^n$; and the {\em lossless encoder} $\gamma:\mathcal{K}\rightarrow\mathcal{C}$ invertibly maps each element of $\mathcal{K}$ to a binary string in a {\em codebook} $\mathcal{C}\subset\{0,1\}^*$ of variable-length binary strings satisfying the Kraft inequality, $\sum_{k\in\mathcal{K}}2^{-|\gamma(k)|}\leq1$, where $|s|$ denotes the length in bits of the binary string $s$.  The Kraft inequality guarantees the existence of a prefix-free, and hence uniquely decodable (\ie, invertible), binary lossless encoder $\gamma$ \cite{cover2012elements}.  Alternatively $\gamma$ may be considered an arithmetic coder or other entropy coder with nominal codelengths $\{|\gamma(k)|\}$.  A {\em quantization cell} is the set $\{x:\alpha(x)=k\}$ of source vectors encoding to index $k$.  A codec is {\em regular} if each of its quantization cells is a non-degenerate polytope (\ie, the intersection of half-spaces with non-empty interior).  Codecs based on scalar quantization (\eg, transform coders) as well as nearest-neighbor quantizers are all regular.  A {\em distortion measure} $d:\mathbb{R}^n\times\mathbb{R}^n\rightarrow\mathbb{R}^+$ maps a source vector $x$ and its reproduction, say $\hat x=\beta(\alpha(x))$, to a non-negative number.  The expected {\em distortion} of the codec is
\begin{equation}
    D(\alpha,\beta) = E[d(X,\beta(\alpha(X)))],
\end{equation}
and the expected {\em rate} of the codec is
\begin{equation}
    R(\alpha,\gamma) = E[|\gamma(\alpha(X))|].
\end{equation}
The {\em operational distortion-rate function} for $X$ under $d$ is
\begin{equation}
    D(R) = \inf_{\alpha,\beta,\gamma} \left\{
    D(\alpha,\beta): R(\alpha,\gamma)\leq R]
    \right\}.
\end{equation}

\begin{proof}
Given the $\mathbb{R}^n$-valued bounded source $X$, the distortion measure $d$, the operational distortion-rate function $D(R)$ for $X$ under $d$, and any $\epsilon>0$, let $(\alpha^*,\beta^*,\gamma^*)$ be a near-optimal codec at rate $R$, such that
\begin{eqnarray}
    D(\alpha^*,\beta^*) & \leq & D(R) + \epsilon/2
    \label{eqn:D-alpha-star-beta-star} \\
    R(\alpha^*,\gamma^*) & \leq & R .
    \label{eqn:R-alpha-star-gamma-star}
\end{eqnarray}
Now given a regular codec $(\alpha,\beta,\gamma)$ generally not for the source $X$ but for some other source $Y$, which may be $\mathbb{R}^m$-valued, we need to find a neural pre-processor $f:\mathbb{R}^n\rightarrow\mathbb{R}^m$ and a neural post-processor $g:\mathbb{R}^m\rightarrow\mathbb{R}^n$ such that the composition $\alpha\circ f$ and the composition $g\circ\beta$ satisfy
\begin{eqnarray}
D(\alpha\circ f,g\circ\beta) & \leq & D(R) + \epsilon
\label{eqn:D-alpha-f-g-beta} \\
R(\alpha\circ f,\gamma) & \leq & R + D(p||q) + \epsilon
\label{eqn:R-alpha-f-gamma} ,
\end{eqnarray}
where $p(k)=P(\{\alpha(f(X))=k\})$ and $q(k)=2^{-|\gamma(\pi(k))|}$.

To accomplish this, we will find $f$ and $g$ so that $\alpha\circ f$ approximates the near-optimal encoder $\alpha^*$, and $g\circ\beta$ approximates the near-optimal decoder $\beta^*$.
Optionally we may also find a permutation $\pi:\mathcal{K}\rightarrow\mathcal{K}$ such that the composition $\gamma\circ\pi$ approximates the optimal lossless encoder $\gamma^*$.  Such a permutation minimizes the bound $D(p||q)$.

First we will prove (\ref{eqn:D-alpha-f-g-beta}) by showing that if the approximation is good enough, then the expected distortion increases by at most $\epsilon/2$, namely
\begin{equation}
    D(\alpha\circ f,g\circ\beta) \leq D(\alpha^*,\beta^*) + \epsilon/2 .
    \label{eqn:D-alpha-approx-beta-approx}
\end{equation}
Together, (\ref{eqn:D-alpha-star-beta-star}) and (\ref{eqn:D-alpha-approx-beta-approx}) imply (\ref{eqn:D-alpha-f-g-beta}).

To show (\ref{eqn:D-alpha-approx-beta-approx}), first we define an ``ideal'' pre-processor $f^*$ such that for all $x\in\mathbb{R}^m$, $f^*(x)=y_k$ whenever $\alpha^*(x)=k$, where $y_k$ is a point in the interior of the quantization cell $\{y:\alpha(y)=\pi(k)\}$.  (The cell has an interior because $(\alpha,\beta,\gamma)$ is assumed to be regular.)
We also define an ``ideal'' post-processor $g^*$ such that for all $k\in\mathcal{K}$: $g^*(\beta(\pi(k)))=\beta^*(k)$.  The definition of $g^*(\hat y)$ for
values of $\hat y$ not in the discrete set $\{\beta(\pi(k)): k \in \mathcal{K}\}$ are arbitrary, as $\beta$ produces values only in this set.
With these definitions, it can be seen that $\alpha(f^*(x))=\pi(\alpha^*(x))$ and
$g^*(\beta(\pi(\alpha^*(x))))=\beta^*(\alpha^*(x))$ and hence
\begin{equation}
    g^*(\beta(\alpha(f^*(x)))) = \beta^*(\alpha^*(x)) ,
\end{equation}
\ie, $\pi^{-1}\circ\alpha\circ f^*$ emulates $\alpha^*$ and $g^*\circ\beta\circ\pi$ emulates $\beta^*$.
Thus $D(\alpha\circ f^*,g^*\circ\beta)=D(\alpha^*,\beta^*)$.

We now argue that there exists a neural pre-processor $f$ sufficiently close to $f^*$. Indeed, by a Universal Approximation Theorem for neural networks \cite{enwiki:1190334119,NIPS2017_32cbf687,park2021minimum}, there is a neural network $f$ arbitrarily close to $f^*$ in $L_2$, i.e, for all $\delta>0$ there exists $f$ such that $E[||f(X)-f^*(X)||^2]<\delta$.  A fortiori, as convergence in $L_2$ implies convergence in probability, for all $\delta>0$, there exists $f$ and a set $\Omega$ with $P(\Omega)>1-\delta$ such that for all $x\in\Omega$, $||f(x)-f^*(x)|| < \delta$.  %
Thus whenever $x\in\Omega$ and $\alpha^*(x)=k$, by the definition of $f^*(x)$, we have $||f(x)-y_k||<\delta$. Since we have chosen $y_k$ to be in the interior of the cell $\{y:\alpha(y)=\pi(k)\}$, setting $\delta$ sufficiently small guarantees that $f(x)$ lies inside the cell $\{y:\alpha(y)=\pi(k)\}$ whenever $x\in\Omega$ and $\alpha^*(x)=k$.  That is, $\alpha(f(x))=\alpha(f^*(x))$ for all $x\in\Omega$.  In the unlikely event that $x\not\in\Omega$, there would be an encoding error.  But as the source is bounded, so is the distortion, by say $D_{\max}$.  Thus the expected distortion conditioned on an encoding error is at most $D_{\max}$.  Since $(1-P(\Omega))D_{\max}$ can be made less than $\epsilon/4$ by taking $\delta$ arbitrarily small, we have
\begin{eqnarray}
    \lefteqn{D(\alpha\circ f,g\circ\beta)} \\
    & = & E[d(X,g(\beta(\alpha(f(X)))))] \\
    & \leq & P(\Omega)E[d(X,g(\beta(\alpha(f(X)))))|\Omega] \\
    & & +(1-P(\Omega))D_{\max} \\
    & \leq & E[d(X,g(\beta(\alpha(f^*(X))))] + \epsilon/4 .
    \label{eqn:D-alpha-f-g-beta-part1}
\end{eqnarray}

We can use a similar argument to show the existence of a post-processor $g$ sufficiently close to $g^*$.  However, if the number of possible reproductions is finite (which is actually implied by our assumption that the codelengths are bounded), then a less sophisticated argument is needed, since then $g$ and $g^*$ need to be close only on a finite set of points.  In such case, it is clear that for
$\delta>0$, there exists $g$ such that for all $k\in\mathcal{K}$, $||g(\beta(\pi(k)))-g^*(\beta(\pi(k)))||<\delta$.  Hence by the continuity of the function $h(\hat x)=E[d(X,\hat x)]$ in $\hat x$,
for sufficiently small $\delta>0$ we have
\begin{eqnarray}
    \lefteqn{E[d(X,g(\beta(\alpha(f^*(X))))]} \\
    & = & E[d(X,g(\beta(\pi(\alpha^*(X))))] \\
    & \leq & E[d(X,g^*(\beta(\pi(\alpha^*(X))))] + \epsilon/4 \\
    & = & E[d(X,\beta^*(\alpha^*(X)))] + \epsilon/4 \\
    & = & D(\alpha^*,\beta^*) + \epsilon/4
    \label{eqn:D-alpha-f-g-beta-part2}
\end{eqnarray}
Together, (\ref{eqn:D-alpha-f-g-beta-part1}) and (\ref{eqn:D-alpha-f-g-beta-part2}) result in (\ref{eqn:D-alpha-approx-beta-approx}), and thus (\ref{eqn:D-alpha-f-g-beta}) is proved.

Next we prove (\ref{eqn:R-alpha-f-gamma}), by showing that if the approximation is good enough, then the expected rate increases by at most $D(p||q)$, namely
\begin{equation}
    R(\alpha\circ f,\gamma) \leq R(\alpha^*,\gamma^*) + D(p||q) + \epsilon .
    \label{eqn:R-alpha-approx-gamma}
\end{equation}
Together, (\ref{eqn:R-alpha-star-gamma-star}) and (\ref{eqn:R-alpha-approx-gamma}) imply (\ref{eqn:R-alpha-f-gamma}).

To show (\ref{eqn:R-alpha-approx-gamma}), we take a similar strategy to showing (\ref{eqn:D-alpha-approx-beta-approx}).  First, analogous to (\ref{eqn:D-alpha-f-g-beta-part1}), we have
\begin{eqnarray}
\lefteqn{R(\alpha\circ f,\gamma)} \\
& = & E[|\gamma(\alpha(f(X)))|] \\
& \leq & P(\Omega)E[|\gamma(\alpha(f(X)))||\Omega] \\
& & + (1-P(\Omega))R_{\max} \\
& \leq & E[|\gamma(\alpha(f^*(X)))|]+\epsilon .
\label{eqn:R-alpha-f-gamma-part1}
\end{eqnarray}
Then, analogous to (\ref{eqn:D-alpha-f-g-beta-part2}), we have
\begin{eqnarray}
\lefteqn{E[|\gamma(\alpha(f^*(X)))|]} \\
& = & E[|\gamma(\pi(\alpha^*(X)))|] \\
& = & \sum_k P(\{\alpha^*(X)=k\}) |\gamma(\pi(k))| \\
& = & -\sum_k p(k) \log_2 q(k) \\
& = & H(p) + D(p||q) \\
& \leq & R(\alpha^*,\gamma^*) + D(p||q) ,
\label{eqn:R-alpha-f-gamma-part2}
\end{eqnarray}
where $p(k)=P(\{\alpha^*(X)=k\})$ and $q(k)=2^{-|\gamma(\pi(k))|}$.
We have also used expressions for the entropy
\begin{equation}
    H(p)=-\sum_kp(k)\log_2p(k)
\end{equation}
and the Kullback-Leibler divergence
\begin{equation}
    D(p||q)=\sum_kp(k)\log_2\frac{p(k)}{q(k)} .
\end{equation}
Together, (\ref{eqn:R-alpha-f-gamma-part1}) and (\ref{eqn:R-alpha-f-gamma-part2}) result in (\ref{eqn:R-alpha-approx-gamma}), and thus (\ref{eqn:R-alpha-f-gamma}) is proved.

\end{proof}

Note that the permutation was not needed anywhere in the proof.  However we get it for free since the $y_k$s are arbitrary.  Moreover, we are now able to optimize over the permutation, to better approximate $\gamma^*$ with $\gamma\circ\pi$.  We now show:

\begin{proposition}
\label{thm:sorting}
For any given codec $(\alpha^*,\beta^*,\gamma\circ\pi)$, the minimum rate $R(\alpha^*,\gamma\circ\pi)$ is achieved when the codelengths $|\gamma(\pi(k))|$ have the same order as $-\log_2 P(\{\alpha^*(X)=k\})$.
\end{proposition}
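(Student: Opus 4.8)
The plan is to recognize that, for fixed $\alpha^*$, the quantity to be minimized is $R(\alpha^*,\gamma\circ\pi)=E[|\gamma(\pi(\alpha^*(X)))|]=\sum_k p(k)\,\ell_{\pi(k)}$, where $p(k)=P(\{\alpha^*(X)=k\})$ and $\ell_j=|\gamma(j)|$. This is a linear assignment over permutations, and I would solve it with the classical rearrangement inequality. Since the codelengths are assumed bounded, $\mathcal{K}$ is finite, so a minimizing permutation exists and there are no convergence issues.

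First I would run the standard exchange argument. Suppose $\pi$ minimizes $\sum_k p(k)\ell_{\pi(k)}$ and, for contradiction, that there is an ``inversion,'' i.e.\ indices $k_1,k_2$ with $p(k_1)>p(k_2)$ but $\ell_{\pi(k_1)}>\ell_{\pi(k_2)}$. Let $\pi'$ agree with $\pi$ except that it swaps the values at $k_1$ and $k_2$. Then
\[
\sum_k p(k)\ell_{\pi'(k)}-\sum_k p(k)\ell_{\pi(k)}=(p(k_1)-p(k_2))(\ell_{\pi(k_2)}-\ell_{\pi(k_1)})<0,
\]
contradicting optimality. Hence any minimizer has no inversion: whenever $p(k_1)>p(k_2)$ we have $\ell_{\pi(k_1)}\le\ell_{\pi(k_2)}$, i.e.\ $|\gamma(\pi(k))|$ is non-increasing in $p(k)$, equivalently non-decreasing in $-\log_2 P(\{\alpha^*(X)=k\})$, which is precisely the claimed sorted order. (When $p(k_1)=p(k_2)$ the swap leaves the rate unchanged, so ties may be broken arbitrarily; this is why the statement asserts the minimum is \emph{achieved} at the sorted matching rather than that it is the unique minimizer.) If one prefers not to invoke the rearrangement inequality as a black box, the inversion-removal step can be iterated: each strict swap strictly decreases the (finite, integer-valued) number of inversions, so after finitely many steps one reaches an inversion-free, hence optimal, permutation.

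Finally I would record the interpretation in terms of Proposition~\ref{thm:proposition1}: with $q(k)=2^{-|\gamma(\pi(k))|}$ one has $\sum_k p(k)|\gamma(\pi(k))|=-\sum_k p(k)\log_2 q(k)=H(p)+D(p||q)$, and since $H(p)$ does not depend on $\pi$, minimizing the rate is exactly minimizing the penalty $D(p||q)$; the sorted matching therefore yields the tightest version of the bound $R+D(p||q)+\epsilon$. I do not anticipate any real obstacle here; the only points requiring a word of care are the finiteness of $\mathcal{K}$ (which follows from the bounded-codelength hypothesis), the non-uniqueness of the optimal ordering under ties, and termination of the exchange procedure, all of which are handled above.
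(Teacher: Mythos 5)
Your proof is correct and follows essentially the same route as the paper: the paper also writes the rate as $\sum_k P(\{\alpha^*(X)=k\})\,|\gamma(\pi(k))|$ and uses the swap (exchange) argument on an inversion to show it can be strictly reduced, which is exactly your rearrangement-inequality step. Your additional remarks on finiteness of $\mathcal{K}$, termination of iterated swaps, tie-breaking, and the link to minimizing $D(p||q)$ are consistent elaborations of the same idea.
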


\begin{proof}
An expression for the rate is
\begin{equation}
    R(\alpha^*,\gamma\circ\pi)
    = \sum_k P(\{\alpha^*(X)=k\}) |\gamma(\pi(k))| .
    \label{eqn:rate_expression}
\end{equation}
Thus if there exist $k_1$ and $k_2$ for which $P(\{\alpha^*(X)=k_1\})>P(\{\alpha^*(X)=k_2\})$ but $|\gamma(\pi(k_1))|>|\gamma(\pi(k_2))|$, then the rate (\ref{eqn:rate_expression}) can be strictly reduced by swapping $\gamma(\pi(k_1))$ and $\gamma(\pi(k_2))$, so that $|\gamma(\pi(k_1))|<|\gamma(\pi(k_2))|$.
\end{proof}

Note that since $\gamma^*$ minimizes the rate, the sequences $|\gamma^*(k)|$, $-\log_2 P(\{\alpha^*(X)=k\})$, and $|\gamma(\pi(k))|$ (the latter with a rate-minimizing permutation) all have the same order, up to ties.

Finally, 
we present an algorithm for learning an optimal Codelength Constrained Vector Quantizer (CCVQ) from data.  A CCVQ $(\alpha^*,\beta^*,\gamma^{(0)}\circ\pi)$ comprises an encoder $\alpha:\mathbb{R}^n\rightarrow\mathcal{K}$, a decoder $\beta:\mathcal{K}\rightarrow\mathbb{R}^n$, and a lossless codebook $\gamma:\mathcal{K}\rightarrow\{0,1\}^*$ minimizing the Lagrangian functional
\begin{equation}
    J_\lambda(\alpha,\beta,\gamma) = D((\alpha,\beta) + \lambda R(\alpha,\gamma)
\end{equation}
subject to $\gamma=\gamma^{(0)}\circ\pi$ being a reordering of a given invertible lossless codebook $\gamma^{(0)}$.  Our CCVQ design algorithm is similar to the Entropy Constrained Vector Quantization (ECVQ) design algorithm of \cite{ChouLG:89}, except that instead of assigning a codelength $-\log_2 P(\{\alpha(X)=k\})$ to index $k$, it must re-use one of the existing codelengths $|\gamma^{(0)}(k)|$.

First some notation.  For any set of values $v(k)$ indexed by $k\in\mathcal{K}$, let $\mathbf{k}=\mbox{argsort}(\{v(k):k\in\mathcal{K}\})$
denote a list of indices such that the $i$th element of the list, $\mathbf{k}_i$, is the index $k\in\mathcal{K}$ of the $i$th element of the set $\{v(k):k\in\mathcal{K}\}$ when sorted smallest to largest, with ties broken arbitrarily.
The algorithm is shown in Alg.~\ref{alg:algorithm}.

\begin{algorithm}[ht]
\caption{Codelength Constrained Vector Quantization}
\begin{algorithmic}[1]
\Require distribution $P$, distortion measure $d$, Lagrange multiplier $\lambda$, convergence threshold $\epsilon$, index set $\mathcal{K}$, decoder $\beta^{(0)}:\mathcal{K}\rightarrow\mathbb{R}^n$, lossless codebook $\gamma^{(0)}:\mathcal{K}\rightarrow\{0,1\}^*$
\State $t=0$, $J^{(0)}=\infty$, $\mathbf{k}^{(0)}=\mbox{argsort}(\{|\gamma^{(0)}(k)|:i\in\mathcal{K}\})$
\State $\forall x\!:\!\alpha^{(t+1)}(x)=\arg\!\min_{k\in\mathcal{K}}d(x,\beta^{(t)}(k))+\lambda|\gamma^{(t)}(k)|$
\State $\mathbf{k}^{(t+1)}=\mbox{argsort}(\{-\log P(\{\alpha^{(t+1)}(X)=k\}):k\in\mathcal{K}\})$
\State $\forall i:\gamma^{(t+1)}(\mathbf{k}_i^{(t+1)})=\gamma^{(0)}(\mathbf{k}_i^{(0)})$
\State $\forall k\!:\!\beta^{(t+1)}(k)=\arg\!\min_{\hat x}E[d(X,\hat x)|\alpha^{(t+1)}(X)\!=\!k]$
\State $D^{(t+1)}=E[d(X,\beta^{(t+1)}(\alpha^{(t+1)}(X)))]$
\State $R^{(t+1)}=E[|\gamma^{(t+1)}(\alpha^{(t+1)}(X))|]$
\State $J^{(t+1)}=D^{(t+1)}+\lambda R^{(t+1)}$
\If {$(J^{(t)}-J^{(t+1)})/J^{(t+1)}>\epsilon$} $t \gets t+1$ \& go to 2
\EndIf
\Ensure encoder $\alpha^{(t+1)}$, decoder $\beta^{(t+1)}$ and lossless codebook $\gamma^{(t+1)}$ minimizing Lagrangian functional $J_\lambda(\alpha,\beta,\gamma)$ subject to $\gamma$ being a reordering of $\gamma^{(0)}$
\end{algorithmic}
\label{alg:algorithm}
\end{algorithm}

\subsection{Quantizer and Rate Proxies (Suppl. to Sec.~\ref{sec:image_sandwich})}
\label{sec:supp-q-and-r-proxies}

Various differentiable {\em quantizer proxies} are possible (\autoref{fig:dist_proxy}).  Luo \etal\ \cite{luo2020ratedistortionaccuracy} use a soft quantizer $Q(X_i)$ whose transfer characteristic is a third-order polynomial spline.
Most end-to-end image compression works (\eg, \cite{ToOMHwViMi16}--\cite{MiBaTo18}) use either
additive uniform noise $Q(X_i)=X_i+U_i\Delta$, where $U_i$ is i.i.d. $\sim\text{unif}(-1/2,1/2)$,
or a ``straight-through'' quantizer $Q(X_i)=X_i+U_i\Delta$,
where $U_i\Delta =\text{stop\_gradient}(\text{round}(X_i/\Delta)-X_i/\Delta)\Delta$
is the true quantization noise and $\text{stop\_gradient}(\cdot)$
is the identity mapping but stops the gradient of its output from being back-propagated to its argument
\cite{stop_gradient}.
In all cases, the derivative of $\hat{X}_i=Q(X_i)$ with respect to $X_i$ is nonzero almost everywhere.  
This allows
non-trivial gradients of the end-to-end distortion $d(S,\hat{S})$ with respect to the parameters of the 
networks
using the chain rule and back-propagation.
These formulations also allow the 
stepsize $\Delta$ to receive gradients, which is necessary to properly minimize the Lagrangian.
We use straight-through quantization in our experiments.

\begin{figure}[b]
\vspace*{-2mm}
\footnotesize
\begin{minipage}[b]{0.22\linewidth}
  \centering
  \centerline{\includegraphics[width=1.5cm, trim=0 5.5in 850 0cm, clip]{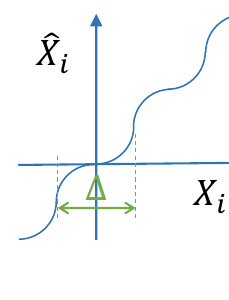}}
  \centerline{(a) Soft quantizer}\medskip
\end{minipage}
\begin{minipage}[b]{0.32\linewidth}
  \centering
  \centerline{\includegraphics[width=2.5cm, trim=0 5.5in 765 0cm, clip]{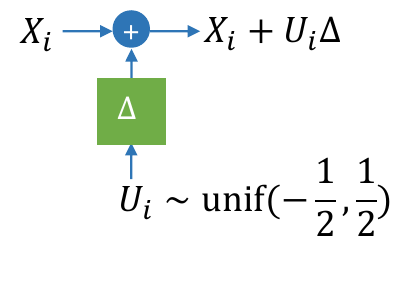}}
  \centerline{(b) Additive noise}\medskip
\end{minipage}
\begin{minipage}[b]{0.42\linewidth}
  \centering
  \centerline{\includegraphics[width=3.8cm, trim=0 5.125in 650 0cm, clip]{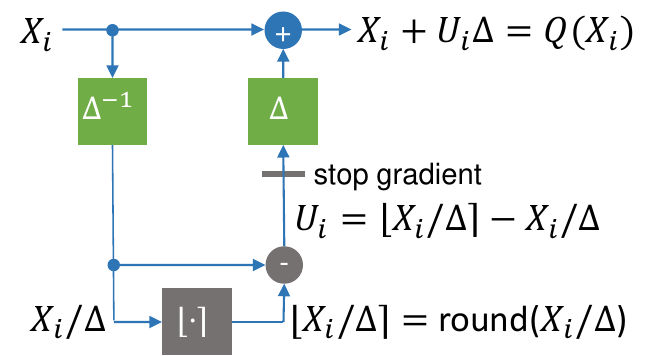}}
  \centerline{(c) Straight-through}\medskip
\end{minipage}
\vspace{-10pt}
\caption{Possible quantizer proxies.}
\label{fig:dist_proxy}
\end{figure}

Various differentiable {\em rate proxies} are also possible.  A convenient family of rate proxies $R(X)$ estimates the bitrate for a block of transform coefficients $X=[X_i]$ using affine functions of $\Norm{X}_2^2$, $\Norm{X}_1$, or $\Norm{X}_0$.
We focus on the latter in our experiments, since it is shown in \cite{He2001AUR} that an affine function of the number of nonzero quantized transform coefficients, $R(X)=a\sum_i \Indicator{ |x_i|\geq\Delta/2 }+b$, is an accurate rate proxy for transform codes.
In our work, we approximate the indicator function $\Indicator{|x_i|\geq\Delta/2}$ 
by the smooth differentiable function $\log(1+\Abs{x_i}/\Delta)$.
(An alternative would be to use $\tanh(\Abs{x_i}/\Delta)$.)
In sum, our rate proxy for a bottleneck image $B=[X^{(k)}]$ comprising multiple blocks $X^{(k)}$ is
\begin{equation}
    \!\!\!R(B) = \sum_k R(X^{(k)}) = a\sum_{k,i}\Log{ 1+\Abs{x_i^{(k)}}/\Delta } + b.
    \vspace{-.2cm}
\end{equation}
We set $b\!=\!0$ and determine $a$ for each bottleneck image~$B$
so that the rate proxy model matches the actual bitrate of the standard JPEG codec on that image, \ie:
\begin{equation}
    a = \frac{R_\text{JPEG}(B,\Delta)}{\sum_{k,i}\Log{1+\Abs{x_i^{(k)}}/\Delta}}.
\end{equation}
\vspace{-.2cm}

\noindent This ensures that the differentiable function $R(B)$ is exactly equal to $R_\text{JPEG}(B,\Delta)$ and that proper weighting is given to its derivatives on a per-image basis. Any image codec besides JPEG can also be used.
Similarly to the gradient of the distortion, the gradient of $R(B)$ with respect to the parameters of the pre-processor, and with respect to the stepsize $\Delta$, can be computed using back-propagation. An alternative rate proxy is given in \cite{SaidSP22}.

\subsection{High Resolution (HR) RGB Images with Lower Resolution (LR) Codecs (Suppl. to Sec.~\ref{sec:hr_lr_images})}
\label{sec:supp-hr_lr_images}
Bottleneck images corresponding to the results in \autoref{fig:SR_images}.
While the final reconstructions clearly depict the high-resolution detail
note  that the bottlenecks are patterned, aliased,  and  noisy. 
These are the manifestations of the neural codes that the processors use to communicate the high-resolution detail.
\begin{figure}[h]
  \includegraphics[width=0.32\linewidth, trim=10mm 0mm 10mm 20mm, clip]{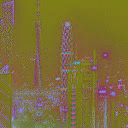}
  \includegraphics[width=0.32\linewidth, trim=10mm 0mm 10mm 20mm, clip]{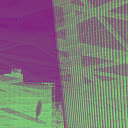}
  \includegraphics[width=0.32\linewidth, trim=10mm 0mm 10mm 20mm,  clip]{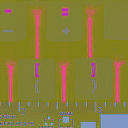}

  \caption{$128\!\times\!128$ reconstructed bottleneck images for the super-resolution sandwich results in \autoref{fig:SR_images} [enlarged for clarity]. Observe that while the bottlenecks appear aliased, noisy etc., the sandwich post-processor has correctly demodulated this ``noise'' in the final pictures.
  }
  \label{fig:SR_bottleneck_images1}
  \vspace{-5mm}
\end{figure}

\subsection{HR and HDR Adaptations (Suppl. to Sec.~\ref{sec:image_sandwich})}
\label{sec:supp-hr-and-hdr-adaptations}

In the HR problem, the RGB $H\times W\times 3$ source images have source bit depth $d=8$.  
Thus they have the standard dynamic range, $[0,255]$.  However, the 
bottleneck images have lower spatial resolution, $H/2 \times W/2 \times 3$. 
In our work, the resampler in the pre-processor comprises bicubic filtering and 2x downsampling; the resampler in the post-processor comprises Lanczos3 interpolation of the half-resolution images back to full-resolution. Similar down- and up-sampling is done in \cite{EusebioAP20}.

In the HDR problem, the source images 
have dynamic range $\left[0,2^d-1\right]$, where $d$ is the source bit depth. 
The bottleneck images have dimensions that match the source images: $H\times W\times 3$
but
are restricted to the standard dynamic range $[0, 255]$.  Since the codec proxy does not pass any information outside of this range, the pre-processor produces images in this range.

In both the HR and HDR problems, the sandwiched codecs operate in 4:4:4 mode without a color transform.  %
Regardless,
the baseline (non-sandwiched) codecs that we compare to use the RGB $\leftrightarrow$ YUV transform when it is beneficial for them in an R-D sense: In the HR scenario they use the color transform; in HDR they encode RGB directly.

\subsection{Further Normal Map Image Results (Suppl.\ to Sec.~\ref{sec:normal_maps})}

\autoref{fig:results_normals2} shows results of compressing normal maps with sandwiched JPEG.  Compare \autoref{fig:results_normals1}, which shows corresponding results for sandwiched HEIC.  The gains due to sandwiching are preserved in either case.

\begin{figure}[h]
\centering
  \includegraphics[width=0.8\linewidth, trim=11mm 4mm 15mm 8mm, clip]{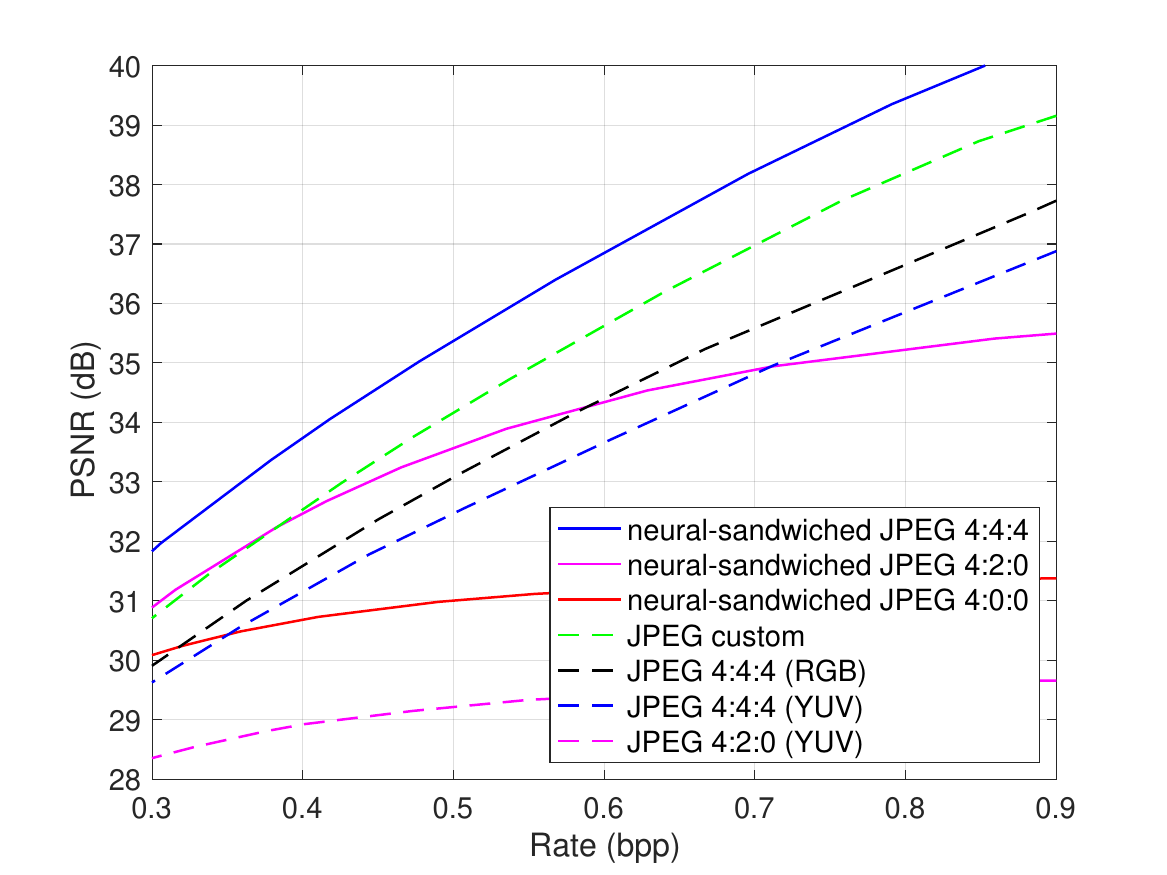}
  \caption{R-D performances of compressing normal map images with JPEG and neural-sandwiched JPEG, in various formats.}
  \label{fig:results_normals2}
\end{figure}

\subsection{Further LPIPS Video Results (Suppl.\ to Sec.~\ref{sec:video-lpips})}
\label{sec:supp-lpips}

\autoref{fig:video_vmaf_results_rgb} shows that the pre-processor-only network (optimized for LPIPS with 10-15\% improvements, see \autoref{fig:video_lpips_results_rgb}) obtains $\sim$10\% improvements in rate at the same VMAF quality. While we have not done so this agreement between the metrics suggests training the sandwich for LPIPS, evaluating the models during training for VMAF as well, and picking a model that has acceptable improvements for both.
\begin{figure}[h]
\centering
  \includegraphics[width=9.0cm]{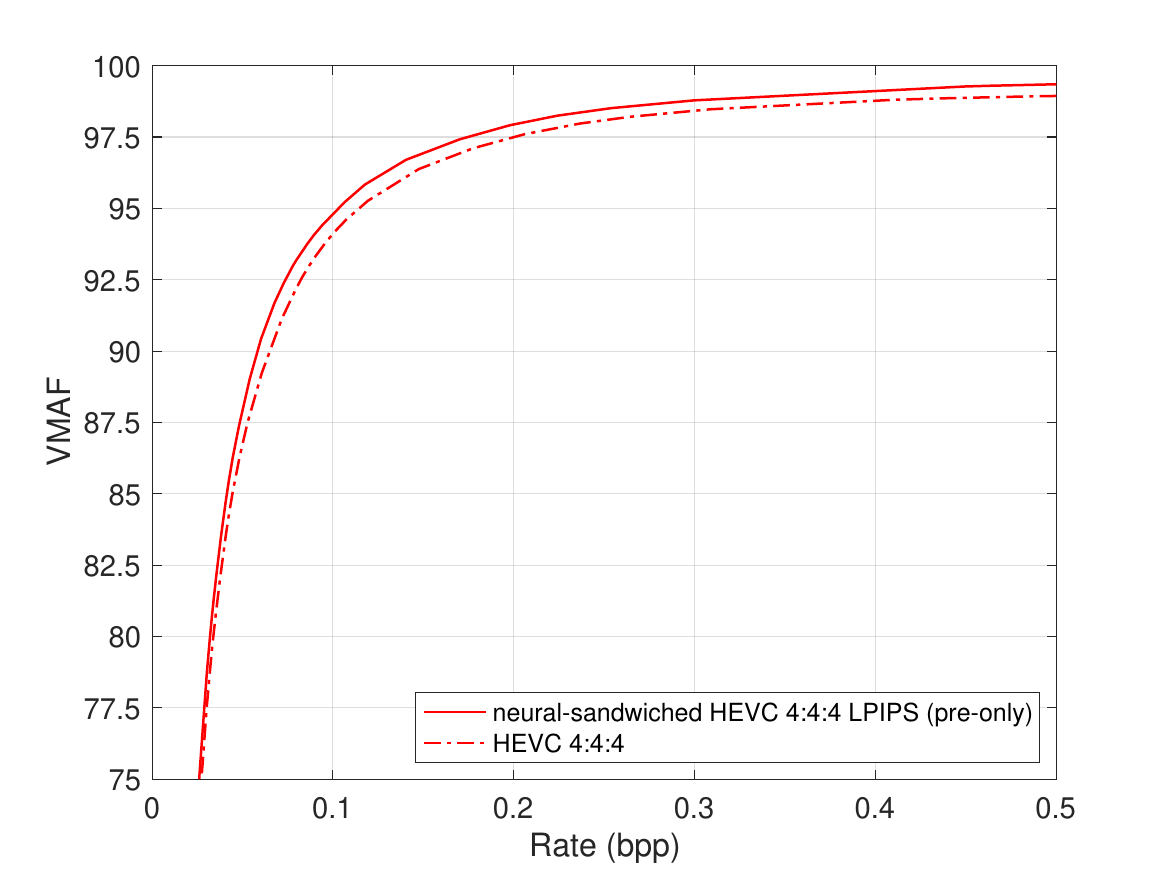}
\vspace{-3mm}
  \caption{VMAF scores of the sandwich trained using LPIPS (pre-processor only) and HEVC. Over a broad range sandwich shows $\sim10\%$ improvements in rate at same VMAF quality.}
  \label{fig:video_vmaf_results_rgb}
  \vspace{-3mm}
\end{figure}

\autoref{fig:hevc_444_lpips_low_rates} provides further examples of using a sandwich optimized for LPIPS to transport images through a standard codec.  Whereas \autoref{fig:hevc_444_lpips} showed 32\% rate savings at a quality visually close to the original, \autoref{fig:hevc_444_lpips_low_rates} shows performance at lower visual quality levels. 
As the quality is lowered it is difficult to pick between the sandwich and HEVC clips though differences to the original clip become more noticeable. Through many such visualizations we observed that the LPIPS-optimized sandwich matched or exceeded HEVC visual quality with often times very significant savings in rate.

To better understand where the sandwich with LPIPS is getting improvements we considered clips where rate gains (at the same LPIPS quality) were high, intermediate, and low. \autoref{fig:hevc_444_lpips_hi_mid_lo_large} shows sample clips from each group. As illustrated the gains are strongly tied to the high frequency and texture content of the scene. While scenes dense with such content (top row) have the largest improvements, scenes with even moderate amounts of high frequency structures (middle row) induce noticeable gains.

\begin{figure*}[p]
    \centering
\begin{minipage}{0.48\linewidth}
  \centering
  \includegraphics[width=\linewidth]{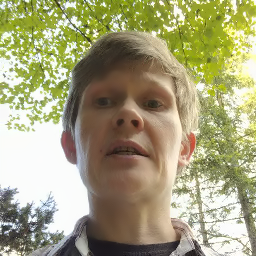}
  \centerline{\small {\bf (a)} Sandwich (34.5 dB, 0.34 bpp)}
\end{minipage}
\begin{minipage}{0.48\linewidth}
  \centering
  \includegraphics[width=\linewidth]{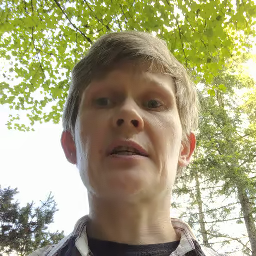}
  \centerline{\small {\bf (b)} HEVC (34.4 dB, 0.44 bpp)}
\end{minipage} 
\\
\begin{minipage}{0.48\linewidth}
  \centering
  \includegraphics[width=\linewidth]{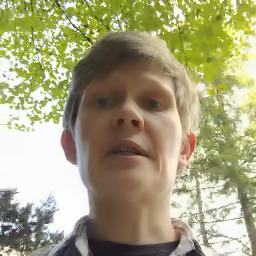}
  \centerline{\small {\bf (c)} Sandwich (30.4 dB, 0.11 bpp)}
\end{minipage}
\begin{minipage}{0.48\linewidth}
  \centering
  \includegraphics[width=\linewidth]{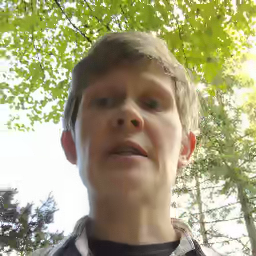}
  \centerline{\small {\bf (d)} HEVC (30.3 dB, 0.11 bpp)}
\end{minipage} 
    \caption{Comparison of sandwich and HEVC clips at lower rates for the scenario in \autoref{fig:hevc_444_lpips}. As the rate is lowered both the sandwich and HEVC clips have lower LPIPS-PSNR and visual quality. At the same LPIPS quality, it is still difficult to have a firm preference between them. In the top row rate is approximately half that of
    \autoref{fig:hevc_444_lpips}. Sandwich is still better by $\sim20\%$ in rate. The bottom row shows the low quality regime. Both clips lose texture detail but HEVC seems to have more artifacts. }
    \label{fig:hevc_444_lpips_low_rates}
\end{figure*}

\begin{figure*}[p]
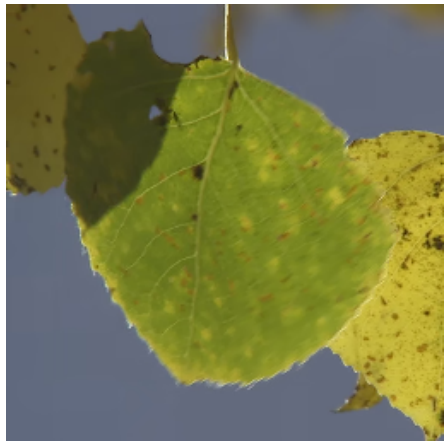
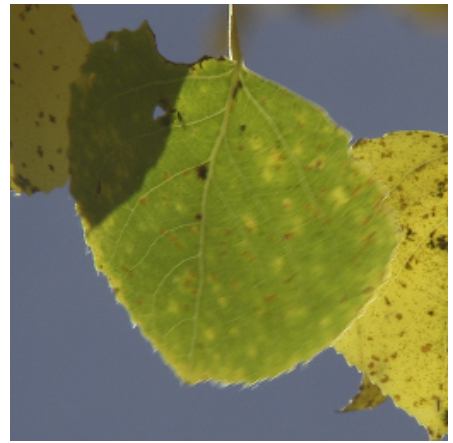
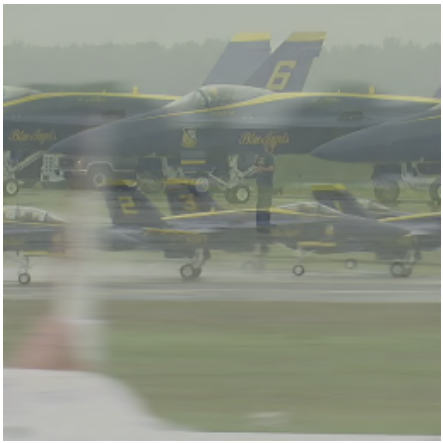
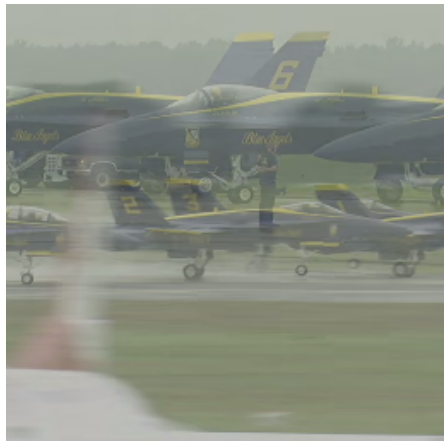
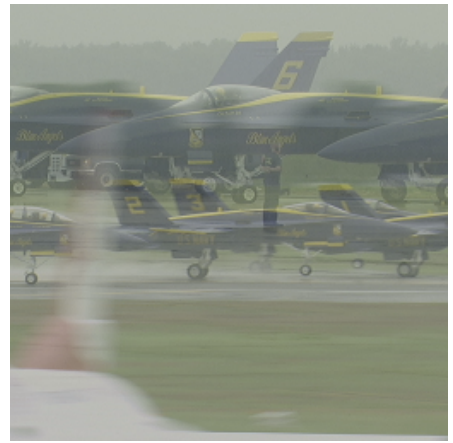

    \centering
\begin{minipage}{0.32\linewidth}
  \centering
  \includegraphics[width=\linewidth]{yuv_dataset/LPIPS/plot_0022_sw_lpips_D=37.3dB_R=0.30bpp_37.61pc_04.png}
  \centerline{\small {\bf (a)} Sandwich (37.3 dB, 0.30 bpp)}
\end{minipage} \hfill
\begin{minipage}{0.32\linewidth}
  \centering
  \includegraphics[width=\linewidth]{yuv_dataset/LPIPS/plot_0022_hevc_lpips_D=37.4dB_R=0.48bpp_04.png}
  \centerline{\small {\bf (b)} HEVC (37.4 dB, 0.48 bpp)}
\end{minipage} \hfill
\begin{minipage}{0.32\linewidth}
  \centering
  \includegraphics[width=\linewidth]{yuv_dataset/LPIPS/plot_0022_original_04.png}
  \centerline{\small {\bf (c)} Original)}
\end{minipage} \hfill
\\
\vspace{1mm}
\begin{minipage}{0.32\linewidth}
  \centering
  \includegraphics[width=\linewidth]{yuv_dataset/LPIPS/plot_0041_sw_lpips_D=37.8dB_R=0.17bpp_22.68pc_04.png}
  \centerline{\small {\bf (a)} Sandwich (37.8 dB, 0.17 bpp)}
\end{minipage} \hfill
\begin{minipage}{0.32\linewidth}
  \centering
  \includegraphics[width=\linewidth]{yuv_dataset/LPIPS/plot_0041_hevc_lpips_D=38.0dB_R=0.22bpp_04.png}
  \centerline{\small {\bf (b)} HEVC (38.0 dB, 0.22 bpp)}
\end{minipage} \hfill
\begin{minipage}{0.32\linewidth}
  \centering
  \includegraphics[width=\linewidth]{yuv_dataset/LPIPS/plot_0041_original_04.png}
  \centerline{\small {\bf (c)} Original)}
\end{minipage} \hfill
\\
\vspace{1mm}
\begin{minipage}{0.32\linewidth}
  \centering
  \includegraphics[width=\linewidth]{yuv_dataset/LPIPS/plot_0044_sw_lpips_D=38.5dB_R=0.14bpp_5.12pc_04.png}
  \centerline{\small {\bf (a)} Sandwich (38.5 dB, 0.14 bpp)}
\end{minipage} \hfill
\begin{minipage}{0.32\linewidth}
  \centering
  \includegraphics[width=\linewidth]{yuv_dataset/LPIPS/plot_0044_hevc_lpips_D=38.4dB_R=0.15bpp_04.png}
  \centerline{\small {\bf (b)} HEVC (38.4 dB, 0.15 bpp)}
\end{minipage} \hfill
\begin{minipage}{0.32\linewidth}
  \centering
  \includegraphics[width=\linewidth]{yuv_dataset/LPIPS/plot_0044_original_04.png}
  \centerline{\small {\bf (c)} Original)}
\end{minipage}
    \caption{Qualifying the sandwich rate gains at same LPIPS quality. Top row: On video clips dense with high frequencies and textures the sandwich obtains the most significant gains ($\sim37.6\%$ for this clip).
    Middle row: On clips with lesser but still significant high-frequency content gains are reduced but remain significant ($\sim22.6\%$.)
    Bottom row: On clips showing smooth and blurry regions gains are further reduced ($\sim5.12\%$.) }
    \label{fig:hevc_444_lpips_hi_mid_lo_large}
\end{figure*}

\end{document}